\newcommand{\Nlargest}{\mathcal{N}_{N}^{>}}
\newcommand{\WNj}{W_N^j}
\newcommand{\LN}{L_N}
\newcommand{\lj}{l_j}
\newcommand{\llargest}{\ell_{N}^{>}}
\newcommand{\nuN}{\nu_N}
\newcommand{\gN}{g_N}
\newcommand{\muN}{\mu_N}
\newcommand{\lambdaN}{\lambda_N}
\newcommand{\xiN}{\xi_N}
\newcommand{\phiGP}[2]{\phi^{\mathrm{GP}}_{#1,#2}}
\newcommand{\Mlargest}{\mathcal M_N^>}
\newcommand{\nMjgz}{n_{M_j}}
\newcommand{\llargesttilde}{\tilde{\ell}_N^{>}}
\newcommand{\E}{\mathds E}
\newcommand{\ud}{\mathrm{d}}
\DeclareMathOperator{\tr}{Tr}
\newcommand{\eins}{\mathds{1}}
\numberwithin{equation}{section}
\newtheorem{theorem}{Theorem}[section]
\newtheorem{lemma}[theorem]{Lemma}
\newtheorem{cor}[theorem]{Corollary}
\newtheorem{remark}[theorem]{Remark}
\theoremstyle{definition}
\numberwithin{equation}{section}
\begin{document}
	
	\thispagestyle{empty}
	
	\vspace*{1cm}
	
	\begin{center}
		
		{\Large \bf Bose--Einstein condensation in the Luttinger--Sy model with contact interaction}\\ 

		\vspace*{2cm}
		
		{\large Joachim Kerner \footnote{E-mail address: {\tt joachim.kerner@fernuni-hagen.de}}, Maximilian Pechmann \footnote{E-mail address: {\tt maximilian.pechmann@fernuni-hagen.de}}, and Wolfgang Spitzer \footnote{E-mail address: {\tt wolfgang.spitzer@fernuni-hagen.de}}}%
		
		\vspace*{5mm}
		
		Department of Mathematics and Computer Science\\
		FernUniversit\"{a}t in Hagen\\
		58084 Hagen\\
		Germany\\
		
	\end{center}
	
	\vfill
	
	\begin{abstract}
		We study bosons on the real line in a Poisson random potential (Luttinger--Sy model) with contact interaction in the thermodynamic limit at absolute zero temperature. We prove that generalized Bose--Einstein condensation (BEC) occurs almost surely if the intensity $\nu_N$ of the Poisson potential satisfies $[\ln (N)]^4/N^{1 - 2\eta} \ll \nuN \lesssim 1$ for arbitrary $0 < \eta \leq 1/3$. We also show that the contact interaction alters the type of condensation, going from a type-I BEC to a type-III BEC as the strength of this interaction is increased.  Furthermore, for sufficiently strong contact interactions and $0 < \eta < 1/6$ we prove that the mean particle density in the largest interval is almost surely bounded asymptotically by $\nu_NN^{3/5+\delta}$ for $\delta > 0$.
	\end{abstract}
	
	\newpage
	
%
	\section{Introduction}
	Bose--Einstein condensation (BEC) generally refers to a macroscopic occupation of a single-particle state. In the non-interacting Bose gas and as illustrated by the initiating work of Einstein~\cite{PaperEinstein1,PaperEinstein2}, this one-particle state is the ground state of the one-particle Hamiltonian while, in the more general case of an interacting gas, one considers the ground state of the reduced one-particle density matrix instead \cite{PO56}. More general definitions of BEC leading to the notion of generalized BEC require a macroscopic occupation of an arbitrarily narrow energy band of single-particle states \cite{casimir1968bose, van1982generalized, van1983condensation, van1986general, van1986general2,ZagBru} and condensation in this generalized sense is thought to be thermodynamically more stable \cite{girardeau1960relationship, jaeck2010nature}. More explicitly, BEC is classified into three different types: Type-I or type-II is present whenever finitely or infinitely many single-particle states in this narrow energy band are macroscopically occupied. A generalized BEC without any single-particle state in this band being macroscopically occupied is defined as type-III. 
	
	In this paper, we are concerned with (generalized) BEC in the Luttinger--Sy model \cite{LuttSyEnergy,luttinger1973bose} with a contact interaction of strength $g$ of the Lieb--Liniger type~\cite{LL63} in the thermodynamic limit at absolute zero temperature. In particular, we are interested in determining its type and in estimating the maximal particle density per interval. Note that the Luttinger--Sy model is a random model with the real line as one-particle configuration space in which singular point impurities (or singular external potentials for that matter) are Poisson distributed and consequently divide the real line into a countable number of disjoint intervals. In the original model, the intensity $\nu$ of the Poisson distribution was kept fixed but in this paper we will allow it to vary with the number of particles, $N$. Likewise, the strength $g$ of the contact interaction depends on $N$. In fact, we always assume that $g_N$ goes to $0$ as $N$ becomes large. 
	
	It is interesting to note that, despite its singularity, the Luttinger--Sy model is considered a good approximation to more general Poisson random potentials \cite{zagrebnov2007bose}. Regarding Bose--Einstein condensation, being in accordance with a conjecture for bosonic systems with quite general random potentials \cite{lenoble2004bose}, BEC in the non-interacting Luttinger--Sy model is of type-I and leads, in the thermodynamic limit, to an unbounded particle density of order $\ln (N)$ in the largest interval \cite{zagrebnov2007bose}. Due to the diverging particle density, interactions between bosons cannot be neglected and BEC has to be investigated for the interacting Luttinger--Sy model. So far, however, only a limited amount of rigorous results regarding BEC in (random and non-random) interacting bosonic systems exist, e.g. \cite{KennedyLiebShastry,LiebSeiringerProof,LVZ03,Lieetal05}.
	This is even more true for one-dimensional many-particle systems \cite{SeiYngZag12,PS86} or quasi one-dimensional systems such as quantum graphs \cite{BolteKernerBEC,BolteKernerInstability} since Bose-Einstein condensation in one dimension is much more unstable \cite{Hohenberg,LanWil79}. 
	
	In large parts we follow the paper \cite{SeiYngZag12} by Seiringer, Yngvason, and Zagrebnov, in which an equivalent model is considered, however, with the unit interval as the fixed one-particle configuration space. This implies that the thermodynamic limit is a high-density limit. There, BEC is exclusively discussed in the sense of a macroscopic occupation of a single-particle state. These authors were able to prove condensation and to make conclusions of its localization among the intervals, assuming a fast decay of the intensity, i.e., $1 / N  \ll\nuN \ll [\ln (N)]^3 / N$, cf. Appendix \ref{Appendix Seiringer gN nuN limit}. In the present paper, by considering BEC in the generalized sense, we are able to prove condensation in the almost sure sense and to obtain knowledge of the distribution of the condensate among the intervals for an intensity $[\ln(N)]^4/N^{1-\eta} \ll \nuN \lesssim 1$ for any $0 < \eta \leq 1/3$, see Theorem \ref{Theorem generalized BEC}. In addition, we show that the type of the BEC changes with the strength of the contact interaction, see Theorem \ref{Theorem transition of condensation}.

We also provide an upper bound for the particle density in the largest interval that depends on the strength of the interaction. In particular, we prove that for strong interactions $g_N \gg \nu_N N^{-1/6} [\ln(N)]^{-2}$, this density is almost surely non-divergent in case of $[\ln(N)]^4/N^{1-2\eta} \ll \nuN \lesssim N^{-3/5 - \delta}$ for any $0 < \eta < 1/6$, $0 < \delta < 2/5 - 2\eta$ and is asymptotically bounded by $\nu_N N^{3/5 + \delta}$ for any $\delta> 0$ in the case of $[\ln(N)]^4/N^{1-2\eta} \ll \nuN \lesssim 1$, $0 < \eta < 1/6$, see Theorem \ref{TheoremParticleDensity}.

	%

%
\section{The model}\label{TheModel}
We consider bosons on the real line that interact pairwise repulsively via a delta-function potential as in the Lieb--Liniger model~\cite{LL63} and with an external random potential as in the Luttinger--Sy model~\cite{luttinger1973bose}. For given intensity $\nu>0$ we think of a Poisson point process $X$ on $\mathds{R}$ as a random variable on some probability space $(\Omega,\mathcal F,\mathds{P})$. We assume that $X(\omega)=\{x_j(\omega): j\in \mathbb{Z}\}$ is a strictly increasing sequence of points $x_j = x_j(\omega)\in \mathds{R}$ and that $0\in(x_0(\omega), x_1(\omega)$. For any bounded Borel set $\Lambda\subset \mathds{R}$ with Lebesgue volume $|\Lambda|$, the probability that $\Lambda$ contains exactly $m$ points $x_j(\omega)$ is
\begin{equation}
\mathds{P}( \mathrm{card}(X(\omega)\cap\Lambda) = m) = \dfrac{(\nu |\Lambda|)^m}{m!}\, \mathrm{e}^{-\nu |\Lambda|}\,,\quad m\in \mathds{N}_0\ .
\end{equation}
And, if $\Lambda$ and $\Lambda'$ are two such subsets which are disjoint, then the events $\{X(\omega)\cap\Lambda\}$ and $\{X(\omega)\cap\Lambda'\}$ are (stochastically) independent.

In this paper, we will write $\mathds P = \mathds P^{\nu}$ and $\E = \E^{\nu}$ if $\nu > 0$ is constant. The external random potential we consider is then of the form $\sigma\sum_{j\in\mathbb{Z}} \delta(z-x_j(\omega))$. Actually, we restrict the analysis to the case of infinite strength, that is, we set informally $\sigma=\infty$. The pair interaction is also described by a delta-function but now of finite strength $g\geq0$; in fact, $g$ will tend to zero as the number of particles increases. In order to define the full many-boson Hamiltonian we only need (since $\sigma=\infty$ implies Dirichlet boundary conditions at the endpoints of the intervals) to define the many-boson Hamiltonian on a bounded interval and take their direct sum.  

When we perform the thermodynamic limit we define for any particle density $\rho>0$ and any particle number $N\in\mathds{N}$ the length $L_N := N/\rho$ and introduce the interval $\Lambda_N := (-L_N/2,L_N/2)$. Later, we will define $W_j = W_j(\omega) := (x_j(\omega),x_{j+1}(\omega))\cap \Lambda_N$ for any realization $X(\omega)$ and denote the number of non-empty intervals $W_j$ by $k_N$, which almost surely is finite and has the asymptotic behavior $\lim_{N \to \infty} k_N/L_N = \nu$ \cite{zagrebnov2007bose}, see also Theorem~\ref{Theorem k LN to nu}. But for now, we consider an arbitrary interval $\Lambda \subset \mathds R$, an arbitrary set $\{x_j : j \in \mathbb{Z}\} \subset \mathds R$ such that only finitely many are contained in $\Lambda$ and define $W_j := (x_j,x_{j+1})\cap \Lambda$ . Let $l_j := |W_j|$ be the (later random) length of the interval $W_j$. Note that $\sum_{j\in\mathbb{Z}} l_j= L_N$ and $\text L^2(\Lambda) = \bigoplus_{j\in\mathbb{Z}} \text L^2(W_j)$. The inner product is always denoted by $\langle\cdot,\cdot\rangle$ with norm $\|\psi\| := \langle\psi,\psi\rangle^{1/2} := ( \int |\psi(x) |^2\, \text d x)^{1/2}$. 

We impose Dirichlet boundary conditions for the Laplacian $-\partial^2_{{z}}$ at the end points of an interval $W$ with length $l:= |W|>0$. For any $n\in\mathds{N},g\ge0$ we define the $n$-boson Hamiltonian on the interval $W$,
\begin{equation}\label{HamiltonianLiebLiniger}
H(n,l,g) :=-\sum\limits_{i=1}^{n} \partial^2_{{z_i}} + g \sum\limits_{1\le i < j\le n}
\delta(z_i - z_j)\ , \quad g \geq 0\ ,
\end{equation}
acting as a quadratic form on the $n$-fold symmetric tensor product $\bigotimes_s^n\text L^2(W)$. For completeness, we set $H(0,l,g)=0$ for any $l\geq0$.

 Now, let $\Lambda,W_j$, and $l_j$ be as above. We call $\boldsymbol N= \{M_j:j\in\mathbb{Z}\}$ an admissible sequence (ad.~seq.) of the particle number $N$ if $M_j\in\mathds{N}_0$, if $M_j = 0$ for any $j \in \mathbb{Z}$ with $l_j = 0$, and if the total number of particles is $\sum_{j\in\mathbb{Z}} M_j = N$. Furthermore, we call $\{ M_j \}_{j \in \mathbb{Z}}$ a general admissible sequence (gen.~ad.~seq.) if it fulfills the first two requirements of but not the last. Then the full $N$-boson Hamiltonian on the interval $\Lambda$,
\begin{equation}\label{AdmissibleHamiltonian} H(\boldsymbol N,\Lambda,g) := \bigoplus_{j\in\mathbb{Z}} H(M_j,l_j,g) \,,
\end{equation}
acts on the $N$-fold symmetric tensor product $\bigotimes_s^N\text L^2(\Lambda)$. 

For any (general) admissible sequence $\{M_j\}_{j \in \mathbb{Z}}$ we denote by $\mathcal M_N^>:=\max \{M_j : j\in\mathbb{Z} \}$ the largest particle number $M_j$. Sometimes we will also need the (later random) lengths of the intervals $l_j$ arranged in descending order. We denote them by $\llargest =: l_N^{>,1} \ge l_N^{>,2} \ge \ldots$, $\llargest = \max\{l_j:j\in\mathbb{Z}\}$ being the length of the largest subinterval $W_j$ of $\Lambda$. Note here that we added the particle number $N$ as an index since these numbers will eventually depend on $N$.


One of our main concerns is the ground-state energy of the full $N$-boson Hamiltonian in the thermodynamic limit. So, let $E_0^{\mathrm{QM}}(n,l,g)$ and $E_0^{\mathrm{QM}}(\boldsymbol N,\Lambda,g)$ be the ground-state energies of the Hamiltonians $H(n,l,g)$ and $H(\boldsymbol N,\Lambda,g)$, respectively. That is,
\begin{align}
E_0^{\mathrm{QM}}(n,l,g) &:= \inf \Big\{\langle \psi, H(n,l,g)\psi\rangle: \psi\in\bigotimes_s^n\text H^1_0(W), \|\psi\| =1\Big\}\,,
\\
E_0^{\mathrm{QM}}(\boldsymbol N,\Lambda,g) &:= \inf \Big\{\langle \psi, H(\boldsymbol N,\Lambda,g)\psi\rangle: \psi\in \bigoplus_{j} \bigotimes_s^{M_j}\text H^1_{0}(W_j) ,\|\psi\| =1\Big\}\ .
\end{align} 
The latter energy will eventually be random due to the random location of the points $x_j$ that partition the interval $\Lambda_N$ into the intervals $W_j$. 

The relevant quantity is the lowest ground-state energy among all possible 
distributions of particles in the intervals $W_j$, i.e., 
\begin{equation}\begin{split}
E_{\mathrm{LS}}^{\mathrm{QM}}(N,\Lambda,g) &:= \inf 
\Big\{E_0^{\mathrm{QM}}(\boldsymbol N,\Lambda,g):\boldsymbol N \text{ ad.~seq. of 
} N\Big\}
\\
& = \inf \Big\{\sum_{j\in\mathbb{Z}} E_0^{\mathrm{QM}}(M_j,l_j,g): \{ M_j\}_{j \in \mathbb{Z}} \text{ ad.~seq.}, \sum_{j\in\mathbb{Z}} {M_j} = N\Big\} \\
& = \inf \Big\{\sum_{j\in\mathbb{Z}} E_0^{\mathrm{QM}}(M_j,l_j,g): \{ M_j\}_{j \in 
\mathbb{Z}} \text{ gen.~ad.~seq.}, \sum_{j\in\mathbb{Z}} {M_j} \ge N\Big\} 
\label{E0QM Mj ge N}\ .
\end{split}
\end{equation}
Note that, for fixed $\Lambda \subset \mathds R$ and $g \ge 0$, $E_{\mathrm{LS}}^{\mathrm{QM}}(N,\Lambda,g)$ is non-decreasing in $N$. 
\begin{remark} The energy $E_{\mathrm{LS}}^{\mathrm{QM}}(N,\Lambda,g)$ is (for $\sigma = \infty$) the ground-state energy of the self-adjoint operator
	\begin{align}
	H(N,\Lambda,g) := - \sum\limits_{i=1}^N \partial_{z_i}^2 + g_N \sum\limits_{1 
		\le i < k \le N} \delta(z_i - z_k) + \sigma \sum\limits_{j \in \mathds Z} 
	\delta(z - x_j) 
	\end{align}
	defined on $\bigotimes\limits_s^N L^2(\Lambda)$.
\end{remark}
The ground-state energy $E_0^{\mathrm{QM}}(\boldsymbol N,\Lambda,g)$ will be approximated by the ground-state energy of a mean-field Hamiltonian $h$: In order to introduce this operator we fix some interval $W$ of length $l = |W|>0$ and, as above, let $g\geq0$. For $n > 0$, the Gross--Pitaevskii (GP) functional $\mathcal E^{\mathrm{GP}}(n,l,g)$ on $W$ with domain $\{ \phi \in \text H_0^1(W) : \int_W|\phi(z)|^2 \, \mathrm{d} z = n\}$ then maps a function $\phi$ from the Sobolev space $\text H_0^1(W)$ to 
\begin{equation}
\mathcal E^{\mathrm{GP}}(n,l,g)[\phi] := \int\limits_{W} \left(|\phi^{\prime}(z)|^2 + \dfrac{g}{2} |\phi(z)|^4 \right)\ud z\,.
\end{equation}
%
%
As is well-known \cite{lieb2000bosons}, there is a unique, non-negative minimizer of $\mathcal E^{\mathrm{GP}}(n,l,g)$, which we denote by $\phi^{\mathrm{GP}}_{n,l,g} \in \text H_0^1(W)$. Let
\begin{equation} E^{\mathrm{GP}}(n,l,g) := \mathcal E^{\mathrm{GP}}(n,l,g)[\phi^{\mathrm{GP}}_{n,l,g}] = \inf\Big\{\mathcal E^{\mathrm{GP}}(n,l,g)[\phi]:\phi\in \text H_0^1(W),\|\phi\|^{2} = n\Big\}\,.
\end{equation} 
Setting $e^{\mathrm{GP}}(g) := E^{\mathrm{GP}}(1,1,g)$, we obtain by scaling \cite{lieb2004one}
\begin{equation}\label{EGP scaling}
E^{\text{GP}}(n,l,g) = n E^{\mathrm{GP}}(1,l,ng) = \dfrac{1}{l^2} E^{\mathrm{GP}}(n,1,lg)  = \dfrac{n}{l^2} e^{\mathrm{GP}}(nlg)\ .
\end{equation}
We also set $E^{\mathrm{GP}}(0,l,g) := 0$ and $\phi^{\mathrm{GP}}_{l,g} := \phi^{\mathrm{GP}}_{1,l,g}$ for any $l, g \ge 0$. 

Finally, the self-adjoint (one-particle) mean-field Hamiltonian $h=h(l,g)$ on the one-particle Hilbert space $\text L^2(W)$ shall be 
\begin{equation}\label{mean-field Hamiltonian 1}
h(l,g) := - \partial_{z}^2 + g | \phi^{\mathrm{GP}}_{l,g}|^2 - \dfrac{g}{2}  \int\limits_{W} | \phi^{\mathrm{GP}}_{l,g}(z)|^4\ \ud z\ ,
\end{equation}
and on $\bigotimes_s^n\text L^2(W)$ we introduce
\begin{equation}\label{mean-field Hamiltonian 2}
h^{(i)}(l,g) := \eins \otimes \cdots \otimes \eins \otimes h(l,g)\otimes\eins \otimes\cdots\otimes\eins \,,
\end{equation}
where $h(l,g)$ acts on the $i$th component in the $n$-fold tensor product. 
\begin{remark}\label{RemarkGroundStateMeanFieldHamiltonian} The minimizer $\phi^{\mathrm{GP}}_{l,g}$ is also the ground state of $h(l,g)$ with corresponding ground-state energy $E^{\mathrm{GP}}(1,l,g)$.
\end{remark}
%
%
Now, in a first result we compare the operator $H(n,l,g)$ with the second quantization of $h(l,g)$, i.e., $\sum_{i}h^{(i)}(l,g)$.
\begin{lemma} \label{lemma 1} Let $n\in \mathds{N}_0$, $l>0$, and $g > 0$ be given. Then there exist finite, positive constants $c,\tilde{c}$ (independent of $n$ and $l$) such that if $\tilde{c}(n^{1/3}lg)^{1/2} < 1$ we have, for some $\tau$ with $0< \tau < \frac{\tilde{c}}{2}(n^{1/3}lg)^{1/2}$,
	\begin{align} \label{equation H lower bound}
	H(n,l,g) & \ge (1 - \tau) \sum\limits_{i=1}^{n} h^{(i)}(l,ng) + \tau n l^{-2}e^{\mathrm{GP}}(nlg)  - cn^{5/3} gl^{-1} \left( e^{\mathrm{GP}}(nlg) \right)^{1/2}\,.
	\end{align} 
\end{lemma}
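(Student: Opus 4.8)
The plan is to reduce to the unit interval by scaling, to extract the mean‑field potential of $\sum_i h^{(i)}$ from the $\delta$‑interaction at the price of a small fraction $\tau$ of the kinetic energy (a one‑dimensional Dyson‑type correlation estimate), and to absorb the remaining mismatch by convexity of $\varrho\mapsto\int\varrho^2$. We may assume $n\ge1$, the case $n=0$ being trivial. The rescaling $z\mapsto z/l$ is unitary from $\text{L}^2(W)$ onto $\text{L}^2((0,1))$ and sends $H(n,l,g)$ to $l^{-2}H(n,1,lg)$, sends $h(l,ng)$ to $l^{-2}h(1,nlg)$ (using $\phi^{\mathrm{GP}}_{l,ng}=l^{-1/2}\phi^{\mathrm{GP}}_{1,nlg}(\cdot/l)$), and by \eqref{EGP scaling} makes $nl^{-2}e^{\mathrm{GP}}(nlg)$ and $cn^{5/3}gl^{-1}(e^{\mathrm{GP}}(nlg))^{1/2}$ homogeneous of degree $-2$ in $l$. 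Hence it suffices to prove the inequality for $l=1$. Put $\gamma:=lg$, $\phi:=\phi^{\mathrm{GP}}_{1,n\gamma}$, $a:=\int_0^1|\phi'|^2$, $b:=\int_0^1|\phi|^4$; then $e^{\mathrm{GP}}(n\gamma)=a+\tfrac{n\gamma}{2}b$, and since $\int_0^1|\phi|^4\ge\big(\int_0^1|\phi|^2\big)^2=1$ one has $\tfrac{n\gamma}{2}\le\tfrac{n\gamma}{2}b\le e^{\mathrm{GP}}(n\gamma)$.

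\emph{Reformulation.} For a normalized symmetric $\Psi\in\bigotimes_s^n\text{H}_0^1((0,1))$ with one‑particle density $\rho=\rho_\Psi$ ($\int_0^1\rho=n$), set $T_\Psi:=\langle\Psi,-\sum_i\partial_{z_i}^2\,\Psi\rangle$ and $I_\Psi:=\gamma\langle\Psi,\sum_{i<j}\delta(z_i-z_j)\,\Psi\rangle$. Since $\langle\Psi,H(n,1,\gamma)\Psi\rangle=T_\Psi+I_\Psi$ and $\langle\Psi,\sum_i h^{(i)}(1,n\gamma)\Psi\rangle=T_\Psi+P_\Psi$ with $P_\Psi:=n\gamma\int_0^1\rho|\phi|^2-\tfrac{n^2\gamma}{2}b$, the asserted operator inequality is equivalent to
\[
\tau T_\Psi+I_\Psi-(1-\tau)P_\Psi\ \ge\ \tau n\,e^{\mathrm{GP}}(n\gamma)-cn^{5/3}\gamma\,\big(e^{\mathrm{GP}}(n\gamma)\big)^{1/2}\qquad\text{for all such }\Psi ,
\]
and I would take $\tau$ to be precisely the kinetic fraction reserved for the correlation estimate.

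\emph{The two inputs and the combination.} The hard input is a one‑dimensional Dyson‑type correlation inequality for the $\delta$‑interaction: for every symmetric normalized $\Psi$,
\[
\tau\,T_\Psi+I_\Psi\ \ge\ \frac{\gamma}{2}\int_0^1\rho(z)^2\,\ud z\ -\ \mathcal D_0(\tau,n,\gamma) ,
\]
where $\mathcal D_0$ is obtained by replacing $\gamma\delta$ by a mollifier of range $R$ and optimizing over $R$, and is $\lesssim n^{5/3}\gamma\,(e^{\mathrm{GP}}(n\gamma))^{1/2}$ as soon as $\tau$ is chosen of order $n^{2/3}\gamma\,(e^{\mathrm{GP}}(n\gamma))^{-1/2}$. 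The easy input is convexity of $\varrho\mapsto\int_0^1\varrho^2$, i.e.\ $\int_0^1(\rho-n|\phi|^2)^2\ge0$, which reads $\tfrac{\gamma}{2}\int_0^1\rho^2\ge n\gamma\int_0^1\rho|\phi|^2-\tfrac{n^2\gamma}{2}b=P_\Psi$. Combining the two gives $\tau T_\Psi+I_\Psi\ge P_\Psi-\mathcal D_0$, hence
\[
\tau T_\Psi+I_\Psi-(1-\tau)P_\Psi\ \ge\ \tau P_\Psi-\mathcal D_0\ =\ \tau n\,e^{\mathrm{GP}}(n\gamma)-\tau\big(n\,e^{\mathrm{GP}}(n\gamma)-P_\Psi\big)-\mathcal D_0 .
\]
Since $P_\Psi\ge-\tfrac{n^2\gamma}{2}b\ge-n\,e^{\mathrm{GP}}(n\gamma)$ one has $n\,e^{\mathrm{GP}}(n\gamma)-P_\Psi\le 2n\,e^{\mathrm{GP}}(n\gamma)$, so $-\tau\big(n\,e^{\mathrm{GP}}(n\gamma)-P_\Psi\big)\ge-2\tau n\,e^{\mathrm{GP}}(n\gamma)$; with the chosen $\tau$ the subtracted quantity $2\tau n\,e^{\mathrm{GP}}(n\gamma)+\mathcal D_0$ is $\lesssim n^{5/3}\gamma(e^{\mathrm{GP}}(n\gamma))^{1/2}$, which is the displayed reformulation. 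Finally, because $e^{\mathrm{GP}}(n\gamma)\ge n\gamma/2$ the chosen $\tau$ is of order at most $(n^{1/3}\gamma)^{1/2}$, so for a suitable absolute constant $\tilde c$ it lies in $(0,\tfrac{\tilde c}{2}(n^{1/3}lg)^{1/2})$ after undoing the rescaling, and the hypothesis $\tilde c(n^{1/3}lg)^{1/2}<1$ forces $\tau<\tfrac12$.

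\emph{Main obstacle.} Everything rests on the one‑dimensional Dyson/correlation inequality for the singular $\delta$‑interaction with explicit control of $\mathcal D_0(\tau,n,\gamma)$: pushing a $\delta$‑potential (rather than an integrable one) through Dyson's lemma, choosing the mollifier range $R$ to balance the renormalization error (of order $n^2\gamma^2R$) against the smearing error, and checking that $\mathcal D_0$ together with $2\tau n\,e^{\mathrm{GP}}(n\gamma)$ remains below $cn^{5/3}\gamma(e^{\mathrm{GP}}(n\gamma))^{1/2}$ for the optimal $\tau$. The convexity input and the scaling are routine.
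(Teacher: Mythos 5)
Your high-level strategy matches the paper's: you want to borrow a fraction $\tau$ of the kinetic energy to soften the $\delta$-interaction, apply a positive-type (completing-the-square) lower bound, and then optimize the smoothing scale and the kinetic fraction; the scaling reduction to $l=1$ and the algebra in the ``combination'' step (including the check that the resulting $\tau$ is $\lesssim (n^{1/3}lg)^{1/2}$ because $e^{\mathrm{GP}}(nlg)\ge nlg/2$) are all fine. Nevertheless the proposal has a genuine gap, and your chosen route is actually slightly harder than the paper's.

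First, the ``hard input'' is not proved. You state the Hartree-type inequality
\[
\tau T_\Psi + I_\Psi \ \ge\ \frac{\gamma}{2}\int_0^1\rho^2 - \mathcal D_0(\tau,n,\gamma)
\]
and then label its derivation (passing the $\delta$-potential through a mollifier, balancing the renormalization error against the smearing error) as ``the main obstacle'' without carrying it out. The paper at least pins down the explicit one-body inequality $-\epsilon\partial_z^2 + g\delta(z)\ge \frac{g}{1+bg/(2\epsilon)}\,\delta_b(z)$ with $\delta_b(z)=\frac{1}{2b}\mathrm{e}^{-|z|/b}$, then quotes from \cite{SeiYngZag12} the three explicit error terms $\frac{b(ng)^2}{2\epsilon}(E^{\mathrm{GP}})^{1/2}$, $cng(E^{\mathrm{GP}})^{3/4}b^{1/2}$, and $\frac{g}{4b}$ per particle, and chooses $\epsilon=2^{-1/2}b^{1/2}ng(E^{\mathrm{GP}})^{-1/4}$, $b=\tilde c^2 n^{-2/3}(E^{\mathrm{GP}})^{-1/2}$ to make all three of order $n^{2/3}g(E^{\mathrm{GP}})^{1/2}$. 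Your sketch does not reproduce this, and your heuristic ``renormalization error $\sim n^2\gamma^2R$'' does not match the paper's $\frac{b(ng)^2}{2\epsilon}(E^{\mathrm{GP}})^{1/2}$: the $\epsilon^{-1}(E^{\mathrm{GP}})^{1/2}$ factor, which is what forces the joint two-parameter optimization, is missing.

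Second, and more subtly, the order of operations you chose (``smear, then convexify'') creates an extra difficulty that the paper's order (``convexify, then smear'') avoids. You want $\frac{\gamma}{2}\int\rho(\delta_R*\rho)\ge\frac{\gamma}{2}\int\rho^2-\text{error}$, but that smearing error $\int\hat\rho(p)^2\big(1-\hat\delta_R(p)\big)\,\mathrm{d}p$ depends on the regularity of the one-particle density $\rho=\rho_\Psi$, which is $\Psi$-dependent and is not directly controlled by $\tau T_\Psi$ in a way that yields the claimed $n^{5/3}\gamma(e^{\mathrm{GP}})^{1/2}$ bound. The paper instead inserts $\mu = n|\phi^{\mathrm{GP}}|^2$ directly into the positive-type bound, so the only functions that get smeared are $|\phi^{\mathrm{GP}}|^2$ and its convolutions; their regularity is explicit and controlled by $E^{\mathrm{GP}}$, which is exactly why the errors close. (Your convexity step $\frac{\gamma}{2}\int\rho^2\ge n\gamma\int\rho|\phi|^2-\frac{n^2\gamma}{2}\int|\phi|^4$ is of course correct, but it should be applied at the level of the positive-type inequality, i.e.\ $\int\rho(\delta_R*\rho)\ge 2\int\rho(\delta_R*n|\phi|^2)-\int n|\phi|^2(\delta_R*n|\phi|^2)$, before any smearing, which is precisely the paper's route.) Until the Hartree-type inequality is actually proven with the stated error bound, the proposal remains an outline rather than a proof.
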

This estimate is contained in the proof of \cite[Theorem~2.1]{SeiYngZag12} but we take a slightly different route, which is the reason why we recall the main steps. 
\begin{proof} The statement is trivial for $n = 0$ and we therefore assume $n >0$ in the following. Let $\epsilon \in (0,1)$ and $b > 0$ be given. As demonstrated in \cite{SeiYngZag12}, one has the operator inequality 
	\begin{equation}
	-\epsilon \partial^2_z + g \delta(z) \ge \dfrac{g}{1+bg/(2\epsilon)} \delta_b(z)\,,
	\end{equation}
	where $\delta_b(z) := \frac1{2b} \exp(-|z|/b)$ is a function of positive type. 
	
	Now, setting $p_{i}:=-\mathrm{i}\partial_{z_i}$ we follow \cite[(7--15)]{SeiYngZag12} to obtain
	\begin{equation}\begin{split}
	H(n, l, g) &\ge \sum\limits_{i=1}^{n} \left[ \left(1 - \dfrac{\epsilon}{2} \right) p_{i}^2 \right] + g \left( 1 + \dfrac{b n g}{2 \epsilon} \right)^{-1} \sum\limits_{1\leq i < k\leq N} \delta_b(z_i - z_k) \\
	&\ge \sum\limits_{i=1}^{n} \left[ \left(1 - \dfrac{\epsilon}{2} \right) p_{i}^2  + n g | \phi^{\mathrm{GP}}_{l,ng} (z_i) |^2 - \dfrac{n g}{2} \int\limits_{W} | \phi^{\mathrm{GP}}_{l,ng} (z) |^4\ \ud z \right] \, + \\
	&\ + \, \sum\limits_{i=1}^{n} \left[ - \dfrac{b (n g)^2}{2 \epsilon} \left( E^{\mathrm{GP}}(1,l,ng) \right)^{1/2} - c n g \left( E^{\mathrm{GP}}(1,l,ng) \right)^{3/4} b^{1/2} - \dfrac{n g}{4 n b} \right]
	\end{split}
	\end{equation}
	for some constant $c> 0$.
	Since
	\begin{equation}\begin{split}
	&\sum\limits_{i=1}^{n} \left[ \left(1 - \dfrac{\epsilon}{2} \right) p_{i}^2  + n g | \phi^{\mathrm{GP}}_{l,ng} (z_i) |^2 - \dfrac{n g}{2} \int\limits_{W} | \phi^{\mathrm{GP}}_{l,ng}(z) |^4\ \ud z \right] \\
	&\quad \ge \sum\limits_{i=1}^{n}  \left[ \left(1 - \dfrac{\epsilon}{2} \right) h^{(i)}(l,ng) - \dfrac{\epsilon}{2} E^{\mathrm{GP}}(1,l,ng) \right] \ , 
	\end{split}	
	\end{equation}
	we obtain
	\begin{equation}
	\begin{split}
	H(n, l, g) &\ge \sum\limits_{i=1}^{n}  \left[ \left(1 - \dfrac{\epsilon}{2} \right) h^{(i)}(l,ng)  +\dfrac{\epsilon}{2} E^{\mathrm{GP}}(1,l,ng)  \right] \\		
	& \quad - \, \sum\limits_{i=1}^{n} \left[ \epsilon E^{\mathrm{GP}}(1,l,ng)+\dfrac{b (n g)^2}{2 \epsilon} \left( E^{\mathrm{GP}}(1,l,ng) \right)^{1/2} \right. \\
	&\quad \quad  + \left. cng \left(E^{\mathrm{GP}}(1,l,ng) \right)^{3/4} b^{1/2} + \dfrac{n g}{4 n b}\right] \ .
	\end{split}
	\end{equation}
Next, choosing $\epsilon= 2^{-1/2} b^{1/2} n g  \left( E^{\mathrm{GP}}(1,l,ng) \right)^{-1/4}$ and $b= \tilde{c}^2 n^{-2/3} \left( E^{\mathrm{GP}}(1,l,ng) \right)^{-1/2}$ with some constant $\tilde{c} > 0$ yields, $\tau:= \epsilon/2$, 
	\begin{equation}
	H(n, l, g) \ge \left( 1 - \tau \right) \sum\limits_{i=1}^{n} h^{(i)}(l,ng) + \tau n E^{\mathrm{GP}}(1,l,ng) - \hat{c} n  \left( n \right)^{2/3} g \left( E^{\mathrm{GP}}(1,l,ng) \right)^{1/2}
	\end{equation}
	with some constant $\hat{c} > 0$.
	
Finally, note that the bound on $\tau$ follows from the definitions of $\epsilon$ and $b$ and from 
	\begin{equation} \label{bound EGP}
	E^{\mathrm{GP}}(1,l,ng) = \dfrac{1}{l^2} e^{\mathrm{GP}}(nlg) \ge \dfrac{nlg}{2 l^2} \int\limits_0^1 | \phiGP{1}{nlg} (z)|^4\ \ud z \ge \dfrac{n g}{2 l} \ ,
	\end{equation}
	which holds due to 	
	\begin{equation}
	\int\limits_0^1| \phiGP{1}{nlg} (z)|^4\ \ud z = \left( \Big\| | \phiGP{1}{nlg} |^2  \Big\|_{\text L^2(0,1)} \| 1 \|_{L^2(0,1)} \right)^2 \ge \left( \Big\| | \phiGP{1}{nlg} |^2 \cdot 1 \Big\|_{\text L^1(0,1)} \right)^2 = 1\ ,
	\end{equation}
	where we used H\"older's inequality. 
\end{proof}
In a next step we bound the ground-state energy of each component in~\eqref{AdmissibleHamiltonian} in terms of the corresponding Gross--Pitaevskii energy.
\begin{theorem}[An energy bound]\label{thm energy bound} Let $0 < \epsilon < 1$, $g \geq 0$ and $N \in \mathds N$ be given. Then, for any general admissible sequence $\{ M_j \}_{j\in\mathbb{Z}}$ with associated lengths $\{l_j\}_{j\in \mathbb{Z}}$ that fulfills inequality $[ (\Mlargest)^{1/3} \llargest g]^{1/2} < \epsilon / \max \{ \sqrt{2} c , \tilde{c} \}$ (where $c$ and $\tilde{c}$ are the constants from Lemma~\ref{lemma 1}), one has
	\begin{equation}\label{energy bound}
	M_j E^{\mathrm{GP}}(1, l_j, M_j g) \ge E^{\mathrm{QM}}_0(M_j,l_j,g) \ge \left( 1 - \epsilon \right)M_j E^{\mathrm{GP}}(1, l_j, M_j g) \ ,
	\end{equation}
	for all $j \in \mathbb{Z}$ where $l_j > 0$.
\end{theorem}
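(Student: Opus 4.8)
The plan is to prove the two inequalities of \eqref{energy bound} separately. The case $M_j = 0$ (with $l_j > 0$) is trivial, since then both sides vanish by the conventions $H(0,l_j,g) = 0$ and $E^{\mathrm{GP}}(0,l_j,g) = 0$, so I may assume $M_j \ge 1$ throughout.

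For the upper bound $E_0^{\mathrm{QM}}(M_j, l_j, g) \le M_j E^{\mathrm{GP}}(1, l_j, M_j g)$ I would use the product trial function $\Psi := (\phi^{\mathrm{GP}}_{l_j, M_j g})^{\otimes M_j} \in \bigotimes_s^{M_j}\mathrm{H}^1_0(W_j)$, which is normalized because $\|\phi^{\mathrm{GP}}_{l_j,M_j g}\|^2 = 1$. Evaluating the quadratic form of $H(M_j, l_j, g)$ on $\Psi$ gives
\[
\langle \Psi, H(M_j,l_j,g)\Psi\rangle = M_j\!\int_{W_j}\!|(\phi^{\mathrm{GP}}_{l_j,M_j g})'|^2 \,\ud z + \frac{g M_j(M_j - 1)}{2}\!\int_{W_j}\!|\phi^{\mathrm{GP}}_{l_j,M_j g}|^4 \,\ud z \le M_j\,\mathcal E^{\mathrm{GP}}(1,l_j,M_j g)[\phi^{\mathrm{GP}}_{l_j,M_j g}],
\]
where I simply dropped the $-1$ in the combinatorial factor; since $\phi^{\mathrm{GP}}_{l_j,M_j g}$ minimizes $\mathcal E^{\mathrm{GP}}(1,l_j,M_j g)$ and $E^{\mathrm{GP}}(M_j,l_j,g) = M_j E^{\mathrm{GP}}(1,l_j,M_j g)$ by \eqref{EGP scaling}, this is the claim.

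For the lower bound I would apply Lemma~\ref{lemma 1} with $n = M_j$ and $l = l_j$. Its hypothesis holds for every such $j$: since $M_j \le \Mlargest$, $l_j \le \llargest$ and $\epsilon < 1$, the assumption $[(\Mlargest)^{1/3}\llargest g]^{1/2} < \epsilon/\max\{\sqrt 2\,c,\tilde c\}$ forces $\tilde c (M_j^{1/3} l_j g)^{1/2} \le \tilde c\,[(\Mlargest)^{1/3}\llargest g]^{1/2} < \epsilon < 1$. Lemma~\ref{lemma 1} then supplies some $\tau$ with $0 < \tau < \tfrac{\tilde c}{2}(M_j^{1/3}l_j g)^{1/2} < 1/2$ obeying the form inequality \eqref{equation H lower bound}. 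Taking its expectation in an arbitrary normalized $\Psi\in\bigotimes_s^{M_j}\mathrm{H}^1_0(W_j)$, using $1-\tau>0$ together with $\langle\Psi, \sum_{i=1}^{M_j}h^{(i)}(l_j,M_j g)\Psi\rangle \ge M_j E^{\mathrm{GP}}(1,l_j,M_j g)$ (the ground-state energy of $h(l_j,M_j g)$ being $E^{\mathrm{GP}}(1,l_j,M_j g)$ by Remark~\ref{RemarkGroundStateMeanFieldHamiltonian}), and then taking the infimum over $\Psi$, I get
\[
E_0^{\mathrm{QM}}(M_j, l_j, g) \ge (1-\tau)M_j E^{\mathrm{GP}}(1, l_j, M_j g) + \tau M_j l_j^{-2}e^{\mathrm{GP}}(M_j l_j g) - c M_j^{5/3}g l_j^{-1}\big(e^{\mathrm{GP}}(M_j l_j g)\big)^{1/2}.
\]
The crucial observation is that \eqref{EGP scaling} gives $l_j^{-2}e^{\mathrm{GP}}(M_j l_j g) = E^{\mathrm{GP}}(1, l_j, M_j g)$, so the two $\tau$-dependent terms combine to $M_j E^{\mathrm{GP}}(1, l_j, M_j g)$ regardless of the actual value of $\tau$, leaving $E_0^{\mathrm{QM}}(M_j, l_j, g) \ge M_j E^{\mathrm{GP}}(1, l_j, M_j g) - c M_j^{5/3}g l_j^{-1}(e^{\mathrm{GP}}(M_j l_j g))^{1/2}$.

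It then remains to absorb the error term, i.e.\ to show $c M_j^{5/3}g l_j^{-1}(e^{\mathrm{GP}}(M_j l_j g))^{1/2} \le \epsilon M_j E^{\mathrm{GP}}(1,l_j,M_j g) = \epsilon M_j l_j^{-2}e^{\mathrm{GP}}(M_j l_j g)$, which after dividing is $c M_j^{2/3}g l_j \le \epsilon (e^{\mathrm{GP}}(M_j l_j g))^{1/2}$. Using the elementary bound $e^{\mathrm{GP}}(M_j l_j g) \ge M_j l_j g / 2$ from \eqref{bound EGP}, it suffices that $\sqrt 2\, c\,(M_j^{1/3} l_j g)^{1/2} \le \epsilon$, which once more follows from the stated assumption via $M_j \le \Mlargest$, $l_j \le \llargest$. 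Combining this with the upper bound yields \eqref{energy bound}. I do not expect a genuine obstacle here: the only point requiring care is the bookkeeping interlocking Lemma~\ref{lemma 1}, the scaling identity \eqref{EGP scaling} and the lower bound \eqref{bound EGP}, in particular the realization that the $\tau$-terms cancel exactly, so the uncontrolled $\tau$ furnished by Lemma~\ref{lemma 1} is harmless.
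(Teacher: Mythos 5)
Your proof is correct and follows essentially the same route as the paper: product trial state for the upper bound, Lemma~\ref{lemma 1} together with Remark~\ref{RemarkGroundStateMeanFieldHamiltonian} for the lower bound, noting that the $\tau$-terms combine via the scaling identity \eqref{EGP scaling}, and then absorbing the error term using \eqref{bound EGP}. The only cosmetic difference is that the paper writes the resulting bound in multiplicative form $M_j E^{\mathrm{GP}}(1,l_j,M_jg)\bigl(1-cM_j^{2/3}g(E^{\mathrm{GP}}(1,l_j,M_jg))^{-1/2}\bigr)$, whereas you keep the error additive; these are identical.
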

\begin{proof}
	Since the inequality \eqref{energy bound} is trivial for $M_j = 0$, we assume $M_j \ge 1$ in the following.
	
	Upper bound: this follows directly from a standard variational argument using the product state $\otimes \phi^{\mathrm{GP}}_{\lj,M_j g}$. 
	
	Lower bound: Since $h^{(i)}(\lj,M_j g) \ge E^{\mathrm{GP}}(1, \lj, M_j g)$ for $j \in \mathbb{Z}$ with $M_j \ge 1$ by Remark~\ref{RemarkGroundStateMeanFieldHamiltonian} and since
	\begin{equation}
	\tilde{c} \left[ M_j^{1/3} \lj g \right]^{1/2} \le \tilde{c} \left[\left( \Mlargest \right)^{1/3} \llargest g \right]^{1/2} < 1 \, ,
	\end{equation}
	\eqref{equation H lower bound} implies
	\begin{equation} \label{equation EGP not bounded}
	E_0^{\mathrm{QM}}(M_j,l_j,g) \ge M_j E^{\mathrm{GP}}(1, l_j, M_j g) \left( 1 - c M_j^{2/3} g \left( E^{\mathrm{GP}}(1, l_j, M_j g) \right)^{-1/2} \right) \ .
	\end{equation}
	Finally, applying inequality \eqref{bound EGP} then yields the statement.
\end{proof}
In a next result we estimate the fraction of particles occupying the Gross--Pitaevskii ground state $\phi^{\mathrm{GP}}_{l_j,M_j g}$. For this note that
\begin{align}\label{definition occupation number}
n_{M_j}:=\tr_{\mathrm{L}^2(\WNj)}[\rho^{(1)}_{M_j} \phi^{\mathrm{GP}}_{l_j,M_j g}(\phi^{\mathrm{GP}}_{l_j,M_j g},\cdot)]
\end{align}
is the number of particles occupying $\phi^{\mathrm{GP}}_{l_j,M_j g}$. Here
\begin{align}
\rho^{(1)}_{M_j} := \begin{cases}
M_j \tr_{\mathcal H_{W_j}^{(M_j-1)}}[\rho_{M_j}]  & \quad \text{ if } M_j \ge 2\ ,\\
\rho_1 & \quad \text{ if } M_j = 1\ ,
\end{cases} 
\end{align}
with $\mathcal H_{W_j}^{(M_j-1)} := S_{M_j - 1} \text L^2(W_j^{M_j - 1})$ and $S_{M_j - 1}$ being the symmetrizer on $\text L^2(W_j^{M_j - 1})$, is the reduced one-particle density matrix that is obtained from the many-particle ground state $\rho_{M_j}$ of $H(M_j, l_j, g)$ by taking the partial trace. Most importantly, $\tr \big[ \sum_{i=1}^{M_j} h^{(i)}(l_j,g) \rho_{M_j}\big] = \tr\big[h(l_j,g) \rho^{(1)}_{M_j}\big]$, see~\cite{M07} for more details. 
\begin{theorem}[Occupation number of single-particle state]\label{Theorem 2}
	Let $N \in \mathds N$ and a general admissible sequence $\{M_j \}_{j\in\mathbb{Z}}$ with associated lengths $\{l_j\}_{j\in \mathbb{Z}}$. Suppose $\tilde c\left[ ( \Mlargest )^{1/3} \llargest g \right]^{1/2}< 1/2$ (with the constant $\tilde c$ from Lemma~\ref{lemma 1}). Then for any $j \in \mathds Z$ with $M_j \ge 1$,
	\begin{equation}\label{BEC inequality}
	\left( 1 - \dfrac{\nMjgz}{M_j} \right) \le \sqrt{7} c \cdot \left\{ \ln \left[ 1 + \mathrm{e}^{-2 (\pi^2 + 3 M_j l_j g)^{1/2}} \right] \right\}^{-1}  M_j^{2/3} l_j g 
	\end{equation}
	with the constant $c$ from Lemma~\ref{lemma 1}.
\end{theorem}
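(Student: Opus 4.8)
The plan is to insert the operator inequality \eqref{equation H lower bound} into the many-body ground state, to read off the condensate depletion from the spectral gap of the mean-field Hamiltonian $h$, and to close the estimate with the elementary variational bound $E_0^{\mathrm{QM}}(M_j,l_j,g)\le M_jE^{\mathrm{GP}}(1,l_j,M_jg)$. The genuinely new input is a lower bound on that spectral gap, uniform in the coupling, and this is where the factor $\{\ln[1+\ue^{-2(\pi^2+3M_jl_jg)^{1/2}}]\}^{-1}$ enters.

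First I would fix $j$ with $M_j\ge1$ (so $l_j>0$). Since $\tilde c[(\Mlargest)^{1/3}\llargest g]^{1/2}<1/2$ forces $\tilde c[M_j^{1/3}l_jg]^{1/2}<1/2<1$, Lemma~\ref{lemma 1} applies with $n=M_j$ and $l=l_j$, and in addition $0<\tau<\tfrac{\tilde c}{2}[M_j^{1/3}l_jg]^{1/2}<\tfrac14$. Taking the trace of \eqref{equation H lower bound} against the many-body ground state $\rho_{M_j}$ of $H(M_j,l_j,g)$, using $\tr[\sum_ih^{(i)}(l_j,M_jg)\rho_{M_j}]=\tr[h(l_j,M_jg)\rho^{(1)}_{M_j}]$ and $l_j^{-2}e^{\mathrm{GP}}(M_jl_jg)=E^{\mathrm{GP}}(1,l_j,M_jg)$, and then decomposing $h(l_j,M_jg)$ spectrally --- by Remark~\ref{RemarkGroundStateMeanFieldHamiltonian} its ground state is $\phi^{\mathrm{GP}}_{l_j,M_jg}$ with energy $E^{\mathrm{GP}}(1,l_j,M_jg)$; write its next eigenvalue as $E^{\mathrm{GP}}(1,l_j,M_jg)+\Delta_j$ with $\Delta_j>0$, and use orthogonality of $\phi^{\mathrm{GP}}_{l_j,M_jg}$ to the remaining spectral subspace --- one gets $\tr[h(l_j,M_jg)\rho^{(1)}_{M_j}]\ge M_jE^{\mathrm{GP}}(1,l_j,M_jg)+\Delta_j(M_j-n_{M_j})$. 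Combining this with \eqref{equation H lower bound} and with $E_0^{\mathrm{QM}}(M_j,l_j,g)\le M_jE^{\mathrm{GP}}(1,l_j,M_jg)$, the $M_jE^{\mathrm{GP}}(1,l_j,M_jg)$ terms cancel and one is left with $(1-\tau)\Delta_j(M_j-n_{M_j})\le cM_j^{5/3}g\,l_j^{-1}(e^{\mathrm{GP}}(M_jl_jg))^{1/2}$. Dividing by $(1-\tau)\Delta_jM_j$, using $(1-\tau)^{-1}<\tfrac43$, and rescaling to the unit interval (so $l_j^2\Delta_j$ equals the spectral gap $\widehat\Delta(\gamma_j)$, $\gamma_j:=M_jl_jg$, of the $l_j$-independent Dirichlet operator $\hat h(\gamma):=-\partial_y^2+\gamma|\phi^{\mathrm{GP}}_{1,\gamma}|^2-\tfrac\gamma2\int_0^1|\phi^{\mathrm{GP}}_{1,\gamma}|^4$ on $\text L^2(0,1)$) leaves
\begin{equation*}
1-\frac{n_{M_j}}{M_j}\ \le\ \frac43\,\frac{c\,M_j^{2/3}\,l_j\,g\,(e^{\mathrm{GP}}(\gamma_j))^{1/2}}{\widehat\Delta(\gamma_j)}\,.
\end{equation*}

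It then remains to show $\widehat\Delta(\gamma)\ge\tfrac{4}{3\sqrt7}(e^{\mathrm{GP}}(\gamma))^{1/2}\ln[1+\ue^{-2(\pi^2+3\gamma)^{1/2}}]$ for all $\gamma\ge0$, which with $\gamma=\gamma_j$ yields \eqref{BEC inequality}. Here I would use: (i) the Euler--Lagrange equation $-\phi''+\gamma\phi^3=\mu(\gamma)\phi$, which makes $\phi^{\mathrm{GP}}_{1,\gamma}$ the positive ground state of $\hat h(\gamma)$ with energy $e^{\mathrm{GP}}(\gamma)$ and $\mu(\gamma)=e^{\mathrm{GP}}(\gamma)+\tfrac\gamma2\int_0^1|\phi^{\mathrm{GP}}_{1,\gamma}|^4$; (ii) the pointwise bound $0\le\gamma|\phi^{\mathrm{GP}}_{1,\gamma}(y)|^2\le\mu(\gamma)\le\pi^2+3\gamma$, obtained by evaluating the Euler--Lagrange equation at the interior maximum of $\phi^{\mathrm{GP}}_{1,\gamma}$ and by bounding $\mu(\gamma)$ with the trial function $\sqrt2\sin(\pi\cdot)$ and the monotonicity $\int_0^1|\phi^{\mathrm{GP}}_{1,\gamma}|^4\le\int_0^1|\phi^{\mathrm{GP}}_{1,0}|^4=\tfrac32$ (a consequence of the concavity of $\gamma\mapsto e^{\mathrm{GP}}(\gamma)$ and the Feynman--Hellmann identity $\tfrac{\ud}{\ud\gamma}e^{\mathrm{GP}}(\gamma)=\tfrac12\int_0^1|\phi^{\mathrm{GP}}_{1,\gamma}|^4$); and (iii) that the effective potential $\gamma|\phi^{\mathrm{GP}}_{1,\gamma}|^2$ is nonnegative, bounded above by $\pi^2+3\gamma$, and vanishes at $y=0,1$, so that for large $\gamma$ it behaves like a barrier of height at most $\pi^2+3\gamma$ across the unit interval. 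Since the first excited state of $\hat h(\gamma)$ is antisymmetric about $y=\tfrac12$ (hence has its single node there), $\widehat\Delta(\gamma)$ is the difference of the Dirichlet ground-state energies of $\hat h(\gamma)$ on $(0,\tfrac12)$ and on $(0,1)$, and a tunneling-type comparison using (ii) (for instance a ground-state transformation together with a weighted Poincar\'e inequality, or a comparison with an explicitly solvable two-well model) produces a positive lower bound of the stated form; for $\gamma$ in a bounded range positivity of $\widehat\Delta(\gamma)$ is immediate and the crude bound is easily met.

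I expect step (iii) to be the main obstacle: a lower bound on the spectral gap that is uniform in $\gamma$. Exponential decay of the low-lying eigenfunctions away from the wells, and hence upper bounds on the tunneling, is standard, but a quantitative lower bound on the level splitting requires more care; the point of the specific expression $\ln[1+\ue^{-2(\pi^2+3\gamma)^{1/2}}]$ is that it is deliberately weak --- for large $\gamma$ it is of order $\ue^{-2(\pi^2+3\gamma)^{1/2}}$, far below the actual gap --- so that it can be extracted using only the pointwise control $\gamma|\phi^{\mathrm{GP}}_{1,\gamma}|^2\le\pi^2+3\gamma$ on the condensate profile and no finer information about $\phi^{\mathrm{GP}}_{1,\gamma}$. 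The trace manipulations, the cancellation of the leading Gross--Pitaevskii energies, and the bookkeeping of the numerical constants are routine.
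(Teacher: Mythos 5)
Your trace manipulations, the spectral decomposition of $h(l_j,M_jg)$, the cancellation of the leading Gross--Pitaevskii terms, and the constant bookkeeping all match the paper's proof step for step; the target inequality for the scaled gap $\widehat\Delta(\gamma)$ that you reduce to is (after the chain of trivial estimates $(e^{\mathrm{GP}}(\gamma))^{1/2}\le(\sqrt7/2)(\pi^2+\gamma)^{1/2}$ and $\ln(1+x)\le\ln(1+\pi x)$) exactly the one the paper needs, and your ingredients (i)--(ii), including the interior-maximum argument $\gamma|\phi^{\mathrm{GP}}_{1,\gamma}|^2\le\mu(\gamma)$ and the Feynman--Hellmann monotonicity of $\int|\phi^{\mathrm{GP}}_{1,\gamma}|^4$, are the right pointwise control on the mean-field potential.

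Where you diverge from the paper is step (iii). The paper does not prove the gap lower bound at all: it quotes the Kirsch--Simon universal bound on one-dimensional eigenvalue splittings (cited as \cite{kirsch1985universal} and \cite[(A.8)]{SeiYngZag12}), which in the form used reads, after rescaling to the unit interval, $\widehat\Delta(\gamma)\ge\eta\ln(1+\pi\ue^{-2\eta})$ with $\eta=\sqrt{\pi^2+3\gamma}$ once the potential is bounded by $\pi^2+3\gamma$. That bound is itself a tunneling-type barrier estimate of precisely the kind you describe physically, so your heuristic picture is the right one. But the sketch you give --- antisymmetry of the first excited state, half-interval comparison, and an unspecified ``ground-state transformation plus weighted Poincar\'e inequality or comparison with a solvable two-well model'' --- is not a proof; deriving a quantitative \emph{lower} bound on the splitting from only a sup bound on the potential is exactly the nontrivial content of Kirsch--Simon, and you flag this yourself as the main obstacle without closing it. So the proposal is structurally correct and matches the paper's argument, but as written it leaves the one genuinely delicate inequality unproved. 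If you replace your step (iii) by a citation of the Kirsch--Simon splitting bound (or reproduce their proof), the argument closes and is essentially the paper's proof.
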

\begin{proof}
	We write $E^{\mathrm{QM}}_1(1,\lj,M_j g)$ for the second eigenvalue of the mean-field Hamiltonian $h(1,\lj, M_j g)$. After tracing \eqref{equation H lower bound} with the density matrix associated to the ground state of the Hamiltonian $H(M_j,\lj,g)$ as in \cite[proofs of Theorems 2.1 and 2.2]{SeiYngZag12}, one obtains
	\begin{equation}
	\begin{split}
	E^{\mathrm{QM}}_0(M_j, \lj, g) 
	& \ge (1-\tau) \Big[n_{M_j} E^{\mathrm{GP}}(1,\lj ,M_j g) + (M_j - 
n_{M_j})  E^{\mathrm{QM}}_1(1,\lj,M_j g) \Big] \, + \\
	& \quad + \, \tau M_j \lj^{-2} e^{\mathrm{GP}}(M_j \lj g) - c M_j^{5/3} 
g \lj^{-1} \left( e^{\mathrm{GP}}(M_j \lj g) \right)^{1/2} \ .
	\end{split}
	\end{equation}
	Note here that inequality~\eqref{BEC inequality} is slightly improved if compared to \cite[Theorem 2.2]{SeiYngZag12} because we omit an estimation of $(e^{\mathrm{GP}}(M_j \lj g))^{1/2}$ in the last term of this inequality. Then, using the upper bound of Theorem~\ref{thm energy bound} we obtain
	\begin{align}
	\left( 1 - \dfrac{\nMjgz}{M_j} \right) & \le c(1 - \tau)^{-1} \dfrac{\left( E^{\mathrm{GP}}(1,\lj ,M_j g) \right)^{1/2}}{E^{\mathrm{QM}}_1(1,\lj ,M_j g) - E^{\mathrm{GP}}(1,\lj ,M_j g)} M_j^{2/3} g \\
	& \le 2 c \dfrac{ \left( e^{\mathrm{GP}}(M_j \lj g) \right)^{1/2}}{ \left( \pi^2 + 3 M_j \lj g \right)^{1/2}} \left\{ \ln\left( 1 + \pi \mathrm{e}^{-2 \sqrt{\pi^2 + 3 M_j \lj g}} \right) \right\}^{-1} M_j^{2/3} \lj g
	\end{align}
	for all sufficiently large $N \in \mathds N$. For the last estimate we used
	\begin{align}
	E^{\mathrm{QM}}_1(1,l , n g) - E^{\mathrm{GP}}(1,l , n g) \ge \dfrac{\eta}{l^2} \ln \left( 1 + \pi \mathrm{e}^{-2 \eta} \right)
	\end{align}
	with $\eta = \sqrt{\pi^2 + 3 n l g}$ and $l,g> 0$, see \cite{kirsch1985universal} and \cite[(A.8)]{SeiYngZag12}. 
	
	The statement then follows with
	\begin{align}
	\dfrac{e^{\mathrm{GP}}(M_j \lj g)}{\pi^2 + M_j \lj g} \le \dfrac{e^{\mathrm{GP}}(M_j \lj g) - e^{\mathrm{GP}}(0)}{M_j \lj g} + \dfrac{\pi^2}{\pi^2} \le \dfrac{7}{4} \ ,
	\end{align}
	where we used $(e^{\mathrm{GP}}(\kappa) - e^{\mathrm{GP}}(0))/ \kappa \le 3/4$ \cite[(31)]{SeiYngZag12} for $\kappa > 0$ as well as $e^{\mathrm{GP}}(0)= \pi^2$.
\end{proof}
From now on we assume a partition of $\Lambda_N = (- L_N/2, L_N/2)$ caused by a Poisson point process as described at the beginning. Moreover, we now allow the intensity of impurities $\nu$ and the pair interaction $g$ to vary with the particle number $N$. In order to account for variable impurity intensities, we do the following: We introduce a sequence $(s_N)_{N \in \mathds{N}} \subset \mathds{R}$ such that $s_N \leq 1$ and perform the scaling $x_j(\omega) \mapsto s_N^{-1} x_j(\omega)$ and we set $\nu_N := \nu s_N$. If we consider a constant intensity $\nu = const.$ we set $s_N = 1$ for any $N \in \mathds N$.
We define
\begin{equation}
\tilde l_j := |\widetilde W_j| := \left| \left( s_N^{-1} x_j(\omega), s_N^{-1} x_{j+1}(\omega) \right) \cap \Lambda_N \right| \ .
\end{equation}
We have $\tilde{l}_j = s_N^{-1} l_j$ for any $\tilde{l}_j > 0$ except possibly for the first and last subinterval $\widetilde W_j$ within $\Lambda_N$. We write $\tilde k_N$ for the number of the scaled subintervals within the window $\Lambda_N$. Then $\lim_{N \to \infty} \tilde k_N / (L_N \nu_N) = 1$ almost surely by scaling. We also define $\llargesttilde := \max\{ \tilde l_j : j \in \mathds Z \}$.
\begin{remark}
	Comparing Theorem~\ref{thm energy bound} and \ref{Theorem 2} to \cite[Theorems~2.1 and~2.2]{SeiYngZag12} one observes that we only require $(\Mlargest)^{1/3} \llargesttilde \gN$ instead of $N^{1/3} \LN \gN$ to converge to zero. 
	This will allow us to consider stronger interactions in the sense that $\gN$ converges more slowly to zero.
\end{remark}
Note that the lengths $|(x_j(\omega), x_{j+1}(\omega)|$ are exponentially distributed random variables with parameter $\nu$ for any $j \in \mathds Z \backslash \{ 0\}$ \cite[Ch. 4]{kingman1993poisson}. For any $N \in \mathds N$ and any $l, \widetilde \mu > 0$ we define the function $l \mapsto N_{\gN,\widetilde \mu}(l)$ to be the unique minimizer of the Legendre transformation of the map $\widetilde N \mapsto E^{\mathrm{GP}}(\widetilde N, l, \gN)$ (see \cite{SeiYngZag12} for details). More explicitly, $N_{\gN,\widetilde \mu}(l)$ is such that
\begin{equation}\label{determinging formula for Ngmu}
E^{\mathrm{GP}}(N_{\gN,\widetilde \mu}(l), l, \gN) - \widetilde \mu N_{\gN, \widetilde \mu}(l)= \inf\limits_{\widetilde N \geq 0} \left( E^{\mathrm{GP}}(\widetilde N, l, \gN)) - \widetilde \mu \widetilde N \right) \ .
\end{equation}
We remark that  $N_{\gN,\widetilde \mu}(l)$ obeys
\begin{equation} \label{inequality for Ngmu(l)}
\dfrac{2}{3} \dfrac{1}{l \gN} \left[ \widetilde \mu l^2 - \pi^2 \right]_+ \le N_{\gN,\widetilde \mu}(l) \le \dfrac{1}{l \gN} \left[ \widetilde \mu l^2 - \pi^2 \right]_+
\end{equation}
%
%
with $[x]_+ := \max\{x,0\}$. Consequently,
\begin{align}
\frac{2}{3} \frac{\widetilde \mu}{g_N \nu} \xi \mathrm{e}^{- \pi \nu / \sqrt{\widetilde \mu}} \le \mathds E \left[ N_{g_N,\widetilde \mu} \right] = \int\limits_0^{\infty} N_{g_N,\widetilde \mu}(l) \, \nu \mathrm{e}^{- \nu l} \mathrm{d} l  \le  \frac{\widetilde \mu}{g_N \nu} \xi \mathrm{e}^{- \pi \nu / \sqrt{\widetilde \mu}} 
\end{align}
with (compare with \cite[(45)]{SeiYngZag12})
\begin{equation} \label{inequality x 1}
1 \le \xi := \mathrm{e}^{\pi \nu / \sqrt{\widetilde \mu}} \int\limits_{\pi \nu / \sqrt{\widetilde \mu}}^{\infty} \left( t - \left( \dfrac{\pi \nu}{\sqrt{\widetilde \mu}} \right)^2 t^{-1} \right) \mathrm{e}^{-t} \, \mathrm{d} t \le 2\ .
\end{equation}
Thus we have
\begin{align} \label{equation for E and zeta}
\E[ N_{g_N,\widetilde \mu} ] = \zeta_N \dfrac{\widetilde \mu}{g_N \nu} 
\xi \mathrm{e}^{- \pi \nu / \sqrt{\widetilde \mu}}
\end{align}
with $2/3 \le \zeta_N \le 1$ for any $N \in \mathds N$. Since $(\widetilde \mu 
/ g_N \nu_N) \mathrm{e}^{- \pi \nu_N / \sqrt{\widetilde \mu}} $ is a continuous function 
of $\widetilde \mu$ that converges to zero for $\widetilde \mu \searrow 0$ and 
to $\infty$ for $\widetilde \mu \to \infty$, for any $N \in \mathds N$ we are 
able to choose a $\muN$ in such a way that
\begin{align}\label{determinging forumla for mu}
\begin{split}
\mathds E\left[ N_{\gN,\muN}(s_N^{-1} \cdot) \right] & = \int\limits_{\pi s_N / \sqrt{\mu_N}}^{\infty} N_{\gN,\muN}(s_N^{-1} l) \, \nu \mathrm{e}^{-\nu l} \, \mathrm{d} l = \int\limits_{\pi / \sqrt{\muN}}^{\infty} N_{\gN,\muN}(l) \, \nu_N \mathrm{e}^{- \nu_N l} \, \mathrm{d} l \\
& =: \mathds E^{\nu_N} \left[ N_{\gN,\muN} \right] =\zeta_N \dfrac{\muN}{g_N 
\nu_N} 
\xi_N \mathrm{e}^{- \pi \nu_N / \sqrt{\muN}} \stackrel{!}{=} 
\zeta_N \xi_N \rho \nu_N^{-1}
\end{split}
\end{align}
with 
\begin{equation} \label{inequality x 2}
1 \le \xiN := \mathrm{e}^{\pi \nu_N / \sqrt{\muN}} \int\limits_{\pi \nu_N / \sqrt{\muN}}^{\infty} \left( t - \left( \dfrac{\pi \nu_N}{\sqrt{\muN}} \right)^2 t^{-1} \right) \mathrm{e}^{-t} \, \mathrm{d} t \le 2\ .
\end{equation}
Hence
\begin{align} \label{inequality for E NgNmuN}
 \dfrac{2}{3} \rho \nu_N^{-1} \le \E \left[ N_{g_N, \mu_N}(s_N^{-1} \cdot) \right] 
\le 2 \rho \nu_N^{-1} \ .
\end{align}
%
%

Similar to \cite[Subsection.~3.3.3]{SeiYngZag12}, we define the occupation numbers
\begin{equation} \label{Besetzungszahlen}
M_j := \left\lceil N_{\gN, \muN}(\tilde{l}_j) \right\rceil
\end{equation}
for all intervals $\widetilde W_j$ with $\tilde{l}_j > 0$ except for $\widetilde W_0$ and the first and last interval within $\Lambda_N$ and set $M_j := 0$ else. We remark that we set $M_0 = 0$ because, unlike $|(x_j(\omega), x_{j+1}(\omega))|$ for any $j \in \mathds Z \backslash \{ 0 \}$, the length $|(x_0(\omega), x_{1}(\omega))|$ is not exponentially distributed (cf. waiting time paradox~\cite{kingman1993poisson}). We also note that $\{M_j\}_{j \in \mathds Z}$ is a general admissible sequence.

We define 
\begin{equation}\label{definition of lambda}
\lambdaN := \mathds{P}\big( l > s_N \pi / \sqrt{\mu_N} \big) = \int\limits_{s_N \pi / \sqrt{\mu_N}}^{\infty} \, \nu \mathrm{e}^{-\nu l} \, \mathrm{d} l = \mathrm{e}^{- \pi \nu_N / \sqrt{\muN}}
\end{equation}
and note that $\lambdaN$ is asymptotically equal to the fraction of intervals that are large enough to be, according to our choice~\eqref{Besetzungszahlen}, occupied by at least one particle. Moreover, after taking into account~\eqref{determinging forumla for mu} and~\eqref{equation for E and zeta} we are able to establish the relationship
\begin{equation} \label{Zusammenhang g lambda}
\gN = \rho^{-1} \muN \lambdaN = \rho^{-1} \pi^2  
\dfrac{\lambdaN \nu_N^2}{[\ln \lambdaN]^2}\ .
\end{equation}
%

%
%
%
%
%
For the proof of the next theorem we need the following fact about the length $\llargesttilde$ of the largest scaled subinterval.
\begin{lemma}\label{remark largest interval}
	For {$1/N \ll \nu_N \lesssim 1$},  for any $\kappa >4$ and for almost any $\omega \in \Omega$ there exists an $\widetilde N \in \mathds N$ such that for any $N \ge \widetilde N$ the inequality
	\begin{align} \label{equation largest interval}
	\llargesttilde \le \kappa \, \nu_N^{-1} \ln (N)
	\end{align}
	holds.
\end{lemma}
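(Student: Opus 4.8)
The plan is to estimate, for each $N$, the probability that some gap between consecutive scaled impurities inside $\Lambda_N$ exceeds $a_N := \kappa \nuN^{-1} \ln(N)$, to check that these probabilities are summable in $N$ once $\kappa > 4$, and then to invoke the first Borel--Cantelli lemma. The only preliminary point is bookkeeping: rescaling the Poisson points $x_j(\omega)$ by the factor $s_N^{-1}$ turns the homogeneous Poisson process of intensity $\nu$ on $\mathds R$ into a homogeneous Poisson process of intensity $\nuN = \nu s_N$, whose restriction to $\Lambda_N$ is again homogeneous Poisson of intensity $\nuN$. In these coordinates $\llargesttilde$ is precisely the length of the longest open subinterval of $\Lambda_N$ containing no scaled point, so the waiting-time-paradox and boundary subtleties attached to $\widetilde W_0$ and to the first and last subinterval never enter.

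Next I would run an elementary covering argument. Fix $a := a_N$ and partition $\mathds R$ into consecutive blocks $B_k := [ka/2,(k+1)a/2]$, $k \in \mathds Z$; at most $2\LN/a + 2$ of them meet $\Lambda_N$. If $\llargesttilde > a$, then $\Lambda_N$ contains an open interval of length exceeding $a$ that is free of scaled points; and any real interval of length $a$ contains at least one full block $B_k$, since the two inclusion constraints leave the admissible index $k$ in an interval of length $1$. Hence at least one block $B_k$ meeting $\Lambda_N$ is point-free, and a union bound together with $\mathds P(B_k \text{ point-free}) = \mathrm e^{-\nuN a/2}$ gives $\mathds P(\llargesttilde > a_N) \le (2\LN/a_N + 2)\,\mathrm e^{-\nuN a_N/2}$.

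It remains to insert the values $\LN = N/\rho$ and $a_N = \kappa \nuN^{-1}\ln(N)$: the exponential becomes $N^{-\kappa/2}$, while the prefactor equals $2 N \nuN/(\rho\kappa\ln N) + 2 = O(N)$ because $\nuN \lesssim 1$. Thus $\mathds P(\llargesttilde > a_N) \le C\,N^{1-\kappa/2}$ for all large $N$, with $C = C(\rho,\kappa)$. For $\kappa > 4$ one has $1 - \kappa/2 < -1$, so $\sum_N \mathds P(\llargesttilde > a_N) < \infty$, and Borel--Cantelli yields that, almost surely, $\llargesttilde \le a_N$ for every $N$ past some random $\widetilde N$ --- exactly the assertion. (When $a_N \ge \LN$ the displayed bound is vacuous, but then $\llargesttilde \le \LN \le a_N$ holds trivially; the hypothesis $1/N \ll \nuN$ only serves to keep this degenerate case away from the relevant range and plays no further role.)

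I do not anticipate a real obstacle. The one step meriting care is the purely geometric observation that a gap of length $>a_N$ forces a point-free block of length $a_N/2$: this is what halves the rate in the exponential and, combined with the requirement that $\sum_N N^{1-\kappa/2}$ converge, produces the threshold $\kappa>4$. No finer probabilistic input --- neither the exact exponential law of the individual spacings nor the asymptotics $\tilde k_N \sim \LN\nuN$ --- is needed for this crude but sufficient bound.
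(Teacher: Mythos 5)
Your argument is correct and takes a genuinely different route from the paper. The paper disposes of the lemma by citing \cite[Theorem~6.2]{SeiYngZag12}, which handles the constant-intensity case, and then transfers the bound to variable $\nu_N$ by the identity $\llargesttilde = s_N^{-1}\ell^{>}_{s_N N}$; the probabilistic work is thus outsourced. You instead give a self-contained first-moment/covering argument: tile $\mathds R$ with blocks of length $a_N/2$, observe that a gap longer than $a_N$ forces some full block inside $\Lambda_N$ to be empty, union-bound over the $O(L_N/a_N)$ admissible blocks, and sum in $N$. The half-length tiling is exactly what produces the exponent $\kappa/2$ and hence the threshold $\kappa>4$, matching the paper. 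Your version has the merit of being elementary and internal, and it even clarifies that the hypothesis $1/N\ll\nu_N$ is not really needed for this particular bound (it is only there to exclude the vacuous regime $a_N\ge L_N$). One small point of precision: when you say the admissible index $k$ lies in an interval \emph{of length $1$}, you should say \emph{of length greater than $1$} --- an open interval of length exactly $1$ need not contain an integer; it is the strict inequality $\llargesttilde>a_N$ that guarantees the integer and makes the covering step airtight. This is a phrasing issue only, not a gap.
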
 
This has been proved in~[Theorem 6.2,\cite{SeiYngZag12}] for $\nu_N \equiv const.$ and can be extended to variable $\nu_N$ by scaling. More precisely, for any $\kappa > 4$ and almost any $\omega \in \Omega$ there exists an $\widetilde N \in \mathds N$ such that for any $N \ge \widetilde N$ it is $\llargest = \max\{ |(x_j(\omega), x_{j+1}(\omega)) \cap \Lambda_N :  j \in \mathds Z\} \le \kappa \nu^{-1} \ln(N)$ and therefore
\begin{align}
\llargesttilde & = \max\{ |\widetilde W_j | : j \in \mathds Z \} = \max\{   |(s_N^{-1} x_j(\omega), s_N^{-1} x_{j+1}(\omega)) \cap \Lambda_N| : j \in \mathds Z \} \\
& = s_N^{-1}  \max\{ |(x_j(\omega), x_{j+1}(\omega)) \cap s_N \Lambda_N| : j \in \mathds Z \} = s_N^{-1} \ell^{1,>}_{s_N N}  \le \kappa \nu_N^{-1} \ln(N) \ . 
\end{align}
Finally, we establish the main theorem of this section which plays a central role in proving BEC in the following section. For its proof we will use Theorem~\ref{thm 
	ConvergenceParticleNumber almost surely} which itself is proved in the appendix. 
\begin{theorem} \label{theorem upper bound ELS}
	Let $[\ln (N)]^4 / N^{1-2\eta} \ll \nu_N \lesssim 1$  and $g_N \ll 
\nu_N^2 N^{-\eta} [\ln (N)]^{-2}$ with an $0 < \eta \le 1/3$ be given. Then, with 
$K(\eta) := 10 \cdot 5^2 \cdot \pi^2 \eta^{-2} + 1$ and $c_1(\eta) := 90 
E^{\mathrm{GP}}(1, 1, K(\eta)) \eta^{-2}$ one obtains
	\begin{align} \label{asymptotic 1 proof M_j N_j as}
	\dfrac{1}{N} E_{\mathrm{LS}}^{\mathrm{QM}} (N, \Lambda_N, g_N) \le c_1(\eta) \dfrac{\nu_N^2}{( \ln (N))^{2}}
	\end{align}
	which holds for all but finitely many $N \in \mathds N$ almost surely.
\end{theorem}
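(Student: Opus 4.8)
The plan is to obtain an upper bound on $E_{\mathrm{LS}}^{\mathrm{QM}}(N,\Lambda_N,g_N)$ by testing the variational infimum against the specific general admissible sequence $\{M_j\}_{j\in\mathbb Z}$ defined in~\eqref{Besetzungszahlen}, namely $M_j = \lceil N_{\gN,\muN}(\tilde l_j)\rceil$ on the ``large'' scaled intervals and $M_j = 0$ otherwise. Since $\sum_j M_j \ge N$ holds with probability tending to one (this is precisely the content of Theorem~\ref{thm ConvergenceParticleNumber almost surely}, which guarantees $\sum_j M_j \ge N$ for all but finitely many $N$ almost surely, using the choice~\eqref{determinging forumla for mu} of $\muN$), the third characterization of $E_{\mathrm{LS}}^{\mathrm{QM}}$ in~\eqref{E0QM Mj ge N} applies, so I would start from
\begin{equation}
E_{\mathrm{LS}}^{\mathrm{QM}}(N,\Lambda_N,g_N) \le \sum_{j\in\mathbb Z} E_0^{\mathrm{QM}}(M_j, \tilde l_j, g_N) \le \sum_{j\in\mathbb Z} M_j\, E^{\mathrm{GP}}(1,\tilde l_j, M_j g_N)\,,
\end{equation}
where the second inequality is the upper bound from Theorem~\ref{thm energy bound} (whose hypothesis $[(\Mlargest)^{1/3}\llargesttilde \gN]^{1/2}<\epsilon/\max\{\sqrt2 c,\tilde c\}$ I must check is satisfied asymptotically almost surely; this follows from Lemma~\ref{remark largest interval} bounding $\llargesttilde\lesssim \nu_N^{-1}\ln N$, from the bound~\eqref{inequality for Ngmu(l)} giving $\Mlargest \lesssim (\muN\llargesttilde{}^2)/(\llargesttilde \gN)$, and from the assumed smallness $g_N\ll\nu_N^2 N^{-\eta}[\ln N]^{-2}$).

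Next I would make the right-hand side explicit. By the scaling identity~\eqref{EGP scaling}, $M_j E^{\mathrm{GP}}(1,\tilde l_j,M_j g_N) = (M_j/\tilde l_j^2)\, e^{\mathrm{GP}}(M_j \tilde l_j g_N)$. Since $e^{\mathrm{GP}}$ is increasing with $e^{\mathrm{GP}}(\kappa)\le \pi^2 + (3/4)\kappa$ (from $e^{\mathrm{GP}}(0)=\pi^2$ and the slope bound $(e^{\mathrm{GP}}(\kappa)-e^{\mathrm{GP}}(0))/\kappa\le 3/4$ quoted after Theorem~\ref{Theorem 2}), and since $M_j g_N \tilde l_j \lesssim 1$ on every relevant interval by the same estimates as above, I get $e^{\mathrm{GP}}(M_j\tilde l_j g_N)\le E^{\mathrm{GP}}(1,1,K(\eta))$ once $M_j\tilde l_j g_N\le K(\eta)$; establishing this uniform bound $M_j\tilde l_j g_N\le K(\eta)$ on the support of $\{M_j\}$ — using $M_j \le 1 + N_{\gN,\muN}(\tilde l_j) \le 1 + (\muN\tilde l_j^2-\pi^2)_+/(\tilde l_j g_N)$, so $M_j \tilde l_j g_N \le \tilde l_j g_N + (\muN \tilde l_j^2 - \pi^2)_+ \lesssim \muN \llargesttilde{}^2 \lesssim \pi^2(\ln N)^2\eta^{-2}\cdot(\text{const})$ after inserting $\muN = \pi^2\nu_N^2/[\ln\lambdaN]^2$ from~\eqref{Zusammenhang g lambda} and $|\ln\lambdaN| = \pi\nu_N/\sqrt{\muN}$ together with the large-interval bound — is the computational heart that pins down the constant $K(\eta)$. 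With that in hand,
\begin{equation}
E_{\mathrm{LS}}^{\mathrm{QM}}(N,\Lambda_N,g_N) \le E^{\mathrm{GP}}(1,1,K(\eta)) \sum_{j\in\mathbb Z} \frac{M_j}{\tilde l_j^2}\,,
\end{equation}
so everything reduces to bounding $\sum_j M_j/\tilde l_j^2$.

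For that sum I would again use~\eqref{inequality for Ngmu(l)} to write $M_j/\tilde l_j^2 \le 1/\tilde l_j^2 + \muN/(\tilde l_j g_N) \le \muN/(\tilde l_j g_N) + (\text{lower order})$ on the support, and then recognize $\sum_j N_{\gN,\muN}(\tilde l_j)/\tilde l_j^2$ (or its upper bound) as an empirical average over the $\tilde k_N$ scaled intervals. The relevant law of large numbers / concentration — that this empirical sum is, up to constants, $\tilde k_N$ times the corresponding expectation with respect to the exponential length distribution with parameter $\nu_N$ — is again supplied by (the proof technique behind) Theorem~\ref{thm ConvergenceParticleNumber almost surely}; combined with $\tilde k_N \sim L_N\nu_N = N\rho^{-1}\nu_N$ and the expectation estimate $\E[N_{\gN,\muN}(s_N^{-1}\cdot)] \in [\tfrac23\rho\nu_N^{-1}, 2\rho\nu_N^{-1}]$ from~\eqref{inequality for E NgNmuN}, together with $\muN/g_N = \rho/\lambdaN = \rho\,\mathrm{e}^{\pi\nu_N/\sqrt{\muN}}$ and the self-consistent relation~\eqref{Zusammenhang g lambda} which yields $\muN = \pi^2\nu_N^2/(\ln\lambdaN)^2$ with $\ln\lambdaN \sim -\ln N$ in this parameter regime (this last asymptotic, $|\ln\lambdaN|\sim\ln N$, needs to be extracted from~\eqref{determinging forumla for mu} and the constraint $\zeta_N\xi_N\rho\nu_N^{-1} = \E^{\nu_N}[N_{\gN,\muN}]$), one divides by $N$ and collects constants to reach $c_1(\eta)\nu_N^2/(\ln N)^2$.

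\textbf{Main obstacle.} The genuine difficulty is not any single estimate but the bookkeeping of the self-consistent scale $\muN$: one must show that in the regime $[\ln N]^4/N^{1-2\eta}\ll\nu_N\lesssim 1$, $g_N\ll\nu_N^2 N^{-\eta}[\ln N]^{-2}$, the defining equation~\eqref{determinging forumla for mu} forces $\sqrt{\muN}\asymp \pi\nu_N/\ln N$ (equivalently $\lambdaN \asymp N^{-1}$ up to the $\zeta_N,\xi_N,\rho$ factors), because \emph{every} power of $\ln N$ and every $\eta$-dependence in the final constant $K(\eta)$ and $c_1(\eta)$ is traceable to $\muN\llargesttilde{}^2\asymp \pi^2(\ln N)^2/\eta^2\cdot\kappa^2$ with $\kappa>4$ from Lemma~\ref{remark largest interval} — so the explicit constants $K(\eta)=10\cdot5^2\pi^2\eta^{-2}+1$ and $c_1(\eta)=90\,E^{\mathrm{GP}}(1,1,K(\eta))\eta^{-2}$ emerge only after one carefully tracks $\kappa\to 5$ (say) and the factor-of-$2$ losses in~\eqref{inequality for Ngmu(l)}, \eqref{inequality for E NgNmuN}, and the concentration step. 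The almost-sure ``all but finitely many $N$'' qualifier is handled once and for all by invoking Lemma~\ref{remark largest interval} and Theorem~\ref{thm ConvergenceParticleNumber almost surely}, both of which already carry that conclusion.
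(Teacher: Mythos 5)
Your overall strategy—test the variational infimum against a trial occupation sequence built from $N_{g,\mu}(\tilde l_j)$, then bound each $E^{\mathrm{GP}}(1,\tilde l_j,M_j g_N)$ by $E^{\mathrm{GP}}(1,1,K(\eta))/\tilde l_j^2$ and sum using a law of large numbers—is the right shape, but two of your central claims are false as stated, and both are patched in the paper by maneuvers you omit.

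First, you assert that Theorem~\ref{thm ConvergenceParticleNumber almost surely} gives $\sum_j M_j \ge N$ eventually almost surely, so that the third characterization in~\eqref{E0QM Mj ge N} applies directly to $\{M_j\}$. This is not what the theorem says: it only yields $\liminf_{N\to\infty} N^{-1}\sum_j M_j \ge 2/9$ and $\limsup \le 6$. The lower bound $2/9 < 1$ is not sufficient; $\{M_j\}$ need not be admissible in the required sense. The paper's proof fixes this by introducing the rescaling $\hat n_N := N/\sum_j \hat M_j$ (so that $\hat n_N\in[1/9,9]$ eventually a.s.) and testing against $\lceil \hat n_N \hat M_j\rceil$, which genuinely satisfies $\sum_j\lceil\hat n_N\hat M_j\rceil\ge N$. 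The factor $\lceil\hat n_N\hat M_j\rceil\le 10\hat M_j$ then contributes the factor $10$ in $K(\eta)$ and part of the $90$ in $c_1(\eta)$, so this rescaling is not cosmetic: the stated constants are unobtainable without it.

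Second, you work with the chemical potential $\mu_N$ and $\lambda_N$ associated to the actual $g_N$ via~\eqref{determinging forumla for mu} and~\eqref{Zusammenhang g lambda}, and claim that the regime hypotheses force $\sqrt{\mu_N}\asymp\pi\nu_N/\ln N$, i.e.\ $|\ln\lambda_N|\asymp\ln N$. This does not follow: the hypothesis on $g_N$ is only an upper bound $g_N\ll\nu_N^2 N^{-\eta}[\ln N]^{-2}$, so $g_N$ could tend to zero arbitrarily fast, in which case $\lambda_N$ is correspondingly tiny, $|\ln\lambda_N|\gg\ln N$, and in particular the hypothesis $\lambda_N\gg\ln(N)\ln(\nu_N N)(\nu_N N)^{-1/2}$ of Theorem~\ref{thm ConvergenceParticleNumber almost surely}---which you need for the concentration step---fails. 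The paper sidesteps this by introducing a fixed auxiliary interaction $\hat g_N := \rho^{-1}\pi^2\nu_N^2\eta^{-2}[\ln N]^{-2}N^{-\eta}$, for which $\hat\lambda_N=N^{-\eta}$ is explicit and does satisfy the hypothesis, and then uses only the eventual inequality $g_N\le\hat g_N$ when bounding $\lceil\hat n_N\hat M_j\rceil\,\tilde l_j\,g_N\le 10\hat M_j\tilde l_j\hat g_N\le K(\eta)$. Without this decoupling of the trial sequence (built from $\hat g_N$) from the energy (evaluated at $g_N$), your bookkeeping of $\mu_N\llargesttilde{}^2$ has no basis.
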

\begin{proof}
	We introduce the new interaction strength $\hat g_N:= \rho^{-1} 
[\pi^2 \nu_N^2 \eta^{-2} \ln (N)^{-2}] N^{-\eta}$ and using relation~\eqref{Zusammenhang g lambda} one obtains $\hat{\mu}_N = \pi^2 
\nu_N^2 \eta^{-2} (\ln (N))^{-2}$. Note that the hat characterizes the corresponding quantities associated with the pair $(\hat g_N,\nu_N)$. By assumption $g_N \le \hat g_N$ for all but finitely many $N \in \mathds N$ and $\hat{\lambda}_N = N^{-\eta}$. Furthermore, we choose
	$\hat M_j$ as in \eqref{Besetzungszahlen}.

	Since $\hat \lambda_N = N^{-\eta} \gg \ln (N) \ln (\nu_N N) (\nu_N N)^{-1/2}$ 
it is $1/9 \le N^{-1} \sum_{j \in \mathds Z} \hat M_j \le 9$ for all but finitely 
many $N \in \mathds N$ almost surely by Theorem~\ref{thm 
ConvergenceParticleNumber almost surely}.
	
	Defining $\widehat n_N := N / \sum_{j \in \mathds Z} \hat  M_j$ one clearly has $\sum_{j \in \mathds Z} \left\lceil \widehat n_N \hat M_j \right\rceil \ge N$ for all but finitely many $N \in \mathds N$ almost surely. Note that $\left\lceil \widehat n_N \hat M_j \right\rceil$ is a general admissible sequence. By Theorem~\ref{thm energy bound}, because $\hat M_j > 0$ if and only if $\tilde l_j > \pi / \sqrt{\hat \mu_N}$ and with \eqref{E0QM Mj ge N},
	\begin{align} 
	\dfrac{1}{N} E_{\mathrm{LS}}^{\mathrm{QM}}(N, \Lambda_N, g_N) & \le \dfrac{1}{N} \sum\limits_{j \in \mathds Z} E_0^{\mathrm{QM}} \left( \left\lceil \widehat n_N \hat M_j \right\rceil, \tilde l_j, g_N \right) \\
	& \le \dfrac{1}{N} \sum\limits_{j \in \mathds Z \, : \, \tilde l_j > \pi / \sqrt{\hat \mu_N}} \dfrac{\left\lceil \widehat n_N \hat M_j \right\rceil}{\tilde l_j^2} E^{\mathrm{GP}} \left( 1, 1, \left\lceil \widehat n_N \hat M_j \right\rceil \tilde l_j g_N \right) \\
	& \le \dfrac{\mu_N}{\pi^2} \dfrac{1}{N} \sum\limits_{j \in \mathds Z \, : \, \tilde l_j > \pi / \sqrt{\hat \mu_N}} \left\lceil \widehat n_N \hat M_j \right\rceil E^{\mathrm{GP}} \left( 1, 1, \left\lceil \widehat n_N \hat M_j \right\rceil \tilde l_j g_N \right)
	\end{align}
	for all but finitely many $N \in \mathds N$ almost surely. 
	
	Finally, we have $\lceil \widehat n_N \hat M_j \rceil \le ( 9\hat M_j 
+ 1 ) \le 10 \hat M_j$ and hence $\lceil \widehat n_N \hat M_j \rceil \tilde 
l_j g_N \le 10 \hat M_j \tilde l_j \hat g_N \le 10 \hat \mu_N (\llargesttilde)^2 + 
10 \tilde l_j \hat g_N \le 10 \cdot 5^2 \pi^2 \eta^{-2} + 1 = K(\eta)$ 
for any $j \in \mathds Z$. This yields
	\begin{align}
	& \dfrac{\nu_N^2}{\eta^2 [\ln (N)]^2}\dfrac{1}{N} \sum\limits_{j \in \mathds Z \, : \, \tilde l_j > \pi / \sqrt{\hat \mu_N}} \left\lceil \widehat n_N M_j \right\rceil E^{\mathrm{GP}} \left( 1, 1, \left\lceil \widehat n_N M_j \right\rceil \tilde l_j g_N \right) \\
	\le \, & \dfrac{c_1(\eta)}{9} \dfrac{\nu_N^2}{[\ln (N)]^2} \dfrac{1}{N} 
\sum\limits_{j \in \mathds Z} \hat M_j \le c_1(\eta) \dfrac{\nu_N^2}{[\ln (N)]^2}
	\end{align}
	for all but finitely many $N \in \mathds N$ almost surely.
\end{proof}
%

%
\section{Main results}
Generalized BEC in non-interacting Bose gases is said to occur (almost surely) if, for the sequence $(\phi_N^j)_{j \in \mathds Z}$ of eigenstates with respective occupation numbers $n_{\phi_N^j}$ (see Definition~\ref{definition occupation number}) and energies $E_N^j \geq 0$ one has that
\begin{equation}
\lim\limits_{\epsilon \searrow 0} \limsup\limits_{N \to \infty} \dfrac{1}{N} \sum\limits_{j \in \mathds Z : E_N^j \le \epsilon} n_{\phi_N^j}
\end{equation}
is (almost surely) larger than zero~\cite{van1982generalized}. By analogy, we say that \textit{generalized BEC} occurs almost surely in our model, the Luttinger--Sy model with interaction, if
\begin{equation} \label{definition generalized BEC}
\rho_0  := \lim\limits_{\epsilon \searrow 0} \limsup\limits_{N \to \infty} \dfrac{1}{N} \sum\limits_{j \in \mathds Z : E^{\mathrm{GP}}(1, \tilde l_j, N_j g_N) \le \epsilon} n_{N_j} > 0
\end{equation}
is almost surely larger than zero. Here $n_{N_j}$ is the number of particles occupying the single-particle state $\phi_{\tilde l_j, N_j g_N}^{\mathrm{GP}}$, see Definition~\eqref{definition occupation number}. Consequently, we refer to the case
\begin{equation}
\lim\limits_{\epsilon \searrow 0} \lim\limits_{N \to \infty} \dfrac{1}{N} \sum\limits_{j \in \mathds Z : E^{\mathrm{GP}}(1, \tilde l_j, N_j g_N) \le \epsilon} n_{N_j} = 1
\end{equation}
almost surely as almost sure \textit{complete} generalized BEC.

In general, a single-particle state $\phi$ with respective occupation number $n_{\phi}$ is called almost surely \textit{macroscopically occupied} if almost surely $\limsup_{N \to \infty} n_{\phi} / N > 0$. In this paper we refer to almost sure type-I (type-II) BEC if finitely (infinitely) many states $\phi_{\tilde l_j, N_j g_N}^{\mathrm{GP}}$ are almost surely macroscopically occupied. If for almost any $\omega \in \Omega$ one has $\rho_0 > 0$ without any state $\phi_{\tilde l_j, N_j g_N}^{\mathrm{GP}}$ being macroscopically occupied we speak of almost sure type-III BEC.
\begin{theorem}\label{Theorem generalized BEC}[Generalized BEC]
	Assume that $[\ln (N)]^4 / N^{1-2\eta} \ll \nu_N \lesssim 1$  and $g_N \ll \nu_N^2 N^{-\eta} [\ln (N)]^{-2}$ with $0 < \eta \le 1/3$. Then almost sure 
	complete generalized BEC occurs.
\end{theorem}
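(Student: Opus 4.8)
The plan is to establish complete generalized BEC by combining the energy upper bound from Theorem~\ref{theorem upper bound ELS} with a matching lower bound for $E_{\mathrm{LS}}^{\mathrm{QM}}(N,\Lambda_N,g_N)$ expressed in terms of the mean-field energies and occupation numbers, then feeding in the single-particle occupation estimate of Theorem~\ref{Theorem 2}. Concretely, fix $\epsilon > 0$. Let $\{N_j\}_{j \in \mathds Z}$ be an admissible sequence that (approximately) realizes $E_{\mathrm{LS}}^{\mathrm{QM}}(N,\Lambda_N,g_N)$. First I would check that the smallness hypothesis $\tilde c[(\mathcal M_N^>)^{1/3}\llargesttilde g_N]^{1/2} < 1/2$ needed for Theorems~\ref{thm energy bound} and~\ref{Theorem 2} holds almost surely for large $N$: using Lemma~\ref{remark largest interval} one has $\llargesttilde \lesssim \nu_N^{-1}\ln N$, and the occupation numbers coming from the minimizing sequence are controlled by~\eqref{inequality for Ngmu(l)} applied with $\widetilde\mu$ of the order $\mu_N \sim \nu_N^2(\ln N)^{-2}$, so $\mathcal M_N^> \lesssim (\llargesttilde)^2 \mu_N / (\llargesttilde g_N) \lesssim \ln N / (\nu_N g_N)$; combined with $g_N \ll \nu_N^2 N^{-\eta}(\ln N)^{-2}$ one gets $(\mathcal M_N^>)^{1/3}\llargesttilde g_N \to 0$, with room to spare. (If the minimizing sequence is hard to control directly, I would instead work with the explicit trial sequence $M_j = \lceil N_{\hat g_N,\hat\mu_N}(\tilde l_j)\rceil$ used in the proof of Theorem~\ref{theorem upper bound ELS}, which already satisfies all the bounds.)

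Next, for the lower bound: applying Theorem~\ref{thm energy bound} componentwise gives $E_{\mathrm{LS}}^{\mathrm{QM}}(N,\Lambda_N,g_N) \ge (1-\epsilon')\sum_j N_j E^{\mathrm{GP}}(1,\tilde l_j, N_j g_N)$ for any fixed $\epsilon' > 0$ and large $N$. Splitting the sum over $j$ into the set $A_\epsilon := \{j : E^{\mathrm{GP}}(1,\tilde l_j, N_j g_N) \le \epsilon\}$ and its complement, and bounding the complement from below by $\epsilon \sum_{j \notin A_\epsilon} N_j \ge \epsilon(N - \sum_{j \in A_\epsilon} N_j)$, one obtains
\begin{equation}
\frac{1}{N}E_{\mathrm{LS}}^{\mathrm{QM}}(N,\Lambda_N,g_N) \ge (1-\epsilon')\Big[\epsilon\Big(1 - \frac{1}{N}\sum_{j \in A_\epsilon} N_j\Big) + \frac{1}{N}\sum_{j \in A_\epsilon} N_j E^{\mathrm{GP}}(1,\tilde l_j, N_j g_N)\Big]\,.
\end{equation}
The last term is nonnegative, so comparing with the upper bound $\frac{1}{N}E_{\mathrm{LS}}^{\mathrm{QM}} \le c_1(\eta)\nu_N^2/(\ln N)^2$ from Theorem~\ref{theorem upper bound ELS} and dividing by $\epsilon(1-\epsilon')$ yields $\frac{1}{N}\sum_{j \in A_\epsilon} N_j \ge 1 - c_1(\eta)\nu_N^2/((1-\epsilon')\epsilon(\ln N)^2)$, which tends to $1$ as $N \to \infty$ since $\nu_N \lesssim 1$ forces $\nu_N^2/(\ln N)^2 \to 0$. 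Thus the \emph{particle} fraction in the low-energy band tends to $1$.

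Finally, I would pass from the particle fraction $\sum_{j \in A_\epsilon} N_j/N$ to the \emph{condensate} fraction $\sum_{j \in A_\epsilon} n_{N_j}/N$ using Theorem~\ref{Theorem 2}: for each $j$ with $N_j \ge 1$,
\begin{equation}
0 \le 1 - \frac{n_{N_j}}{N_j} \le \sqrt 7\, c\,\big\{\ln[1 + \mathrm{e}^{-2(\pi^2 + 3 N_j \tilde l_j g_N)^{1/2}}]\big\}^{-1} N_j^{2/3}\tilde l_j g_N\,.
\end{equation}
Since $N_j \tilde l_j g_N \le \mathcal M_N^> \llargesttilde g_N \lesssim \ln N/\nu_N \cdot \nu_N^{-1}\ln N \cdot \nu_N^2 N^{-\eta}(\ln N)^{-2} = N^{-\eta} \to 0$, the exponential is bounded below by a constant and the $\ln$-factor is $O(1)$; hence $N_j - n_{N_j} \lesssim N_j^{2/3}\tilde l_j g_N \le (\mathcal M_N^>)^{2/3}\llargesttilde g_N \cdot (N_j/\mathcal M_N^>)$, and summing over $j \in A_\epsilon$ (with $\sum_j N_j \le 9N$ or so, from the admissibility control in the appendix) gives $\sum_{j\in A_\epsilon}(N_j - n_{N_j}) \lesssim (\mathcal M_N^>)^{2/3}\llargesttilde g_N \cdot N = o(N)$. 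Therefore $\frac{1}{N}\sum_{j \in A_\epsilon} n_{N_j} \ge \frac{1}{N}\sum_{j\in A_\epsilon} N_j - o(1) \to 1$, and taking $\limsup_{N\to\infty}$ then $\epsilon \searrow 0$ (noting the bound is uniform in $\epsilon$, so the limit in $\epsilon$ is trivial) gives the claimed complete generalized BEC. The main obstacle is the bookkeeping of the almost-sure control on $\mathcal M_N^>$, $\llargesttilde$, and the total particle count $\sum_j N_j$ for the minimizing sequence simultaneously on one full-measure event — this is where Lemma~\ref{remark largest interval} and the appendix convergence result (Theorem~\ref{thm ConvergenceParticleNumber almost surely}) must be invoked carefully, and it is cleanest to avoid an arbitrary minimizer altogether and argue that \emph{any} sequence with energy below the Theorem~\ref{theorem upper bound ELS} bound already has $\mathcal M_N^>$ and the relevant lengths under control, using the lower bound~\eqref{inequality for Ngmu(l)} run in reverse.
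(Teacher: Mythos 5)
Your overall plan — energy upper bound from Theorem~\ref{theorem upper bound ELS}, lower bound via Theorem~\ref{thm energy bound} split over energy bands, then Theorem~\ref{Theorem 2} to pass from particle fraction to condensate fraction — matches the paper's structure. However, there is a genuine gap at the hinge of the argument, namely the control of the largest occupation number $\Mlargest$ for the \emph{actual} minimizing sequence $\{N_j\}$, and the resolution you hint at does not work.

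You claim $\Mlargest$ is ``controlled by \eqref{inequality for Ngmu(l)} applied with $\widetilde\mu$ of the order $\mu_N$,'' but \eqref{inequality for Ngmu(l)} bounds $N_{g_N,\widetilde\mu}(l)$, the minimizer of the \emph{Legendre-transformed Gross--Pitaevskii} energy, and not the occupation numbers $N_j$ of the quantum ground state of $H(\boldsymbol N,\Lambda_N,g_N)$. There is no a priori reason why the $N_j$ should obey the same bound: the whole point of Theorems~\ref{thm energy bound} and~\ref{Theorem 2} is to control the relation between the QM and GP problems, and those theorems themselves \emph{assume} the very smallness condition on $\Mlargest$ you are trying to establish. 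So the argument is circular. Your fallback, ``run \eqref{inequality for Ngmu(l)} in reverse,'' has the same flaw, and the second fallback (working with the explicit trial sequence $M_j$ from~\eqref{Besetzungszahlen}) does not help either, because the condensate fraction in the definition~\eqref{definition generalized BEC} is built from the actual ground-state occupation numbers $n_{N_j}$, not from those of a trial sequence. Compounding this, your numerics for $\Mlargest$ are internally inconsistent: you first write $\Mlargest\lesssim\ln N/(\nu_N g_N)$, but in the later line that concludes ``$\cdots = N^{-\eta}\to 0$'' you have silently replaced this by $\Mlargest\lesssim\ln N/\nu_N$ (a factor $1/g_N$ smaller), and neither agrees with what \eqref{inequality for Ngmu(l)} would actually give ($\lesssim\nu_N/(g_N\ln N)$), under which $\Mlargest\,\llargesttilde\,g_N$ is merely $O(1)$, not $o(1)$.

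The paper sidesteps all of this with a device you did not find: it introduces the per-interval quantity $\gamma_j := N_j\tilde l_j g_N$ and uses the cutoff $\gamma_j<(\ln N)^{1/2}$. For $j$ in this set, a two-case argument (\eqref{EqProofI}--\eqref{EqProofII}, splitting on whether $N_j$ is bigger or smaller than $(\ln N)^{3/4}$ up to constants) verifies the smallness hypothesis of Lemma~\ref{lemma 1} \emph{per interval}, with no global bound on $\Mlargest$ needed. For $j$ outside the cutoff, the particle fraction is shown to be negligible by an energy argument against Theorem~\ref{theorem upper bound ELS}, exactly in the spirit of your ``$A_\epsilon$-complement'' step but with the cutoff doing the work of making the lower bound \eqref{bound EGP} applicable. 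A separate splitting into $\gamma_j\le(\ln N)^{-1}$ and $(\ln N)^{-1}\le\gamma_j<(\ln N)^{1/2}$ is then used to get the $c_3/\ln N$ bound in~\eqref{third step gen BEC}. To repair your argument you should replace the attempt to bound $\Mlargest$ globally by this local cutoff structure; once that is in place, your two applications of Theorem~\ref{theorem upper bound ELS} (one for the high-energy band, one for the large-$\gamma_j$ intervals) and the final combination are essentially the paper's.
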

\begin{proof}
	For any $N \in \mathds N$ let $\{N_j\}_{j \in \mathds{Z}}$ be a sequence of occupation numbers of the intervals with respect to the ground state of the Luttinger--Sy model, i.e. one has $E_{\mathrm{LS}}^{\mathrm{QM}}(N, \Lambda_N, g_N) = \sum_{j \in \mathds Z} E_0^{\mathrm{QM}}(N_j, \tilde l_j, g_N)$. We will show that
	\begin{align}
	\rho_0 = \lim\limits_{\epsilon \searrow 0} \liminf\limits_{N \to \infty} \dfrac{1}{N} \sum_{j \in \mathds Z : E^{\mathrm{GP}}(1, \tilde l_j, N_j g_N) \le \epsilon} n_{N_j}  =  1
	\end{align}
which then proves the statement.
	
	 Recall that according to Lemma~\ref{remark largest interval}, there exists a set $\widehat \Omega \subset \Omega$ with $\mathds{P}(\widehat \Omega) = 1$ and the following property: for any $\omega \in \widehat \Omega$ there exists an $\widetilde N_1(\omega) \in \mathds N$ such that for any $N \ge \widetilde N_1(\omega)$ one has $\llargesttilde \le 5 \nu_N^{-1} \ln (N)$. Moreover, for any $\omega \in \Omega$ we define $\widehat N_1(\omega)$ such that $\widehat N_1(\omega) \ge \widetilde N_1(\omega)$ and $g_N \le \nu_N^2 N^{-\eta} [\ln (N)]^{-2}$ for any $N \ge \widehat N_1(\omega)$.
	
	In a first step we show that for any $\omega \in \widehat \Omega$, any $j \in \mathds Z$ as well as any  $N > \widehat N_2(\omega) := \max\{3,\widehat N_1(\omega), (40(\tilde{c}+\sqrt{2}c)^{8/3} \| \nu_N \|_{\infty} )^{1/ \eta}\}$ with $\|\nu_N\|_{\infty} = \max\{ \nu_N : N \in \mathds N\}$ the inequality \linebreak $\gamma_j := N_j \tilde l_j g_N < [\ln (N)]^{1/2}$ implies $N_j^{1/3} \tilde l_j g_N < 1/(4(\tilde{c}+\sqrt{2}c)^2)$, $\tilde{c}$ and $c$ from Lemma~\ref{lemma 1} : To do this let $\omega \in \widehat \Omega$ and $N > \widehat N_2(\omega)$ be given. If $N_j \ge 4^{3/2}\tilde{c}^2 (\ln (N))^{3/4}$ then
	\begin{align}\label{EqProofI}
	N_j^{1/3} \tilde l_j g_N = N_j^{- 2/3}N_j \tilde l_j g_N < N_j^{- 2/3} [\ln (N)]^{1/2} \le \dfrac{1}{4(\tilde{c}+\sqrt{2}c)^2} \ .
	\end{align}
	On the other hand, if $N_j < 4^{3/2}\tilde{c}^2 (\ln (N))^{3/4}$ then
	\begin{align}\label{EqProofII}
	N_j^{1/3} \tilde l_j g_N < 2\tilde{c}^{2/3} (\ln (N))^{1/4} \llargesttilde \nu_N^2 N^{-\eta} [\ln (N)]^{-2} \le \dfrac{1}{4(\tilde{c}+\sqrt{2}c)^2}\ .
	\end{align}
	%
	
	We now prove that
	\begin{align} \label{Njtrue convergence to 1}
	\lim\limits_{N \to \infty} \dfrac{1}{N} \sum\limits_{j \in \mathds Z : \gamma_j < (\ln (N))^{1/2} } N_j = 1
	\end{align}
	almost surely: Suppose there exists a set $\widetilde \Omega \subset \Omega$ with $\mathds{P}(\widetilde \Omega) > 0$ such that for any $\omega \in \widetilde \Omega$ there is a $\widetilde d > 0$ with $\limsup_{N \to \infty} N^{-1} \sum_{j \in \mathds Z :  \gamma_j \ge (\ln (N))^{1/2} } N_j \ge \widetilde d$. Then for any $\omega \in \widetilde \Omega \cap \widehat \Omega$, using Theorem~\ref{thm energy bound} (with $\epsilon = 1/2$), equation \eqref{EGP scaling}, inequality \eqref{bound EGP} while setting $c_2 : = \widetilde d / (8 \cdot 5^{2})$, we obtain
	\begin{align}
	\dfrac{1}{N} E_{\mathrm{LS}}^{\mathrm{QM}} (N, \Lambda_N, g_N) & \ge \dfrac{1}{N} \sum\limits_{j \in \mathds Z : \gamma_j \ge (\ln (N))^{1/2}} E_0^{\mathrm{QM}}(N_j, \tilde l_j, g_{N}) \\
	& \ge \dfrac{1}{{N}} \sum\limits_{j \in \mathds Z : \gamma_j \ge (\ln {N})^{1/2} } \dfrac{1}{2} N_j \dfrac{N_j g_N}{2 \tilde l_j} \ge c_2 \dfrac{\nu_N^2}{(\ln (N))^{3/2}} 
	\end{align}
	for infinitely many $N \in \mathds N$. However, since $\mathds{P}(\widetilde \Omega \cap \widehat \Omega) = \mathds{P}(\widetilde \Omega) > 0$ this is in contradiction with Theorem~\ref{theorem upper bound ELS}.
	
	Next, we prove that for any $\omega \in \widehat \Omega$ and any $N \ge \max\{\widehat N_2(\omega), \mathrm{e}^{1/\pi^2}, (5 \| \nu_N \|_{\infty})^{2/\eta},\widehat{N}_3(\omega)\}$  one has
	\begin{align}\label{third step gen BEC}
	& \dfrac{1}{N} \sum\limits_{j \in \mathds Z : \gamma_j < (\ln (N))^{1/2} } n_{N_j} \ge \left( 1 - \dfrac{c_3}{\ln (N)} \right) \dfrac{1}{N} \sum\limits_{j \in \mathds Z : \gamma_j < (\ln (N))^{1/2}} N_j
	\end{align}
	with $c_3 := \sqrt{7} c \left( 1 + \mathrm{e}^{-2 \pi} \right) \mathrm{e}^{4 \pi}$ and $c > 0$ as in Theorem~\ref{Theorem 2}: By Theorem~\ref{Theorem 2} and  using that $\ln(1 + x) \ge x / (1 + x)$ for $x > - 1$ one infers that if $\gamma_j < (\ln (N))^{1/2}$ and $N_j \ge 1$ for some $j\in \mathbb{Z}$ then
	\begin{align}
	\left( 1 - \dfrac{n_{N_j}}{N_j} \right) & \le \sqrt{7} c \cdot \left\{ \ln \left[  1 + \mathrm{e}^{-2 (\pi^2 + 3 N_j \tilde l_j g_N)^{1/2}} \right] \right\}^{-1} N_j^{2/3} \tilde l_j g_N \\
	& \le \sqrt{7} c \dfrac{1 + \mathrm{e}^{-2 \pi}}{\mathrm{e}^{-2 \sqrt{\pi^2 + 3 \gamma_j}}} N_j^{-1/3} \gamma_j \ .
	\end{align}
	On the one hand, for any $j \in \mathds Z$ with $\gamma_j \le (\ln (N))^{-1}$ it is
	\begin{align}
	\sqrt{7} c \left( 1 + \mathrm{e}^{-2 \pi} \right) \mathrm{e}^{2 \sqrt{\pi^2 + 3 \gamma_j}} N_j^{-1/3} \gamma_j & \le \sqrt{7} c \left( 1 + \mathrm{e}^{-2 \pi} \right) \mathrm{e}^{4 \pi} \dfrac{1}{\ln (N)} = \dfrac{c_3}{\ln (N)} \ .
	\end{align}
	On the other hand, for any $j \in \mathds Z$ with $(\ln (N))^{-1} \le \gamma_j < (\ln (N))^{1/2}$ we have
	\begin{align}
	N_j N^{-\eta/2} \ge N_j \llargesttilde g_N \ge N_j \tilde l_j g_N = \gamma_j \ge \dfrac{1}{\ln (N)} \ ,
	\end{align}
	since $\llargesttilde \le 5 \nu_N^{-1} \ln (N)$ and $N \ge (5 \| \nu_N \|_{\infty})^{2/\eta}$ imply $g_N \le (N^{\eta/2} \llargesttilde)^{-1}$. Thus $N_j \ge N^{\eta/2} / \ln (N)$ which leads to 
	\begin{align}
	& \sqrt{7} c \left( 1 + \mathrm{e}^{-2 \pi} \right) \mathrm{e}^{2 \sqrt{\pi^2 + 3 \gamma_j}} N_j^{-1/3} \gamma_j \le \dfrac{c_3}{\ln (N)}
	\end{align}
	for all $N \geq \widehat{N}_3(\omega)$, $\widehat{N}_3(\omega)$ some constant.
	Hence, 
	we have $( 1 - n_{N_j}/N_j ) \le c_3/\ln (N)$ or, equivalently, $(1 - [c_3/\ln (N)]) N_j \le n_{N_j}$ which implies~\eqref{third step gen BEC}.
	
	Our last step is to show that for any $\epsilon > 0$
	\begin{align}\label{nNj E_j ge epsilon to zero}
	\lim\limits_{N \to \infty} \dfrac{1}{N} \sum\limits_{j \in \mathds Z : \gamma_j < (\ln (N))^{1/2}, E_j \ge \epsilon} n_{N_j} =  0
	\end{align}
	almost surely with $E_j := E^{\mathrm{GP}}(1, \tilde l_j, N_j g_N)$: We assume to the contrary that there exist an $\epsilon > 0$ and a set $\widetilde \Omega \subset \Omega$ with $\mathds{P}(\widetilde \Omega) > 0$ such that for any $\omega \in \widetilde \Omega$ there is a constant $\widetilde r > 0$ with $\limsup_{N \to \infty} N^{-1} \sum_{j \in \mathds Z : \gamma_j < (\ln (N))^{1/2}, E_j > \epsilon} n_{N_j} \ge \tilde r$. Then for any $\omega \in \widetilde \Omega$ one also has
	\begin{align}
	\limsup\limits_{N \to \infty} \dfrac{1}{N} \sum\limits_{j \in \mathds Z : \gamma_j < (\ln (N))^{1/2}, E_j > \epsilon} N_j \ge \limsup\limits_{N \to \infty} \dfrac{1}{N} \sum\limits_{j \in \mathds Z : \gamma_j < (\ln (N))^{1/2}, E_j > \epsilon} n_{N_j} \ge \tilde r 
	\end{align}
	since $N_j \ge n_{N_j}$ for any $j \in \mathds Z$, $N \in \mathds N$. Hence, with Theorem~\ref{thm energy bound} ($\epsilon = 1/2$), equation \eqref{EGP scaling} and inequality \eqref{bound EGP},
	\begin{align}\label{EqProofIII}
	\dfrac{1}{N} E_{\mathrm{LS}}^{\mathrm{QM}} (N, \Lambda_N, g_N) & \ge \dfrac{1}{N} \sum\limits_{j \in \mathds Z : \gamma_j < (\ln (N))^{1/2}, E_j > \epsilon} E_0^{\mathrm{QM}}(N_j, \tilde l_j, g_N) \\
	& \ge \dfrac{1}{N} \sum\limits_{j \in \mathds Z : \gamma_j < (\ln (N))^{1/2}, E_j > \epsilon} \dfrac{1}{2} E^{\mathrm{GP}}(N_j, \tilde l_j, g_N) \\
	& \ge \dfrac{\epsilon}{2} \dfrac{1}{N} \sum\limits_{j \in \mathds Z : \gamma_j < (\ln (N))^{1/2}, E_j > \epsilon} N_j \ge \dfrac{\epsilon}{2} \tilde r
	\end{align}
	for infinitely many $N \in \mathds N$ which is again a contradiction to Theorem~\ref{theorem upper bound ELS}. Note that the assumptions of Theorem~\ref{thm energy bound} ($\epsilon = 1/2$) are fulfilled according to~\eqref{EqProofI} and~\eqref{EqProofII}.
	
	Altogether we have shown that, using \eqref{nNj E_j ge epsilon to zero}, \eqref{third step gen BEC} and \eqref{Njtrue convergence to 1} respectively,
	\begin{align}
	\lim_{\epsilon \searrow 0} \liminf\limits_{N \to \infty} \dfrac{1}{N} \sum\limits_{j \in \mathds Z : E_j \le \epsilon} n_{N_j} & \ge \lim_{\epsilon \searrow 0} \liminf\limits_{N \to \infty} \dfrac{1}{N} \sum\limits_{j \in \mathds Z : \gamma_j < (\ln (N))^{1/2} , E_j \le \epsilon} n_{N_j} \\
	& = \lim_{\epsilon \searrow 0} \liminf\limits_{N \to \infty} \dfrac{1}{N} \sum\limits_{j \in \mathds Z : \gamma_j < (\ln (N))^{1/2}} n_{N_j} \\
	& \ge \liminf\limits_{N \to \infty} \left( 1 - \dfrac{c_3}{\ln (N)} \right) \dfrac{1}{N}  \sum\limits_{j \in \mathds Z : \gamma_j < (\ln (N))^{1/2} } N_j =  1
	\end{align}
	almost surely.
\end{proof}
\begin{remark} Whereas in \cite{SeiYngZag12} type-I BEC in probability is shown in the regime where $1/N \ll \nuN \ll [\ln (N)]^3 / N$, see Appendix \ref{Appendix Seiringer gN nuN limit}, we are able to allow for $\nu_N \equiv (const.)$ which is mainly due to two reasons: Firstly, we consider BEC in the generalized sense. Having proved now complete BEC in a generalized sense we may
	replace the $\limsup$ in Definition \eqref{definition generalized BEC} by $\lim$. Secondly, instead of the whole system length $\LN$ and particle number $N$, we established Theorem~\ref{thm energy bound} and Theorem~\ref{Theorem 2} containing the occupation numbers $N_j$ and lengths $\tilde{l}_j$ of the individual intervals. This eventually enables us to use stronger interactions than in \cite{SeiYngZag12} in the sense that $\gN$ converges to zero more slowly, see again Appendix \ref{Appendix Seiringer gN nuN limit}.
\end{remark}
\begin{theorem}[Transition of condensation] \label{Theorem transition of condensation} Let $0 < \eta \le 1/3$ and $[\ln (N)]^4 / N^{1-2\eta} \ll \nu_N \lesssim 1$ be given. Then, 1) if  $g_N \equiv 0$ then almost surely exactly one single-particle state is macroscopically occupied and hence complete type-I BEC occurs, 2) if $g_N \ll \nu_N N^{-1} [ \ln (N)]^{-2}$ then BEC is almost surely of type I or II, 3) BEC is of type-III almost surely if $\nuN N^{-1}  [ \ln (N)]^{-1} \ll g_N \ll \nuN^{2} N^{-\eta} [\ln (N)]^{-2}$.
\end{theorem}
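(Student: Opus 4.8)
The plan is to treat the three regimes separately, in each case using the occupation estimate of Theorem~\ref{Theorem 2} together with the energy upper bound of Theorem~\ref{theorem upper bound ELS} and, when needed, a matching lower bound on $E_{\mathrm{LS}}^{\mathrm{QM}}$. Throughout, $\{N_j\}_{j\in\mathds Z}$ denotes a sequence of occupation numbers realizing the Luttinger--Sy ground state, as in the proof of Theorem~\ref{Theorem generalized BEC}, and we work on the full-measure set $\widehat\Omega$ on which $\llargesttilde \le 5\nu_N^{-1}\ln(N)$ eventually, by Lemma~\ref{remark largest interval}. For part~1), when $g_N\equiv 0$ the Gross--Pitaevskii functional is purely kinetic, so $E^{\mathrm{GP}}(1,l,0)=\pi^2/l^2$ and the ground state puts all particles in the single longest interval $\widetilde W_{j^\star}$: indeed $E^{\mathrm{QM}}_0(N,\tilde l,0)=N\pi^2/\tilde l^2$ is minimized by choosing $\tilde l$ largest. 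Since with $g_N=0$ one has $n_{N_j}=N_j$ trivially (no interaction, the one-particle reduced density matrix of a product of ground states is rank one), $N_{j^\star}=N$ and exactly one state $\phi^{\mathrm{GP}}_{\tilde l_{j^\star},0}$ is macroscopically occupied; moreover the second-longest interval has length at most $\llargesttilde$ as well, but a standard two-interval comparison shows putting even one particle elsewhere strictly raises the energy when $\tilde l_{j^\star}$ is the unique maximizer, which holds almost surely. Hence complete type-I BEC.

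For part~2), assume $g_N\ll \nu_N N^{-1}[\ln(N)]^{-2}$. By Theorem~\ref{Theorem generalized BEC} (whose hypotheses are satisfied since this $g_N$ decays faster than $\nu_N^2 N^{-\eta}[\ln(N)]^{-2}$) complete generalized BEC holds: $\rho_0=1$ almost surely. To upgrade this to type I or II I would show that the condensate cannot be spread over intervals all of which carry a vanishing fraction of the particles. The key computation: on $\widehat\Omega$, for any $j$ we have $N_j\tilde l_j g_N \le N\cdot 5\nu_N^{-1}\ln(N)\cdot g_N \ll 1$ by the assumed decay of $g_N$, so $\gamma_j\to 0$ uniformly in $j$, and Theorem~\ref{Theorem 2} gives $n_{N_j}\ge N_j(1-o(1))$ uniformly. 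Thus $N^{-1}\sum_j N_j \to 1$ as well. Now if no single-particle state were macroscopically occupied, then for every fixed $\epsilon>0$ one would have $\limsup_N N^{-1}\max_j N_j \le \epsilon$; but since $E^{\mathrm{GP}}(1,\tilde l_j,N_j g_N)\le (\mu_N/\pi^2)$-type bounds force the relevant intervals to be among the $O(1/\epsilon)$ longest, a pigeonhole/compactness argument on the finitely many large intervals shows $\sum_j N_j$ restricted to the low-energy band would be bounded by $C(\epsilon)\cdot\epsilon\cdot(\text{number of such intervals})$, contradicting $\rho_0=1$ unless infinitely many (or finitely many) of them are macroscopically occupied. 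So at least one state is macroscopically occupied, i.e. type I or II.

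For part~3), assume $\nu_N N^{-1}[\ln(N)]^{-1}\ll g_N \ll \nu_N^2 N^{-\eta}[\ln(N)]^{-2}$. Again Theorem~\ref{Theorem generalized BEC} gives complete generalized BEC. To rule out any macroscopically occupied state, suppose for contradiction that some $\phi^{\mathrm{GP}}_{\tilde l_{j_N}, N_{j_N} g_N}$ satisfies $\limsup_N N_{j_N}/N \ge d>0$ on a positive-measure set. The interval $\widetilde W_{j_N}$ has length $\tilde l_{j_N}\le \llargesttilde \le 5\nu_N^{-1}\ln(N)$, so the GP energy contributed by those $N_{j_N}\sim dN$ particles is at least, using \eqref{EGP scaling} and \eqref{bound EGP}, of order $N_{j_N}\cdot N_{j_N} g_N/\tilde l_{j_N} \gtrsim (dN)^2 g_N \nu_N/(5\ln(N)) \gg N\nu_N^2/[\ln(N)]^2$ precisely because $g_N\gg \nu_N N^{-1}[\ln(N)]^{-1}$. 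Together with Theorem~\ref{thm energy bound} (lower bound, with $\epsilon=1/2$, whose smallness hypothesis holds by the estimates \eqref{EqProofI}--\eqref{EqProofII} since $g_N$ still decays faster than $\nu_N^2 N^{-\eta}[\ln(N)]^{-2}$) this forces $N^{-1}E_{\mathrm{LS}}^{\mathrm{QM}}(N,\Lambda_N,g_N) \gg \nu_N^2/[\ln(N)]^2$ along a subsequence, contradicting the upper bound of Theorem~\ref{theorem upper bound ELS}. Hence no state is macroscopically occupied, and combined with $\rho_0=1$ this is exactly type-III BEC. The main obstacle is part~3): one must be careful that the lower-bound estimate on the energy cost of a macroscopic clump genuinely beats the $\nu_N^2/[\ln(N)]^2$ upper bound across the \emph{whole} admissible range of $\nu_N$ and all $\eta\in(0,1/3]$, and that the hypotheses of Theorem~\ref{thm energy bound} remain valid for the ground-state occupation numbers $N_j$ (not just the reference numbers $\widehat M_j$); verifying this uniformity, and handling the borderline where a clump might be split among a few of the longest intervals rather than concentrated in one, is where the real work lies.
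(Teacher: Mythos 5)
Your Part 1 matches the paper's quick argument and is fine. Your Part 3 strategy is also the right one (energy lower bound on the largest interval contradicts the $\nu_N^2/[\ln N]^2$ upper bound of Theorem~\ref{theorem upper bound ELS}), \emph{but} your claim that the smallness hypothesis of Theorem~\ref{thm energy bound} "holds by~\eqref{EqProofI}--\eqref{EqProofII} since $g_N$ still decays faster than $\nu_N^2 N^{-\eta}[\ln N]^{-2}$" is false: for the candidate macroscopic clump you have $N_{j_N}\sim dN$, hence $\gamma_{j_N}=N_{j_N}\tilde l_{j_N} g_N$ can be of size $\nu_N N^{1-\eta}/\ln N \to\infty$, so you are precisely \emph{outside} the regime $\gamma_j<[\ln N]^{1/2}$ where~\eqref{EqProofI}--\eqref{EqProofII} apply. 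The paper gets around this by monotonicity of $E_0^{\mathrm{QM}}$ in the coupling: replace $g_N$ by $\widehat g_N:=\min\{g_N,\,\nu_N N^{-1/3}[\ln N]^{-2}\}$ in the lower bound, so that $N^{1/3}\llargesttilde\widehat g_N\le 5/\ln N\to 0$ uniformly while still $\widehat g_N\gg\nu_N N^{-1}[\ln N]^{-1}$. Without this cap your lower bound is not justified.

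The genuine gap is Part 2. Your pigeonhole argument on "the $O(1/\epsilon)$ longest intervals" fails: the number of intervals with $E^{\mathrm{GP}}(1,\tilde l_j,N_j g_N)\le\epsilon$ (roughly, $\tilde l_j\gtrsim\pi/\sqrt{\epsilon}$) grows like $L_N\nu_N\mathrm{e}^{-\nu_N\pi/\sqrt\epsilon}\to\infty$, so there is no fixed finite family to pigeonhole over. Moreover, as written your argument would prove that every complete generalized BEC is of type I or II, which is incompatible with Part 3. The mechanism you are missing is exactly where the hypothesis $g_N\ll\nu_N N^{-1}[\ln N]^{-2}$ enters: when the interaction is this weak, the mean-field cost $(\widehat N^{(2)})^2 g_N / (2\llargesttilde)$ of moving the particles $\widehat N^{(2)}$ currently in the shorter intervals into the longest one is smaller than the kinetic-energy gain $\widehat N^{(2)}\pi^2(\hat l_N^{-2}-(\llargesttilde)^{-2})$, provided the longest interval is separated from all but a \emph{bounded} number of others by a fixed length gap. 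The paper supplies exactly this via Corollary~\ref{Corollary E-Gap nuN} (a statement you never invoke): with high probability there are at most $\lceil 2\nu C_3/(C_1\eta')\rceil+1$ intervals longer than $\hat l_N:=\llargesttilde-\nu_N^{-1}\ln(C_3/(2\mathrm{e}^{\nu/\rho}))$. It then compares the candidate ground-state energy against the energy of packing $\widetilde N_N^>+\widehat N^{(2)}$ particles into the longest interval, derives $\widehat N^{(2)}/N\le\eta$, and concludes that at least a macroscopic fraction resides in one of those finitely many intervals; finally, a probabilistic bootstrap upgrades the resulting positive-probability macroscopic occupation to an almost-sure one. Your proposal contains none of this variational comparison or the length-gap input, so as written it does not establish Part 2.
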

\begin{proof}
	Assume that $\{N_j\}_{j \in \mathds Z}$ are occupation numbers of the intervals with respect to the ground state of the Luttinger--Sy model, i.e. $E_{\mathrm{LS}}^{\mathrm{QM}}(N, \Lambda_N, g_N) = \sum_{j \in \mathds Z} E_0^{\mathrm{QM}}(N_j, \tilde l_j, g_N)$ for any $N \in \mathds N$. As before, $n_{N_j}$ denotes the number of particles occupying the single-particle state $\phi_{\tilde l_j, N_j g_N}^{\mathrm{GP}}$. 
	
	The first part of the theorem regarding the case $g_N \equiv 0$ follows readily since almost surely there is only one largest interval and, since the temperature is zero, all particles occupy the ground state corresponding to this length.
	
	Next, we treat the case $g_N \ll \nu_N N^{-1} [ \ln (N)]^{-2}$: According to Corollary~\ref{Corollary E-Gap nuN} there exists, for any
	$\eta' > 0$ and any $C_3 > 2\mathrm{e}^{\nu/\rho}$, an $\widetilde N(\eta') \in \mathds N$ such that for $N \ge \widetilde N(\eta')$ it is
	\begin{equation}\begin{split}
	\mathds{P} (\Omega_ 1)  > 1 - \eta'
	\end{split}
	\end{equation}
	with 
	\begin{align*}
	\Omega_ 1:=\left\{\omega \in \Omega:\llargesttilde \ge (1/2) \nu_N^{-1} \ln(\nu_N N) \, , \,  \llargesttilde - \tilde l^{>,\left\lceil 2\nu C_3 / (C_1 \eta') \right\rceil + 1}_N > \nu_N^{-1} \ln(C_3/(2\mathrm{e}^{\nu/\rho}))    \right\}
	\end{align*}
	and $C_1 := - \nu / [4 \ln(\eta' / 2)]$. Also, $\tilde l^{>,\left\lceil 2\nu C_3 / (C_1 \eta') \right\rceil + 1}_N$ is the $(\left\lceil 2\nu C_3 /( C_1 \eta') \right\rceil + 1)$th largest length of $\{\tilde l_j\}_{j \in \mathds Z}$. Note that if $\omega \in \Omega_1$ then there are at most $(\left\lceil 2\nu C_3 / (C_1 \eta') \right\rceil + 1)$ many intervals that have a length larger than $\hat l_N := \llargesttilde - \nu_N^{-1}  \ln(C_3/(2\mathrm{e}^{\nu/\rho}))$.
	
	For convenience we define $E^{\mathrm{QM}}_0(0, 0 , g) := 0$ for $g \ge 0$. Furthermore, for any $N \in \mathds N$ and $j \in \mathds Z$ we write $\widehat N^{(1)} := \sum_{j \in \mathds Z : \tilde l_j \ge \hat l_N } N_j$ and $\widehat N^{(2)} := \sum_{j \in \mathds Z : \tilde l_j < \hat l_N } N_j$. Let $\widetilde{N}^{>}_N$ denote the particle number in the largest interval.
	
	Now, let $\omega \in \Omega_1$ be given. Then
	\begin{align}
	\begin{split} \label{equation 2 proof type I}
	& E_{\mathrm{LS}}^{\mathrm{QM}}(N,\Lambda_N, g_N) = \sum\limits_{j \in \mathds Z } E^{\mathrm{QM}}_0(N_j, \tilde l_j , g_N) \\
	\ge \, & \sum\limits_{j \in \mathds Z : \tilde l_j \ge \hat l_N, \tilde l_j \neq \llargesttilde } E^{\mathrm{QM}}_0(N_j, \tilde l_j , g_N) + E^{\mathrm{QM}}_0(\widetilde{N}^{>}_N, \llargesttilde , 0) + E^{\mathrm{QM}}_0(\widehat N^{(2)} , \hat l_N , 0) \\
	= \, & \sum\limits_{j \in \mathds Z : \tilde l_j \ge \hat l_N, \tilde l_j \neq \llargesttilde } E^{\mathrm{QM}}_0(N_j, \tilde l_j , g_N) + \widetilde{N}^{>}_N \dfrac{\pi^2}{( \llargesttilde )^2} +  \widehat N^{(2)} \dfrac{\pi^2}{\hat l_N^2} \ .
	\end{split}
	\end{align}
	On the other hand, with $\phi_0$ the ground state of $H(1, 1,0)$, $C := \int_0^1 | \phi_0|^4$, we can employ a simple variational argument to obtain (see also proof of Theorem~\ref{thm energy bound})
	\begin{align}\label{equation proof type I}
	\begin{split}
	E^{\mathrm{QM}}_0(\widetilde{N}^{>}_N + \widehat N^{(2)}, \llargesttilde , g_N) & \le E^{\mathrm{GP}}(\widetilde{N}^{>}_N + \widehat N^{(2)}, \llargesttilde , g_N) \\ 
	& \le \dfrac{(\widetilde{N}^{>}_N + \widehat N^{(2)})}{(\llargesttilde)^2} \mathcal E^{\mathrm{GP}}(1 , 1 , (\widetilde{N}^{>}_N + \widehat N^{(2)} ) \llargesttilde g_N)[\phi_0] \\
	& = \dfrac{(\widetilde{N}^{>}_N + \widehat N^{(2)})}{(\llargesttilde)^2} \pi^2 + \dfrac{(\widetilde{N}^{>}_N + \widehat N^{(2)})^2 g_N}{2\llargesttilde} C \ .
	\end{split}
	\end{align}
	
	According to \eqref{equation largest interval}, $[\llargesttilde - \kappa \nu_N^{-1} \ln (N)]_{+}$ converges almost surely to zero and hence in probability to zero for any $\kappa > 4$. Therefore, for any $\eta' > 0$ there exists an $\widehat N(\eta') \in \mathds N$ such that for any $N \ge \widehat N(\eta')$ it is $\mathds{P}(\Omega_2) := \mathds{P} (\llargesttilde < 5 \nu_N^{-1} \ln (N) ) > 1 - \eta'$.
	
	Let $\epsilon, \eta', \eta > 0$ and $\omega \in \Omega_1 \cap \Omega_2$ be given. Note that $\mathds{P}(\Omega_1 \cap \Omega_2) > 1 - 2 \eta'$ for any $N \ge \max\{ \widetilde N(\eta'), \widehat N(\eta') \}$. Moreover, let $N \ge \max\{ \widetilde N(\eta'), \widehat N(\eta') \}$ be such that
	\begin{align}
	g_N \le \dfrac{2 \pi^2 \eta}{C} \dfrac{\nu_N^{-1} \ln(C_3/(2\mathrm{e}^{\nu/\rho}))}{(5 \nu_N^{-1} \ln (N))^2} \dfrac{1}{N} 
	\end{align}
	and 
	\begin{align}
	E^{\mathrm{GP}}(1,\hat l_N, N g_N) & \le \dfrac{1}{\hat l_N^2} \mathcal E^{\mathrm{GP}}(1,1, N \hat l_N g_N)[\phi_0] = \dfrac{1}{\hat l_N^2} \left( \pi^2 + \dfrac{N \hat l_N g_N}{2} C \right) \le \epsilon
	\end{align}
hold. 

\newpage

Now we assume that $\widehat N^{(2)}/ N > \eta$ which will lead to a contradiction with the above. Since $\widetilde{N}^{>}_N + \widehat N^{(2)} \le N$ we conclude that, for $N \in \mathds N$ large enough, 
	\begin{align}
	\dfrac{N}{\widehat N^{(2)}} \dfrac{(\widetilde{N}^{>}_N + \widehat N^{(2)}) g_N}{2 \pi^2 \llargesttilde} C < \dfrac{1}{2\pi^2 \eta} \dfrac{N g_N}{\llargesttilde} C \le \dfrac{\nu_N^{-1} \ln(C_3/(2\mathrm{e}^{\nu/\rho}))}{(\llargesttilde)^3}  &\le \dfrac{(\llargesttilde)^2 -  (\hat l_N)^2}{\hat l_N^2 (\llargesttilde)^2} \\
	& = \dfrac{1}{\hat l_N^2} - \dfrac{1}{(\llargesttilde)^2}
	\end{align}
	and therefore
	\begin{align}
	\widehat N^{(2)} \dfrac{\pi^2}{(\llargesttilde)^2} + \dfrac{(\widetilde{N}^{>}_N + \widehat N^{(2)})^2 g_{N}}{2 \llargesttilde } C & < \widehat N^{(2)} \dfrac{\pi^2}{\hat l_N^2} \ .
	\end{align}
	Comparing this with \eqref{equation 2 proof type I} and \eqref{equation proof type I} we arrive at a contradiction since $E_{\mathrm{LS}}^{\mathrm{QM}}(N,\Lambda_N, g_N)$ is minimal.
	
	From the assumptions $\lim_{N \to \infty} N \llargesttilde g_N = 0$ almost surely and with Theorem~\eqref{Theorem 2}, $\max_{j \in \mathds Z : N_j > 0} \{ 1 - n_{N_j} /N_j \}$ converges to zero almost surely. Therefore, for any $\epsilon, \eta, \eta'>0$ and $C_3 > 2\mathrm{e}^{\nu/\rho}$ there exists an $\widetilde N \in \mathds N$ such that for any $N \ge \widetilde N$ it is, with $E_j = E^{\mathrm{GP}}(1,\tilde l_j, N_j g_N)$,
	\begin{align}
	& \mathds{P} \left( \left\{ \left| \dfrac{1}{N} \sum_{j \in \mathds Z : \tilde l_j \ge \hat l_N, E_j \le \epsilon} N_j  - 1 \right| \le \eta \right\} \cap \Omega_1 \cap \Omega_2 \right)=\mathds{P} \left(\Omega_1 \cap \Omega_2\right) > 1 - 2 \eta' 
	\end{align}
	and
	\begin{align}
	& \mathds{P} \left( \left\{ \left| \dfrac{1}{N} \sum_{j \in \mathds Z : \tilde l_j \ge \hat l_N, E_j \le \epsilon} \left( N_j-n_{N_j}\right) \right| \le \eta \right\} \cap \Omega_1 \cap \Omega_2 \right) \\
	\ge \, & \mathds{P} \left( \left\{ \max_{j \in \mathds Z : N_j > 0} \left\{1 -  \dfrac{n_{N_j}}{N_j} \right\} \dfrac{1}{N} \sum_{j \in \mathds Z : \tilde l_j \ge \hat l_N, E_j \le \epsilon} N_j \le \eta \right\} \cap \Omega_1 \cap \Omega_2 \right) > 1 - 3
	\eta' \ .
	\end{align}
	Hence, by the previous two inequalities,
	\begin{align}
	& \mathds{P} \left( \dfrac{n_{\Nlargest}}{N} \ge \frac{1}{2}\dfrac{1 - 2 \eta}{\lceil 2\nu C_3 / (C_1 \eta') \rceil + 1} \right) \ge \mathds{P} \left( \max_{j \in \mathds Z : \tilde l_j \ge \hat l_N, E_j \le \epsilon}\left\{\dfrac{n_{N_j}}{N}\right\} \ge \dfrac{1 - 2 \eta}{\lceil 2\nu C_3 / (C_1 \eta') \rceil + 1} \right) \nonumber \\
	\ge \, & \mathds{P} \left( \left\{ \left| \dfrac{1}{N} \sum_{j \in \mathds Z : \tilde l_j \ge \hat l_N, E_j \le \epsilon} n_{N_j}  - 1 \right| \le 2 \eta \right\} \cap \Omega_1 \cap \Omega_2 \right)\label{equation 3 proof type I} \\ 
	\nonumber\\
	\nonumber\\
	\ge \, & \mathds{P} \left( \left\{ \left| \dfrac{1}{N} \sum_{j \in \mathds Z : \tilde l_j \ge \hat l_N, E_j \le \epsilon} \left( n_{N_j}  - N_j \right) \right| \le \eta \right\} \cap \left\{ \left| \dfrac{1}{N} \sum_{j \in \mathds Z : \tilde l_j \ge \hat l_N, E_j \le \epsilon} N_j  - 1 \right| \le \eta \right\} \right.\nonumber \\
	& \qquad \cap \Omega_1 \cap \Omega_2 \Bigg) > 1 - 5 \eta'\nonumber \ .
	\end{align}
	Now, suppose there exists a $0 < C \leq 1$ such that $\mathds{P}(\lim_{N \to \infty} n_{\Nlargest} / N = 0) \ge C > 0$. Then
	\begin{equation}\label{EquationIProofxxx}\begin{split}
	\limsup\limits_{N \to \infty} \mathds{P}\left( \dfrac{n_{\Nlargest}}{N} \ge \dfrac{1}{4}\dfrac{1}{\lceil 12 \nu C_3 / (C_1 C) \rceil + 1} \right) & \le \mathds{P} \left( \limsup\limits_{N \to \infty} \dfrac{n_{\Nlargest}}{N} \ge \dfrac{1}{4} \dfrac{1}{\lceil 12 \nu C_3 / (C_1 C) \rceil + 1} \right) \\
	& \le \mathds{P} \left( \neg \left( \lim\limits_{N \to \infty} \dfrac{n_{\Nlargest}}{N} = 0 \right) \right) \le 1 - C \ .
	\end{split}
	\end{equation}
	However, we have shown above that, with $\eta = 1/4$ and $\eta' = C/6$, there exists an $\widetilde N \in \mathds N$ such that for any $N \ge \widetilde N$ it is
	\begin{align}\label{EquationIIProofxxx}
	\mathds{P} \left( \dfrac{n_{\Nlargest} }{N}  \ge \dfrac{1}{4} \dfrac{1}{\lceil 12\nu C_3 / (C_1 C) \rceil + 1} \right) \ge 1 - \dfrac{5C}{6} \ ,
	\end{align}
	see~\eqref{equation 3 proof type I}. Comparing~\eqref{EquationIIProofxxx} with~\eqref{EquationIProofxxx} one arrives at a contradiction. Hence \linebreak $\mathds{P}(\lim_{N \to \infty} n_{\Nlargest} / N = 0) = 0$ and consequently $\mathds{P}(\limsup_{N \to \infty} n_{\Nlargest} / N > 0) = 1$.
	
	Now we prove the last part of the theorem: We assume to the contrary that there exists a set $\widetilde \Omega \subset \Omega$ with $\mathds{P}(\widetilde \Omega) > 0$ such that for any $\omega \in \widetilde \Omega$ there is a $\widetilde c > 0$ with $\limsup_{N \to \infty} \max_{j \in \mathds Z }\left\{n_{N_j}/N\right\} \ge \tilde c$. 
	
	Since $N_j \ge n_{N_j}$ for any $N \in \mathds N$ and $j\in \mathds Z$ one also has $\limsup_{N \to \infty} \left\{\Nlargest/N\right\} \ge \tilde c$. Thus with Theorem~\ref{thm energy bound} ($\epsilon = 1/2$), inequality \eqref{bound EGP} and $\widehat g_N := \min\{ g_N, \nu_N N^{-1/3} [\ln (N)]^{-2} \}$, for any $\omega \in \widetilde \Omega \cap \widehat \Omega$ (see beginning of proof of Theorem~\ref{Theorem generalized BEC} for definition of the set $\widehat \Omega$) it holds
	\begin{equation}\label{Proof BEC III in prob}\begin{split}
	 \dfrac{1}{N} E_{\mathrm{LS}}^{\mathrm{QM}}(N, \Lambda_N, g_N) &\ge \dfrac{1}{N} E_0^{\mathrm{QM}}(\Nlargest,\llargesttilde, g_N)\\ &\ge \dfrac{1}{N} E_0^{\mathrm{QM}}(\Nlargest,\llargesttilde, \widehat g_N) \\
	& \ge \dfrac{1}{2} \dfrac{1}{N} E^{\mathrm{GP}}(\Nlargest, \llargesttilde, \widehat g_{N})\\  &\ge \dfrac{1}{4} \dfrac{1}{N} \dfrac{\left( \Nlargest \right)^2 \widehat g_N}{\llargesttilde} \ge \dfrac{1}{20} \dfrac{1}{N} \dfrac{\left( (\tilde c/2) N \right)^2 \widehat g_N}{\nu_N^{-1} \ln (N)}  
	\end{split}
	\end{equation}
	for infinitely many $N \in \mathds N$. Note that we inserted the length of the largest interval in the first step which is possible since the energy goes down when increasing the length. 
	
	However, since $\widehat g_N \gg \nu_N (\ln (N))^{-1} N^{-1}$ and $\mathds{P}(\widetilde \Omega \cap \widehat \Omega) = \mathds{P}(\widetilde \Omega) > 0$, this contradicts Theorem~\ref{theorem upper bound ELS}. Therefore almost surely $\lim_{N \to \infty} \max_{j \in \mathds Z }\left\{n_{N_j}/N\right\}= 0$ and hence the statement follows with Theorem~\ref{Theorem generalized BEC}.
\end{proof}

\begin{theorem}[Particle density in largest interval]\label{TheoremParticleDensity}
	Let $[\ln (N)]^4 / N^{1-2\eta} \ll \nu_N \lesssim 1$  and $g_N \ll \nu_N^2 N^{-\eta} [\ln (N)]^{-2}$ with an $0 < \eta \le 1/3$ be given. Moreover, for any $N \in \mathds N$ let $\{N_j\}_{j \in \mathds Z}$ be occupation numbers of the intervals with respect to the ground state of the Luttinger--Sy model, i.e., $E_{\mathrm{LS}}^{\mathrm{QM}}(N, \Lambda_N, g_N) = \sum_{j \in \mathds Z} E_0^{\mathrm{QM}}(N_j, \tilde l_j, g_N)$. 
	
	Then, with $c_1(\eta)$ as in Theorem~\ref{theorem upper bound ELS} one has
	\begin{align}
	\dfrac{\Nlargest}{\llargesttilde} \le \sqrt{ 8 c_1(\eta)} \dfrac{\nu_N^{3/2}}{\ln(N)} \dfrac{1}{[\ln(\nu_N N)]^{1/2}} \left( \min\left\{ g_N, \dfrac{\nu_N}{N^{1/3} [\ln (N)]^{2} } \right\} \right)^{- 1/2} N^{1/2}
	\end{align}
	for all but finitely many $N \in \mathds N$ almost surely.
	Furthermore, almost surely and for all but finitely many $N \in \mathds N$ one has
	\begin{align}
	\dfrac{\Nlargest}{\llargesttilde} \le \sqrt{ 8 c_1(\eta)} \dfrac{\nu_N}{[\ln(\nu_N N)]^{1/2}} N^{3/5 + \delta}
	\end{align}
	for any $\delta > 0$ if $\eta < 1/6$ and $g_N \gg \nu_N N^{-1/6} [\ln (N)]^{-2}$.
\end{theorem}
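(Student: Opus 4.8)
The whole argument reduces to one inequality. Fix $\omega$ in a set of full measure and let $N$ be large; if $\Nlargest=0$ there is nothing to prove, so assume $\Nlargest\ge1$ and pick $j^\ast$ with $N_{j^\ast}=\Nlargest$. Since every summand in $E_{\mathrm{LS}}^{\mathrm{QM}}(N,\Lambda_N,\gN)=\sum_j E_0^{\mathrm{QM}}(N_j,\tilde l_j,\gN)$ is non-negative, since $E_0^{\mathrm{QM}}(n,l,g)$ is non-increasing in $l$ (extend a trial state by zero) and non-decreasing in $g$, and since $\tilde l_{j^\ast}\le\llargesttilde$, for any $\hat g\le\gN$ one gets $E_{\mathrm{LS}}^{\mathrm{QM}}(N,\Lambda_N,\gN)\ge E_0^{\mathrm{QM}}(\Nlargest,\llargesttilde,\hat g)$. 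Applying Theorem~\ref{thm energy bound} with $\epsilon=1/2$ to the general admissible sequence that puts $\Nlargest$ particles into the largest subinterval and none elsewhere (for which $\Mlargest=\Nlargest$ and $\ell_N^{>}=\llargesttilde$) — whose hypothesis reads $(\Nlargest)^{1/3}\llargesttilde\hat g<\bigl(2\max\{\sqrt2\,c,\tilde c\}\bigr)^{-2}$ — together with the scaling identity~\eqref{EGP scaling} and $e^{\mathrm{GP}}(\kappa)\ge\kappa/2$ from~\eqref{bound EGP} yields $E_0^{\mathrm{QM}}(\Nlargest,\llargesttilde,\hat g)\ge\tfrac12 E^{\mathrm{GP}}(\Nlargest,\llargesttilde,\hat g)\ge(\Nlargest)^2\hat g/(4\llargesttilde)$. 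Combined with Theorem~\ref{theorem upper bound ELS} this is the master inequality
\begin{equation*}
\frac{(\Nlargest)^2\,\hat g}{4\,\llargesttilde}\ \le\ c_1(\eta)\,\frac{N\,\nuN^2}{(\ln N)^2}\qquad\text{for all but finitely many }N\text{ a.s.},
\end{equation*}
valid for any (possibly $N$- and $\omega$-dependent) $\hat g\le\gN$ satisfying the smallness condition above. I will also use the two-sided control $\tfrac12\nuN^{-1}\ln(\nuN N)\le\llargesttilde\le5\,\nuN^{-1}\ln N$ for all but finitely many $N$ a.s.: the upper bound is Lemma~\ref{remark largest interval}, and the lower bound follows by Borel--Cantelli (the tail $\mathds P(\llargesttilde<\tfrac12\nuN^{-1}\ln(\nuN N))$ decays like $\exp(-c(\nuN N)^{1/2})$, summable because $\nuN N\gg[\ln N]^4$), exactly as for $\Omega_1$ in the proof of Theorem~\ref{Theorem transition of condensation}.

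For the first displayed bound I take $\hat g=\hat g_N:=\min\{\gN,\ \nuN N^{-1/3}(\ln N)^{-2}\}\le\gN$. The crude bound $\Nlargest\le N$ gives $(\Nlargest)^{1/3}\llargesttilde\hat g_N\le N^{1/3}\cdot5\nuN^{-1}\ln N\cdot\nuN N^{-1/3}(\ln N)^{-2}=5/\ln N$, so the smallness condition holds for large $N$. Rewriting the master inequality as $(\Nlargest/\llargesttilde)^2\le 4c_1(\eta)N\nuN^2/\bigl((\ln N)^2\hat g_N\llargesttilde\bigr)$, inserting $\llargesttilde\ge\tfrac12\nuN^{-1}\ln(\nuN N)$, and taking square roots gives precisely the asserted estimate with $\hat g_N=\min\{\gN,\nuN/(N^{1/3}(\ln N)^2)\}$.

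For the second bound assume in addition $\eta<1/6$, $\gN\gg\nuN N^{-1/6}(\ln N)^{-2}$, and fix $\delta>0$. The fixed choice above only yields $\Nlargest\lesssim N^{2/3}$, so one must feed a bound on $\Nlargest$ back into the effective coupling. Take the self-consistent choice $\hat g=\hat g_N:=\min\{\gN,\ c'\nuN(\Nlargest)^{-1/3}[\ln(\nuN N)]^{-1}\}$ with $c':=\eta/\bigl(20(\max\{\sqrt2\,c,\tilde c\})^2\bigr)$; this is $\le\gN$, and using $\llargesttilde\le5\nuN^{-1}\ln N$ together with $\ln(\nuN N)\ge2\eta\ln N$ (from $\nuN\gg[\ln N]^4N^{-(1-2\eta)}$) one checks $(\Nlargest)^{1/3}\llargesttilde\hat g_N\le 5c'/(2\eta)<\bigl(2\max\{\sqrt2\,c,\tilde c\}\bigr)^{-2}$ regardless of which entry realizes the minimum, so the master inequality applies. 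If the minimum is the second entry it becomes $(\Nlargest)^{5/3}\le(40c_1(\eta)/c')\,N$ (after $\llargesttilde\le5\nuN^{-1}\ln N$ and $\ln(\nuN N)\le2\ln N$), hence $\Nlargest\le C(\eta)N^{3/5}$; if it is $\gN$, then $\gN\le c'\nuN(\Nlargest)^{-1/3}[\ln(\nuN N)]^{-1}$ forces, via $\gN\gg\nuN N^{-1/6}(\ln N)^{-2}$, the stronger $\Nlargest\le C(\eta)N^{1/2}(\ln N)^3$. In both cases $\Nlargest/\llargesttilde\le 2\nuN\Nlargest/\ln(\nuN N)$ by the lower bound on $\llargesttilde$, and absorbing $C(\eta)$ and the surplus logarithms into a factor $N^{\delta}$ for large $N$ gives $\Nlargest/\llargesttilde\le\sqrt{8c_1(\eta)}\,\nuN N^{3/5+\delta}/[\ln(\nuN N)]^{1/2}$.

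The exponent $3/5$ is the fixed point of the recursion $\alpha\mapsto\tfrac12+\tfrac\alpha6$ obtained by iterating ``bound $\Nlargest$, re-choose $\hat g_N$'' from $\alpha_0=1$ ($\alpha_1=2/3,\ \alpha_2=11/18,\dots\downarrow3/5$); the one-shot argument is its limiting instance, and the slack $\delta$ (equivalently, stopping after finitely many iterations) is unavoidable because the logarithmic factors do not close up exactly. The main obstacle is legitimizing this self-referential choice of $\hat g_N$: being $\omega$-dependent, one must verify the hypothesis of Theorem~\ref{thm energy bound} without knowing $\Nlargest$, which succeeds only because the combination $(\Nlargest)^{1/3}\hat g_N\lesssim\nuN/\ln(\nuN N)$ is controlled a priori; the remaining care lies in the case split on the minimum in $\hat g_N$ and in securing the almost-sure two-sided estimate for $\llargesttilde$.
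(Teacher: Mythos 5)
Your argument for the first displayed bound is exactly the paper's: same choice $\hat g_N=\min\{g_N,\nu_N N^{-1/3}[\ln N]^{-2}\}$, same combination of Theorem~\ref{theorem upper bound ELS}, Theorem~\ref{thm energy bound} with $\epsilon=1/2$, \eqref{EGP scaling}, \eqref{bound EGP}, and the two-sided control of $\llargesttilde$ (the lower bound being Theorem~\ref{theorem lower bound largest interval} rather than an ad~hoc Borel--Cantelli computation, but that is cosmetic).

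For the second bound you take a genuinely different route. The paper iterates the master inequality: it defines $\beta_0=1/3$, $\beta_{n+1}=(1+\beta_n)/6$, shows inductively that $(\Nlargest)^{1/3}\llargesttilde\,\hat g_N(\beta_n)\to0$ a.s., and feeds the resulting bound $\Nlargest\lesssim N^{(1+\beta_n)/2}$ back into the smallness hypothesis of Theorem~\ref{thm energy bound} with the next $\beta_{n+1}$; since $\beta_n\downarrow1/5$, stopping once $\beta_n\le1/5+2\delta$ gives the $N^{3/5+\delta}$ bound. You collapse this into a single application by choosing a \emph{self-referential} effective coupling $\hat g_N=\min\{g_N,\,c'\nu_N(\Nlargest)^{-1/3}[\ln(\nu_N N)]^{-1}\}$. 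The key observation that makes this legitimate is that the quantity entering the hypothesis of Theorem~\ref{thm energy bound}, namely $(\Nlargest)^{1/3}\llargesttilde\hat g_N$, is bounded a~priori by $c'\nu_N\llargesttilde/\ln(\nu_N N)\le5c'/(2\eta)$ regardless of $\Nlargest$, so the circularity is harmless. Your case split (minimum equals the second entry vs.\ equals $g_N$) and the arithmetic in each case are correct, and you correctly identify $3/5$ as the fixed point $\alpha=\tfrac12+\tfrac\alpha6$ of the paper's recursion; the $N^\delta$ slack is used exactly as the paper uses it, to absorb constants and residual logarithms. The one-shot argument is shorter and arguably more transparent about why $3/5$ appears, at the cost of having to argue the legitimacy of the $\omega$- and $\Nlargest$-dependent choice of $\hat g_N$; the paper's iteration avoids that issue by always working with a deterministic $\hat g_N(\beta_n)$ once $\Nlargest$ has been bounded in the previous step. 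Both give the same conclusion.
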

\begin{proof}
	We define $\widehat g_N(\beta) := \nu_N N^{- \beta} [ \ln (N)]^{-2}$ for any $\beta > 0$.
	Let $\omega \in \Omega$ and $\widetilde N=\widetilde N(\omega) \in \mathds N$ such that the upper bound \eqref{equation largest interval} for the length of the largest interval with $\kappa = 5$, inequality $N^{1/3} \llargesttilde \widehat g_N(1/3) < (2 \max \{ \sqrt{2} c, \tilde c\})^{-2}$ with constants $c,\tilde c > 0$ from Lemma~\ref{lemma 1}, and inequality \eqref{asymptotic 1 proof M_j N_j as} of Theorem \ref{theorem upper bound ELS} hold for any $N \ge \widetilde N$. Then, with Theorem~\ref{theorem upper bound ELS}, Theorem~\ref{thm energy bound} ($\epsilon = 1/2$), equation \eqref{EGP scaling}, and inequality \eqref{bound EGP} we have
	\begin{align}
	c_1(\eta) \dfrac{\nu_N^2}{[\ln (N)]^2} & \ge \dfrac{1}{N} E_{\mathrm{LS}}^{\mathrm{QM}}(N, \Lambda_N, g_N) \\
	& \ge \dfrac{1}{N} E_0^{\mathrm{QM}}(\Nlargest, \llargesttilde, g_N) \\
	& \ge \dfrac{1}{N} E_0^{\mathrm{QM}} \left( \Nlargest, \llargesttilde, \min\left\{g_N, \widehat g_N(1/3) \right\} \right) \\
	& \ge \dfrac{1}{2} \dfrac{1}{N} E^{\mathrm{GP}} \left(\Nlargest, \llargesttilde, \min\left\{g_N, \widehat g_N(1/3) \right\} \right) \\
	& \ge  \dfrac{1}{4} \dfrac{1}{N} \dfrac{ (\Nlargest)^2}{\llargesttilde} \min\left\{g_N, \widehat g_N(1/3) \right\}
	\end{align}
	and therefore
	\begin{align}
	\Nlargest & \le \sqrt{4 c_1(\eta)} \dfrac{\nu_N}{\ln (N)} \left( \dfrac{\llargesttilde}{\min\left\{ g_N, \widehat g_N(1/3) \right\} } \right)^{1/2} N^{1/2}
	\end{align}
	for any $N \ge \widetilde N$. The first part of the statement then follows taking into account that almost surely $\llargesttilde \ge (1/2) \nu_N^{-1} \ln( \nu_N N)$ for all but finitely many $N \in \mathds N$, see Theorem~\ref{theorem lower bound largest interval}.
	
	Recall that $g_N \gg \nu_N N^{-1/6} [\ln(N)]^{-2}$ by assumption. We then define $\beta_0 := 1/3$ and $\beta_n :=  ( 1 + \beta_{n-1} )/6$ for any $n \in \mathds N$. We now show by induction that for any $n \in \mathds N$, $(\Nlargest)^{1/3} \llargesttilde \widehat g_N (\beta_n)$ converges to zero almost surely: Firstly, one has $\Nlargest \le N$ for any $N \in \mathds N$ and $N^{1/3} \llargesttilde \widehat g_N (\beta_0)$ converges to zero almost surely. Next, we assume that for arbitrary $n \in \mathds N$, $(\Nlargest)^{1/3} \llargesttilde \widehat g_N (\beta_n)$ converges to zero almost surely. Then for almost any $\omega \in \Omega$ there exists an $\widetilde N_n=\widetilde N_n(\omega) \in \mathds N$ such that $\widehat g_N(\beta_n) \le g_N$, $(\Nlargest)^{1/3} \llargesttilde \widehat g_N (\beta_n) < (2 \max \{ \sqrt{2} c, \tilde c\})^{-2}$, and inequality \eqref{asymptotic 1 proof M_j N_j as} of Theorem \ref{theorem upper bound ELS} holds for any $N \ge \widetilde N_n$. We therefore can conclude that
	\begin{align}
	\begin{split} \label{maximal particle density proof 1}
	c_1(\eta) \dfrac{\nu_N^2}{[\ln (N)]^2} & \ge \dfrac{1}{N} E_{\mathrm{LS}}^{\mathrm{QM}}(N, \Lambda_N, g_N) \\ &\ge \dfrac{1}{N} E_0^{\mathrm{QM}}(\Nlargest, \llargesttilde, g_N) \\
	& \ge \dfrac{1}{N} E_0^{\mathrm{QM}}(\Nlargest, \llargesttilde, \widehat g_N(\beta_n))\\ &\ge \dfrac{1}{2} \dfrac{1}{N} E^{\mathrm{GP}}(\Nlargest, \llargesttilde, \widehat g_N(\beta_n)) \ge \dfrac{1}{4} \dfrac{1}{N} \dfrac{ (\Nlargest)^2 \widehat g_N(\beta_n)}{\llargesttilde}
	\end{split}
	\end{align}
	or, equivalently,
	\begin{align}
	\Nlargest \le \sqrt{4 c_1(\eta)} ( \nu_N \llargesttilde )^{1/2} N^{(1 + \beta_n)/2}
	\end{align}
	for any $N \ge \widetilde N_n$. Hence, $(\Nlargest)^{1/3} \llargesttilde \widehat g_N (\beta_{n+1})$ converges to zero almost surely.
	
	Lastly, note that $(\beta_n)_{n \in \mathds N_0}$ converges to $1/5$. For arbitrary $\delta > 0$ we choose an $n \in \mathds N$ such that $\beta_n \le 1/5 + 2\delta$.  Hence, 
	\begin{align}
	\Nlargest \le \sqrt{4 c_1(\eta)} ( \nu_N \llargesttilde )^{1/2} N^{(1 + \beta_{n})/2} \le c_4(\eta) \nu_N N^{(1 + 1/5 + 2 \delta)/2} \llargesttilde [\ln (\nu_N N )]^{-1/2}
	\end{align}
	for all but finitely many $N \in \mathds N$ almost surely.
\end{proof}
Note that Theorem~\ref{TheoremParticleDensity} implies the following: For interactions $g_N \gg \nu_N N^{-1/6} [\ln(N)]^{-2}$, the particle density in the largest interval is almost surely bounded (actually converging to zero) in case of $[\ln(N)]^4/N^{1-2\eta} \ll \nuN \lesssim N^{-3/5 - \delta}$ for any $0 < \eta < 1/6$, $0 < \delta < 2/5 - 2\eta$ and it is asymptotically bounded by $\nu_N N^{3/5 + \delta}$ for any $\delta> 0$ in the case of $[\ln(N)]^4/N^{1-2\eta} \ll \nuN \lesssim 1$. In particular, note that for $[\ln(N)]^4/N^{1-2\eta} \ll \nuN \lesssim N^{-3/5 - \delta}$ the particle density in the largest interval diverges in the non-interacting model, i.e., if $g_N \equiv 0$. Hence, we conclude that the repulsive interaction between the particles is pivotal.

\appendix

%
\section{Notation}\label{Notation}
For two real-valued sequences $(a_N)_{N \in \mathds N}$, $(b_N)_{N \in \mathds N}$ with all elements positive and unequal to zero we write $a_N \sim b_N$ if there exist constants $c,C >0$ such that $c \le a_N/b_N \le C$ for all but finitely many $N \in \mathds N$. We also write $a_N \ll b_N$ if $a_N / b_N$ tends to zero.
We combine these two possibilities through writing $a_N \lesssim b_N$, meaning either $a_N \sim b_N$ or $a_N \ll b_N$.
Moreover, we also write $a_N \sim b_N$ in the case that $a_N = b_N = 0$ for all but finitely many $N \in \mathds N$ to simplify the notation.
\section{On the connection to the results of~\cite{SeiYngZag12} } \label{Appendix Seiringer gN nuN limit} 
We first note that, in contrast to the model discussed in this paper, in~\cite{SeiYngZag12} the unit interval is the fixed one-particle configuration space. However, by an appropriate scaling as discussed in~\cite[Sec.~4.4]{seiringer2012disordered2} and at the end of this section, the results can be translated into each other. 

The Hamiltonian in~\cite{SeiYngZag12} is formally given by 
\begin{equation}
H=\sum_{i=1}^{N}\left(-\partial^2_{z_i} +V_{\omega}(z_i)\right)+\frac{\gamma}{N}\sum_{i < j}\delta(z_i-z_j)
\end{equation}
where 
\begin{equation}
V_{\omega}(z):=\sigma\sum_{j}\delta(z-z^{\omega}_j)
\end{equation}
with $\gamma \geq 0$ the coupling parameter for the interaction among the particles, $\nu$ the density of scatterers $\{z^{\omega}_j\}$, $\sigma$ the strength of the scattering potential (note that $\sigma=\infty$ in our model), and $m$ the number of scatterers in the unit interval. In \cite[Theorem~2.2]{SeiYngZag12} which is subsequently used to prove BEC they established the estimate
\begin{align} \label{Seiringer Beweis BEC}
\left( 1 - \dfrac{N_0}{N} \right) \le (const.) \dfrac{e_0}{e_1 - e_0} N^{-1/3} \min \{ \gamma, \gamma^{1/2}\}
\end{align}
with $N$ the total number of particles and $N_0$ the number of particles occupying the minimizer of the Gross--Pitaevskii functional. In addition, \cite[Lemma~5.1]{SeiYngZag12} provides the lower bound $e_1 - e_0 \ge \eta \ln(1 + \pi \mathrm{e}^{-2\eta})$ with $\eta = \sqrt{\pi^2 + 3 m \sigma + 3 \gamma}$. 

Now, with $\gamma \ge 1$, $e_0 = E^{\text{GP}}(1,1,\gamma) \ge \gamma/2$, and $\ln(1 + x) \le x$ for $x > 0$ it follows
\begin{align}
\dfrac{e_0}{\eta \ln(1 + \pi \mathrm{e}^{-2\eta})} N^{-1/3} \min \{ \gamma, \gamma^{1/2}\} \gtrsim \dfrac{\gamma^{3/2}}{\eta} N^{-1/3} \mathrm{e}^{2\eta} \ .
\end{align}
In the case of $\gamma \ge m \sigma$ it is
\begin{align}
\dfrac{\gamma^{3/2}}{\eta} N^{-1/3} \mathrm{e}^{2\eta} \gtrsim \dfrac{\gamma^{3/2}}{\sqrt{\gamma}}N^{-1/3} \mathrm{e}^{2 \sqrt \gamma}\ .
\end{align}
If $\gamma \ge [\ln (N)]^2$, this converges to infinity and therefore~\eqref{Seiringer Beweis BEC} does not prove BEC. On the other hand, if $\gamma \le m \sigma$ then
\begin{align}
\dfrac{\gamma^{3/2}}{\eta} N^{-1/3}\mathrm{e}^{2 \eta} \gtrsim \dfrac{1}{\sqrt{m \sigma}} N^{-1/3}\mathrm{e}^{2 \sqrt{m \sigma}} \ ,
\end{align}
which, for $m \sigma \ge [\ln (N)]^2$, again converges to infinity. Hence, in order to establish BEC with the estimate~\eqref{Seiringer Beweis BEC} it must hold that $m \sigma \le [\ln (N)]^2$ and therefore $\gamma \le [\ln (N)]^2$. 

Furthermore, the assumptions in~\cite[Theorem~3.1, Lemma~3.2, Lemma~3.3]{SeiYngZag12} are such that $\nu \to \infty$, $\gamma \to \infty$, and $\gamma \gg \nu / (\ln \nu)^2$. Therefore
\begin{align}
[\ln (N)]^2 \ge \gamma \gg \nu / (\ln \nu)^2 \gtrsim \nu^{1 - \epsilon}
\end{align}
for any $\epsilon > 0$. This however implies that $\nu$ must grow slower than $[\ln (N)]^{3}$.

We are now in position to compare with our results: as described in~\cite[Sec.~4.4]{seiringer2012disordered2} the above translates to a density {$\nu_N \ll [\ln (N)]^{3} / N$ when working on the interval \linebreak $(-L_N/2 , L_N/2)$. Furthermore, since the condition $\nu \gg 1$ implies $\nu_N \gg 1/N$ one concludes that $1/N \ll \nu_N \ll \ln (N)^{3} / N$ is a necessary requirement in \cite{SeiYngZag12} to prove BEC. Also, $\gamma = L_N N \gN \le [\ln (N)]^2$  implies the requirement $\gN \lesssim [\ln (N)]^2 / N^2$. 

Hence, comparing with Theorem~\ref{Theorem generalized BEC}, Theorem~\ref{Theorem transition of condensation} and Theorem~\ref{TheoremParticleDensity} we see that we are able to allow for densities $\nu_N$ which converge to zero more slowly or even are constant.
\section{Miscellaneous results}
Let $(\hat \Lambda_N)_{N \in \mathds N}$ be an arbitrary sequence of intervals in $\mathds R$. For any $N \in \mathds N$, we define $\hat L_N := |\hat 
\Lambda_N|$ and $\kappa_N$ as the number of atoms (impurities) of the Poisson random measure 
with intensity $\nu > 0$ within the interval 
$\hat \Lambda_N$.

The following (large deviation type) lemma is needed for the proof of the subsequent Theorem~\ref{Theorem k LN to nu}. Note here that $1- \theta+ \theta\ln \theta > 0$ for $\theta \in (0,\infty) / \{1\}$. 
\begin{lemma} \label{wichtige Abschaetzung chi}
	Let $\nu > 0$ and $N \in \mathds N$ be given. Then for any $\theta \ge 1$
	\begin{align}
	\mathds{P}\left( \kappa_N \ge \theta \nu \hat L_N \right) \le \mathrm{e}^{- \nu \hat L_N (1 - \theta+ \theta \ln \theta)} \ ,
	\end{align}
	and for any $0 < \theta\le 1$
	\begin{align}
	\mathds{P}\left( \kappa_N \le \theta \nu \hat L_N \right) \le \mathrm{e}^{- \nu \hat L_N (1 - \theta+ \theta\ln \theta)} \ .
	\end{align}
\end{lemma}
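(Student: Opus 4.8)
The plan is to apply the Cram\'er--Chernoff (exponential Markov) method to the Poisson-distributed random variable $\kappa_N$. First I would record the basic fact that, since $\hat\Lambda_N$ is a fixed Borel set of finite Lebesgue measure $\hat L_N$, the number $\kappa_N$ of atoms of the Poisson random measure inside $\hat\Lambda_N$ is Poisson distributed with mean $\nu\hat L_N$, so that its moment generating function is
\[
\E[\mathrm{e}^{t\kappa_N}] = \mathrm{e}^{\nu\hat L_N(\mathrm{e}^t - 1)}, \qquad t \in \mathds R .
\]
If $\hat L_N = 0$ both asserted inequalities are trivial, their right-hand sides being $1$, so I may and will assume $\hat L_N > 0$. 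For the upper tail, fix $\theta \ge 1$ and $t \ge 0$; Markov's inequality applied to the nonnegative random variable $\mathrm{e}^{t\kappa_N}$ gives
\[
\mathds{P}\big(\kappa_N \ge \theta\nu\hat L_N\big) \le \mathrm{e}^{-t\theta\nu\hat L_N}\,\E[\mathrm{e}^{t\kappa_N}] = \exp\big(\nu\hat L_N(\mathrm{e}^t - 1 - t\theta)\big) .
\]
I would then minimize the exponent $\mathrm{e}^t - 1 - t\theta$ over $t \ge 0$: the stationarity condition $\mathrm{e}^t = \theta$ gives $t = \ln\theta$, which is admissible precisely because $\theta \ge 1$, and substituting it yields the exponent $-(1 - \theta + \theta\ln\theta)$, which is the first claimed bound.

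For the lower tail, fix $0 < \theta \le 1$ and $t \ge 0$ and apply Markov's inequality to $\mathrm{e}^{-t\kappa_N}$:
\[
\mathds{P}\big(\kappa_N \le \theta\nu\hat L_N\big) = \mathds{P}\big(\mathrm{e}^{-t\kappa_N} \ge \mathrm{e}^{-t\theta\nu\hat L_N}\big) \le \mathrm{e}^{t\theta\nu\hat L_N}\,\E[\mathrm{e}^{-t\kappa_N}] = \exp\big(\nu\hat L_N(\mathrm{e}^{-t} - 1 + t\theta)\big) .
\]
Minimizing $\mathrm{e}^{-t} - 1 + t\theta$ over $t \ge 0$ gives the stationary point $\mathrm{e}^{-t} = \theta$, i.e. $t = -\ln\theta \ge 0$ (admissible since $\theta \le 1$), and substituting it again produces the exponent $-(1 - \theta + \theta\ln\theta)$, which is the second claimed bound.

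I do not expect a genuine obstacle here: this is a one-variable convex optimization carried out explicitly, and it is essentially the textbook Chernoff estimate for Poisson tails. The only points requiring (minor) attention are checking that the optimal parameter $t$ lies in the admissible half-line $[0,\infty)$ in each case --- which is exactly where the hypotheses $\theta \ge 1$ resp.\ $\theta \le 1$ enter --- and recalling, as already noted before the statement, that $1 - \theta + \theta\ln\theta \ge 0$ for all $\theta > 0$ with equality only at $\theta = 1$, so that both estimates are nontrivial.
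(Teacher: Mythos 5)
Your proof is correct and is essentially the same argument as the paper's: the paper also uses exponential tilting, inserting the factor $\theta^{m-\theta\nu\hat L_N}\ge 1$ into the Poisson tail sum and resumming, which is precisely the Chernoff bound with the optimizer $t=\ln\theta$ (resp.\ $t=-\ln\theta$) already substituted. You arrive at the same optimal parameter by explicit calculus, but the mechanism and the resulting bound are identical.
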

\begin{proof}
	For $\theta\ge 1$ we have
	\begin{align*}
	\mathds{P}\left( \kappa_N \ge \theta \nu \hat L_N \right) & =  \sum\limits_{m \ge \theta \nu \hat L_N} \mathds{P} ( \kappa_N = m ) = \sum\limits_{m \ge \theta \nu \hat L_N} \mathrm{e}^{- \nu \hat L_N} \dfrac{(\nu \hat L_N)^m}{m!} \\
	& \le \sum\limits_{m \ge \theta \nu \hat L_N} \mathrm{e}^{- \nu \hat L_N} \dfrac{( \nu \hat L_N )^m}{m!} \theta^{m - \theta \nu \hat L_N} \le \mathrm{e}^{- \nu \hat L_N \left( 1 - \theta + \theta\ln \theta\right)} \ .
	\end{align*}
	On the other hand, for $0 < \theta\le 1$,
	\begin{align*}
	\mathds{P}\left( \kappa_N \le \theta \nu \hat L_N \right) & =  \sum\limits_{m \le \theta \nu \hat L_N} \mathds{P} \left( \kappa_N = m \right) = \sum\limits_{m \le \theta \nu \hat L_N} \mathrm{e}^{- \nu \hat L_N} \dfrac{(\nu \hat L_N)^m}{m!} \\
	& \le \sum\limits_{m \le \theta \nu \hat L_N} \mathrm{e}^{- \nu \hat L_N} \dfrac{( \nu \hat L_N )^m}{m!} \theta^{m - \theta \nu \hat L_N} \le \mathrm{e}^{- \nu \hat L_N \left( 1 - \theta + \theta\ln \theta\right)}
	\end{align*}
\end{proof}
	\begin{theorem} \label{Theorem k LN to nu}
		Let $(\hat \Lambda_N)_{N \in \mathds N}$ with $\hat L_N \gg \ln(N)$ be given. Then, for any $\epsilon > 0$ and for almost any $\omega \in \Omega$ there exists an $\widetilde N = \widetilde N(\epsilon,\omega) \in \mathds N$ such that for any $N \ge \widetilde N$ we have
		\begin{align}
		(1 - \epsilon) \nu \, \hat L_N \le \kappa_N \le (1 + \epsilon) \nu \, \hat L_N \ . 
		\end{align}
		In particular, almost surely $\lim_{N \to \infty} k_N  / L_N = \nu$ and $\lim_{N \to \infty} \tilde k_N / (\nu_N L_N) = 1$ in case of $\ln (N) / N \ll \nu_N \lesssim 1$.
	\end{theorem}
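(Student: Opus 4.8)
The plan is to obtain the two-sided bound on $\kappa_N$ directly from the large-deviation estimate of Lemma~\ref{wichtige Abschaetzung chi} by a Borel--Cantelli argument, and then to read off the two limit statements by specializing the ambient interval. Fix $\epsilon\in(0,1)$. Applying Lemma~\ref{wichtige Abschaetzung chi} with $\theta=1+\epsilon$ gives $\mathds P(\kappa_N\ge(1+\epsilon)\nu\hat L_N)\le \mathrm e^{-\nu\hat L_N c_+(\epsilon)}$ with $c_+(\epsilon):=1-(1+\epsilon)+(1+\epsilon)\ln(1+\epsilon)$, and with $\theta=1-\epsilon$ it gives $\mathds P(\kappa_N\le(1-\epsilon)\nu\hat L_N)\le \mathrm e^{-\nu\hat L_N c_-(\epsilon)}$ with $c_-(\epsilon):=1-(1-\epsilon)+(1-\epsilon)\ln(1-\epsilon)$; by the remark preceding that lemma, $c(\epsilon):=\min\{c_+(\epsilon),c_-(\epsilon)\}>0$. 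Since $\hat L_N\gg\ln(N)$, for all but finitely many $N$ one has $\nu\hat L_N c(\epsilon)\ge 2\ln(N)$, so the probability of the complementary ``bad'' event $\{\kappa_N\ge(1+\epsilon)\nu\hat L_N\}\cup\{\kappa_N\le(1-\epsilon)\nu\hat L_N\}$ is at most $2\,\mathrm e^{-\nu\hat L_N c(\epsilon)}\le 2N^{-2}$, which is summable; Borel--Cantelli then yields, for almost any $\omega$, an $\widetilde N(\epsilon,\omega)$ with $(1-\epsilon)\nu\hat L_N\le\kappa_N\le(1+\epsilon)\nu\hat L_N$ for all $N\ge\widetilde N(\epsilon,\omega)$. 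This is the first assertion, and the hypothesis $\hat L_N\gg\ln(N)$ is used only here, precisely to make the tail probabilities summable.

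For the two limits I would first intersect the exceptional null sets over the countable family $\epsilon=1/m$, $m\in\mathds N$, to obtain a single full-measure set on which $\kappa_N/(\nu\hat L_N)\to1$. For $k_N/L_N$ I take $\hat\Lambda_N:=\Lambda_N=(-L_N/2,L_N/2)$, so that $\hat L_N=L_N=N/\rho\gg\ln(N)$; since the $\kappa_N$ Poisson points inside $\Lambda_N$ are almost surely pairwise distinct and partition $\Lambda_N$ into $\kappa_N+1$ non-empty subintervals, we have $k_N=\kappa_N+1$ and hence $k_N/L_N=\kappa_N/L_N+1/L_N\to\nu$ almost surely. For $\tilde k_N/(\nu_N L_N)$ I observe that the scaled impurities $s_N^{-1}x_j$ lying in $\Lambda_N$ are in bijection with the original impurities $x_j$ lying in $s_N\Lambda_N$, which is an interval of length $s_N L_N=\nu_N N/(\nu\rho)$; the hypothesis $\ln(N)/N\ll\nu_N$ then gives $s_N L_N\gg\ln(N)$, so applying the first part with $\hat\Lambda_N:=s_N\Lambda_N$ shows that the number of such impurities, divided by $\nu s_N L_N=\nu_N L_N$, tends to $1$ almost surely; since $\tilde k_N$ equals this number plus one and $\nu_N L_N\to\infty$, we conclude $\tilde k_N/(\nu_N L_N)\to1$ almost surely.

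The analytic heart of the argument is Lemma~\ref{wichtige Abschaetzung chi}, and the rest is bookkeeping; I expect the only real pitfalls to be (i) the quantifier order --- the exceptional null set genuinely depends on $\epsilon$, so one must pass to a countable sequence $\epsilon\searrow0$ before asserting a true limit --- and (ii) correctly relating the point count $\kappa_N$ to the interval counts $k_N$ and $\tilde k_N$ and identifying the effective interval length $s_N L_N$ after the rescaling $x_j\mapsto s_N^{-1}x_j$, which is what converts the density condition $\ln(N)/N\ll\nu_N$ into the usable hypothesis $s_N L_N\gg\ln(N)$.
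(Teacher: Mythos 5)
Your proof is correct and follows essentially the same route as the paper: both invoke Lemma~\ref{wichtige Abschaetzung chi} with $\theta = 1\pm\epsilon$, sum the resulting exponential tail bounds (using $\hat L_N \gg \ln(N)$ for summability), apply Borel--Cantelli, and then specialize $\hat\Lambda_N$ to $\Lambda_N$ and $s_N\Lambda_N$ for the two limit statements. The only differences are that you spell out details the paper leaves implicit — the $2N^{-2}$ bound, the countable intersection over $\epsilon = 1/m$, the exact relation $k_N = \kappa_N + 1$, and the identification $s_N L_N = \nu_N N/(\nu\rho) \gg \ln(N)$ — which is entirely appropriate and fills no genuine gap.
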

	\begin{proof}
		Let $\epsilon > 0$ be given. Then, with Lemma~\ref{wichtige Abschaetzung chi} we obtain
		\begin{align*}
		\sum\limits_{N =1}^{\infty} \mathds{P}\Big( \kappa_N \le (1 - \epsilon) \nu \hat L_N \Big) < \infty \quad \text{ and }\quad \sum\limits_{N =1}^{\infty} \mathds{P}\Big( \kappa_N \ge (1 + \epsilon) \nu \hat L_N \Big)< \infty \ .
		\end{align*}
		Hence, the first part of the statement follows with the Borel--Cantelli lemma.  
		
		Consequently, for any $\epsilon > 0$ and almost any $\omega \in \Omega$,
		\begin{align*}
		\liminf\limits_{N \to \infty} \dfrac{\kappa_N}{\hat L_N} \ge (1 - \epsilon)\nu \quad \text{ and } \quad \limsup\limits_{N \to \infty} \dfrac{\kappa_N}{\hat L_N} \le (1 + \epsilon)\nu \ .
		\end{align*}
		Setting $\hat \Lambda_N = \Lambda_N$, $N \in \mathds N$, we conclude that almost surely $\lim_{N \to \infty} k_N/L_N = \nu$. Furthermore, recalling that $\nu_N = s_N \nu$ we obtain $\lim_{N \to \infty} \tilde k_N/ (\nu_N L_N) = 1$ almost surely by setting $\hat \Lambda_N = s_N \Lambda_N$, $N \in \mathds N$. Note here that the assumption $\ln(N)/N \ll \nu_N \lesssim 1$ implies $\hat L_N \gg \ln(N)$.
	\end{proof}
	For a Poisson random measure with intensity $\nu > 0$ we define $\hat l_j := |(x_j(\omega), x_{j+1}(\omega))|$ for any $j \in \mathds Z$ with $\{ x_j (\omega) : j \in \mathds Z \}$ the strictly increasing sequence of the atoms of the Poisson random measure, see Section~\ref{TheModel}. Note that $\{\hat l_j : j \in \mathds Z \backslash \{0\} \}$ are independent and identically distributed random variables with common density $\nu \mathrm{e}^{-\nu l}$ \cite[Ch. 4]{kingman1993poisson}. We also define the set $J_k :=  \{ - k, -k + 1, \ldots, k - 1, k \} \backslash \{ 0 \}$ for any $k \in \mathds N$.
	
	\begin{lemma} \label{Lemma hat lj equal lj}
		Assume that $\ln (N) / N \ll \nu_N\lesssim 1$ holds. Then, for any $0 < \epsilon < 1$ and almost any $\omega \in \Omega$ there exists an $\widetilde N = \widetilde N(\epsilon,\omega) \in \mathds N$ such that for any $N \ge \widetilde N$ one has
		$\tilde l_j = s_N^{-1} \hat l_j$ for any $j \in J_{\lceil (1 - \epsilon) \nu_N L_N/2 \rceil}$,
		$\tilde l_j \le s_N^{-1} \hat l_j$ for any $j \in J_{\lfloor \nu_N L_N \rfloor}$ and $\tilde l_j = 0$ for any $j \in \mathds Z \backslash ( J_{\lfloor \nu_N L_N \rfloor} \cup \{0\})$. 
	\end{lemma}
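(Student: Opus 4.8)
The plan is to reduce the geometric statement to two–sided control on the number of Poisson atoms that fall into the unscaled window $s_N\Lambda_N=(-s_NL_N/2,\,s_NL_N/2)$, and then to invoke Theorem~\ref{Theorem k LN to nu}. To begin, I would record that $\widetilde W_j\cap\Lambda_N$ is, up to the common dilation by $s_N^{-1}$, exactly the intersection of the unscaled gap $(x_j(\omega),x_{j+1}(\omega))$ with $s_N\Lambda_N$; in particular the crude bound $\tilde l_j=|\widetilde W_j\cap\Lambda_N|\le|\widetilde W_j|=s_N^{-1}\hat l_j$ holds for \emph{every} $j\in\mathds Z$, which already gives the middle assertion of the lemma. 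Since the atoms $\{x_j(\omega)\}$ are strictly increasing with $x_0(\omega)<0<x_1(\omega)$, and since the window half-length $s_NL_N/2=\nu_N N/(2\rho\nu)$ tends to infinity under the hypothesis $\nu_N\gg\ln(N)/N$, for all $N$ beyond some $\omega$-dependent threshold the atoms lying in $(0,s_NL_N/2)$ are precisely $x_1,\dots,x_p$ and those lying in $(-s_NL_N/2,0)$ are precisely $x_{-q'},\dots,x_0$, where $p:=\mathrm{card}\big(X(\omega)\cap(0,s_NL_N/2)\big)$ and $q'+1:=\mathrm{card}\big(X(\omega)\cap(-s_NL_N/2,0)\big)$.

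From this bookkeeping I read off two deterministic facts, valid for all such $N$: (i) if $1\le j\le p-1$ then $0<s_N^{-1}x_j$ and $s_N^{-1}x_{j+1}\le s_N^{-1}x_p<L_N/2$, hence $\widetilde W_j\subseteq\Lambda_N$ and $\tilde l_j=s_N^{-1}\hat l_j$, and symmetrically $\tilde l_j=s_N^{-1}\hat l_j$ whenever $-q'\le j\le-1$; (ii) if $j\ge p+1$ then $s_N^{-1}x_j\ge s_N^{-1}x_{p+1}\ge L_N/2$, hence $\widetilde W_j\cap\Lambda_N=\emptyset$ and $\tilde l_j=0$, and symmetrically $\tilde l_j=0$ whenever $j\le-(q'+2)$. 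Observe that the origin-straddling gap $j=0$ --- whose length $\hat l_0$ is atypically distributed (waiting-time paradox) --- is left out everywhere, which is precisely why the index sets $J_k$ exclude $0$; the odd law of $\hat l_0$ therefore never enters the argument.

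It then remains only to bound $p$ and $q'$. I would apply Theorem~\ref{Theorem k LN to nu} to the two sequences of intervals $(0,s_NL_N/2)$ and $(-s_NL_N/2,0)$, $N\in\mathds N$, whose common length $s_NL_N/2=\nu_N N/(2\rho\nu)$ is $\gg\ln(N)$ exactly because $\nu_N\gg\ln(N)/N$, so the theorem applies; taking the parameter there to be $\epsilon/2$ and using $\nu\cdot(s_NL_N/2)=\nu_NL_N/2$, one obtains an almost sure event on which, for all large $N$,
\[
\Big(1-\tfrac{\epsilon}{2}\Big)\tfrac{\nu_NL_N}{2}\;\le\;p\;\le\;\Big(1+\tfrac{\epsilon}{2}\Big)\tfrac{\nu_NL_N}{2}\;<\;\nu_NL_N,
\]
and likewise with $q'+1$ in place of $p$. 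Since $p$ and $q'+1$ are integers strictly below $\nu_NL_N$, the upper bounds give $p\le\lfloor\nu_NL_N\rfloor$ and $q'+1\le\lfloor\nu_NL_N\rfloor$; combined with (ii) this yields $\tilde l_j=0$ for every $j$ with $|j|\ge\lfloor\nu_NL_N\rfloor+1$, i.e. the last assertion. For the first assertion I use that $\big(1-\tfrac{\epsilon}{2}\big)\tfrac{\nu_NL_N}{2}-(1-\epsilon)\tfrac{\nu_NL_N}{2}=\tfrac{\epsilon}{4}\nu_NL_N\to\infty$, so that for $N$ large the lower bounds upgrade to $p\ge\lceil(1-\epsilon)\tfrac{\nu_NL_N}{2}\rceil+1$ and $q'\ge\lceil(1-\epsilon)\tfrac{\nu_NL_N}{2}\rceil$; invoking (i) then gives $\tilde l_j=s_N^{-1}\hat l_j$ for all $j\in J_{\lceil(1-\epsilon)\nu_NL_N/2\rceil}$.

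I do not anticipate a deep obstacle: the probabilistic content --- a Borel--Cantelli argument built on the large-deviation Lemma~\ref{wichtige Abschaetzung chi} --- is already packaged inside Theorem~\ref{Theorem k LN to nu}, and the rest is bookkeeping. The points that need genuine care are the correct matching of atoms to windows (hence of indices $j$ to ``interior'', ``boundary'', ``exterior''), the verification that the window half-length $s_NL_N/2$ grows strictly faster than $\ln N$ so that Theorem~\ref{Theorem k LN to nu} is legitimately applicable, and the small arithmetic passages ``$p$ is an integer strictly below $\nu_NL_N$, hence $\le\lfloor\nu_NL_N\rfloor$'' together with the analogous ceiling manipulation used for the interior range.
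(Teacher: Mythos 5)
Your argument is correct and follows essentially the same route as the paper: you split the window $s_N\Lambda_N$ into its positive and negative halves, apply Theorem~\ref{Theorem k LN to nu} to each half (legitimately, since $s_NL_N/2 = \nu_N N/(2\rho\nu)\gg\ln N$ under $\nu_N\gg\ln(N)/N$), and then translate the two‑sided counting bounds into index bookkeeping. The paper's proof is terser — it introduces $\kappa_N^{(1)},\kappa_N^{(2)}$ on the two halves with a parameter $\epsilon'<\epsilon$ and then states "the statement of the lemma now follows" — whereas you make explicit the deterministic implications (your facts (i) and (ii)) and the floor/ceiling arithmetic linking $p$, $q'$ to $J_{\lceil(1-\epsilon)\nu_NL_N/2\rceil}$ and $J_{\lfloor\nu_NL_N\rfloor}$; this is exactly the bookkeeping the paper leaves implicit, and it is carried out correctly.
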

	\begin{proof}
		Let $0 < \epsilon < 1$ and $0 < \epsilon^{\prime} < \epsilon$ be given. For any $N \in \mathds N$ we divide the window $(- s_N L_N/2, s_N L_N/2)$ into $(- s_N L_N/2,0]$ and $[0,- s_N L_N/2)$.
		Due to Theorem~\ref{Theorem k LN to nu} there exists a set $\tilde \Omega \subset \Omega$ with $\mathds{P}(\widetilde \Omega) = 1$ and the following property: For any $\omega \in \widetilde \Omega$ there exists an $\widetilde N\in \mathds N$ such that for any $N \ge \widetilde N$ one has $(1/2)(1 - \epsilon^{\prime})\nu_N L_N \le \kappa_N^{(1)}, \kappa_N^{(2)} \le (1/2)(1 + \epsilon^{\prime}) \nu_N L_N$ with $\kappa_N^{(1)}$ and $\kappa_N^{(2)}$ denoting the number of atoms within $(- s_N L_N/2,0]$ and $[0, s_N L_N/2)$, respectively. 
		
		The statement of the lemma now follows since $s_N^{-1} \hat l_j = s_N^{-1} |(x_j(\omega), x_{j+1}(\omega))|$ and $\tilde l_j = s_N^{-1} |(x_j(\omega), x_{j+1}(\omega)) \cap s_N \Lambda_N|$ for any $j \in \mathds Z$. Note that we divide the window $s_N \Lambda_N = (-s_N L_N/2, s_N L_N/2)$ into the two intervals in order to ensure that, for $N \ge \widetilde N$, \textit{both} intervals with associated lengths $\tilde l_{-\lceil (1 - \epsilon) \nu_N L_N/2 \rceil}$ and $\tilde l_{\lceil (1 - \epsilon) \nu_N L_N/2 \rceil}$ are entirely within the window $s_N \Lambda_N$ and that $\tilde l_{j} = 0$ for any $j \le - \lfloor \nu_N L_N \rfloor - 1$ \textit{and} for any $j \ge \lfloor \nu_N L_N \rfloor + 1$.
	\end{proof}
	\begin{theorem}	\label{thm ConvergenceParticleNumber almost surely}
		Assume that $\ln(N) / N \ll \nu_N \lesssim 1$ holds. Furthermore, let $(g_N)_{N \in \mathds N}$ be such that $\lim_{N \to \infty} \nu_N \lambdaN = 0$ and $\lambda_N \gg \ln (N) \ln (\nu_N N) (\nu_N N)^{-1/2}$ with $(\lambda_N)_{N \in \mathds N}$ as in~\eqref{Zusammenhang g lambda}. Then 
		\begin{equation}\label{ConvergenceParticleNumber}
		\liminf\limits_{N \to \infty} \dfrac{1}{N} \sum_{j \in \mathds 
Z} M_j \ge \dfrac{2}{9} \quad \text{ and } \quad \limsup\limits_{N \to \infty} 
\dfrac{1}{N} \sum_{j \in \mathds Z} M_j \le 6
		\end{equation}
		almost surely with $M_j$ as in~\eqref{Besetzungszahlen}.
	\end{theorem}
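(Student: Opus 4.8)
The plan is to sandwich $\sum_{j\in\mathds Z}M_j$ between two sums of independent, identically distributed random variables and to control each of them by an exponential Chernoff--Markov estimate combined with the Borel--Cantelli lemma. For $j\in\mathds Z\setminus\{0\}$ put $V_j:=N_{\gN,\muN}(s_N^{-1}\hat l_j)$; since the $\hat l_j$, $j\neq 0$, are i.i.d.\ with density $\nu\mathrm e^{-\nu l}$, so are the $V_j$, and by~\eqref{determinging forumla for mu} together with~\eqref{inequality for E NgNmuN} we have $\bar\mu_N:=\E V_1\in[\tfrac23\rho\nuN^{-1},\,2\rho\nuN^{-1}]$. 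The decisive input is a pointwise sandwich: $V_1=0$ unless $\hat l_1>s_N\pi/\sqrt{\muN}$, an event of probability $\lambdaN$ by~\eqref{definition of lambda}, and conditionally on it the memoryless property and scaling give $s_N^{-1}\hat l_1=\pi/\sqrt{\muN}+E$ with $E\sim\mathrm{Exp}(\nuN)$; then~\eqref{inequality for Ngmu(l)}, the relation $\muN/\gN=\rho/\lambdaN$ from~\eqref{Zusammenhang g lambda}, and the elementary inequality $x\le x(x+2a)/(x+a)\le 2x$ (with $a=\pi/\sqrt{\muN}$) yield
\[
\tfrac23\,\tfrac{\rho}{\lambdaN}\,E\ \le\ V_1\ \le\ \tfrac{2\rho}{\lambdaN}\,E\qquad\text{on }\{V_1>0\}\,.
\]
Consequently, for $0\le s<\nuN\lambdaN/(2\rho)$ and $t\ge 0$,
\[
\E\mathrm e^{sV_1}\le 1-\lambdaN+\frac{\lambdaN}{1-2s\rho/(\nuN\lambdaN)}\,,\qquad
\E\mathrm e^{-tV_1}\le 1-\lambdaN+\frac{\lambdaN}{1+2t\rho/(3\nuN\lambdaN)}\,,
\]
so both moment generating functions stay of size $1+O(\lambdaN)$ when $s,t$ are of order $\nuN\lambdaN$.

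For the upper bound I would invoke Lemma~\ref{Lemma hat lj equal lj}: almost surely, for all large $N$, one has $\tilde l_j\le s_N^{-1}\hat l_j$ for every $j\in J_{\lfloor\nuN\LN\rfloor}$ and $\tilde l_j=0$ (hence $M_j=0$) otherwise. Since $l\mapsto N_{\gN,\muN}(l)$ is non-decreasing (see~\cite{SeiYngZag12}) and $\lceil x\rceil\le x+\mathds 1_{x>0}$,
\[
\sum_{j\in\mathds Z}M_j\ \le\ \sum_{j\in J_{\lfloor\nuN\LN\rfloor}}\bigl\lceil V_j\bigr\rceil\ \le\ \sum_{j\in J_{\lfloor\nuN\LN\rfloor}}V_j\ +\ \#\bigl\{\,j\in J_{\lfloor\nuN\LN\rfloor}:V_j>0\,\bigr\}\,.
\]
The counting term is $\mathrm{Bin}\bigl(2\lfloor\nuN\LN\rfloor,\lambdaN\bigr)$ with mean at most $2\nuN N\lambdaN/\rho=o(N)$ (because $\nuN\lambdaN\to0$), so a Chernoff bound makes it $o(N)$ for all but finitely many $N$ almost surely, with summable failure probabilities. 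The main term is a sum of $n:=2\lfloor\nuN\LN\rfloor\le 2\nuN N/\rho$ i.i.d.\ terms and hence has mean $\le 4N$; choosing $s$ with $2s\rho/(\nuN\lambdaN)=1/4$ in Markov's inequality and the first bound above gives $\mathds P\bigl(\sum_{j=1}^{n}V_j>6N\bigr)\le\exp\bigl(-\nuN\lambdaN N/(12\rho)\bigr)$, so that $\sum_{j\in\mathds Z}M_j\le 6N+o(N)$ for all but finitely many $N$ almost surely.

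For the lower bound, fix a small $\epsilon\in(0,\tfrac13)$ and set $m:=\lceil(1-\epsilon)\nuN\LN/2\rceil$. By Lemma~\ref{Lemma hat lj equal lj} and Theorem~\ref{Theorem k LN to nu}, almost surely for all large $N$ we have $\tilde l_j=s_N^{-1}\hat l_j$ for $j\in J_m$, and every such $j$ indexes an interior subinterval (none equals $\widetilde W_0$ or the first or last one, since $0\notin J_m$ and the extremal subintervals lie at indices of modulus $\sim\nuN\LN/2>m$); hence $M_j=\lceil V_j\rceil\ge V_j$ there, and $\sum_{j\in\mathds Z}M_j\ge\sum_{j\in J_m}V_j$. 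This is a sum of $n':=2m\ge(1-\epsilon)\nuN N/\rho$ i.i.d.\ terms with mean $\ge\tfrac23(1-\epsilon)N$; taking $t$ with $2t\rho/(3\nuN\lambdaN)=1$ in the second bound above yields $\mathds P\bigl(\sum_{j\in J_m}V_j<\tfrac29 N\bigr)\le\exp\bigl(\tfrac{N\nuN\lambdaN}{\rho}(\tfrac13-\tfrac{1-\epsilon}{2})\bigr)$, whose exponent is strictly negative for $\epsilon<\tfrac13$; hence $\sum_{j\in\mathds Z}M_j\ge\tfrac29 N$ for all but finitely many $N$ almost surely.

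All failure probabilities above are of the form $\exp(-c\,N\nuN\lambdaN)$ with $c>0$. Since $\ln(N)/N\ll\nuN$ forces $\nuN N\to\infty$, the hypothesis $\lambdaN\gg\ln(N)\ln(\nuN N)(\nuN N)^{-1/2}$ gives $N\nuN\lambdaN\gg(\nuN N)^{1/2}\ln(N)\ln(\nuN N)\gg\ln N$, so these probabilities are summable in $N$; the Borel--Cantelli lemma then yields $\tfrac29 N\le\sum_{j\in\mathds Z}M_j\le 6N+o(N)$ for all but finitely many $N$ almost surely, which is~\eqref{ConvergenceParticleNumber}. The genuinely delicate point is obtaining moment generating functions of size $1+O(\lambdaN)$: using instead the cruder bound $V_1\le\rho s_N^{-1}\hat l_1/\lambdaN$ would insert a factor $\lambdaN^{-\theta}$ into $\E\mathrm e^{sV_1}$ and destroy the estimate, whereas the memoryless splitting keeps the relevant conditional moments of order $\rho/\nuN$ while $\E\mathds 1_{V_1>0}=\lambdaN$, which is exactly what makes the factor-$3$ margins between $\tfrac23$ and $\tfrac29$, and between $2$ and $6$, sufficient.
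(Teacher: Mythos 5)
Your proposal is correct, and it takes a genuinely different route from the paper's. The paper controls $\sum_j M_j$ by comparing the integrals $\int N_{\gN,\muN}(s_N^{-1}\ell)\,F^\nu_k(\mathrm d\ell)$ against $\int N_{\gN,\muN}(s_N^{-1}\ell)\,F^\nu(\mathrm d\ell)$ via the Dvoretzky--Kiefer--Wolfowitz inequality applied to the empirical distribution function of the $\hat l_j$, followed by an integration by parts and a Borel--Cantelli argument to pass from the global uniform rate $\sup_l|F^\nu_k-F^\nu|\ll k^{-1/2}\ln k$ a.s.\ to the desired estimate. You instead work directly with the i.i.d.\ sum $\sum_j V_j$, $V_j=N_{\gN,\muN}(s_N^{-1}\hat l_j)$, and obtain exponential concentration via Chernoff bounds. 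The decisive technical ingredient you supply, and which the paper does not use, is the memoryless decomposition of $s_N^{-1}\hat l_1$ conditionally on $\{V_1>0\}$ together with $\muN/\gN=\rho/\lambdaN$ and the elementary pointwise inequality $x\le x(x+2a)/(x+a)\le 2x$; this yields the sandwich $\tfrac23\,\rho\lambdaN^{-1}E\le V_1\le 2\rho\lambdaN^{-1}E$ with $E\sim\mathrm{Exp}(\nuN)$, and hence moment generating functions of size $1+O(\lambdaN)$ for $s,t$ of order $\nuN\lambdaN$ -- exactly what is needed for the failure probabilities to be $\exp(-cN\nuN\lambdaN)$. I verified the Chernoff computations: with $s=\nuN\lambdaN/(8\rho)$ the upper-tail failure probability is $\exp(-N\nuN\lambdaN/(12\rho))$, and with $t=3\nuN\lambdaN/(2\rho)$ the lower-tail exponent equals $(N\nuN\lambdaN/\rho)(\tfrac13-\tfrac{1-\epsilon}{2})<0$ for $\epsilon<\tfrac13$; both match your claims and, since $\ln(N)/N\ll\nuN$ forces $\nuN N\to\infty$, the hypothesis $\lambdaN\gg\ln(N)\ln(\nuN N)(\nuN N)^{-1/2}$ gives $N\nuN\lambdaN\gg\ln N$, making these summable. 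The reductions to sums over $J_m$ and $J_{\lfloor\nuN\LN\rfloor}$ via Lemma~\ref{Lemma hat lj equal lj}, the binomial control of the ceiling correction, and the verification that indices in $J_m$ are neither $0$ nor the first/last interval are also correct. A comparison of the two methods: your Chernoff route is more probabilistic, gives explicit exponential tails, and in fact requires only the weaker hypothesis $N\nuN\lambdaN\gg\ln N$, whereas the paper's DKW route needs the full $\lambdaN\gg\ln(N)\ln(\nuN N)(\nuN N)^{-1/2}$ to control the error integral, but returns uniform control of the empirical distribution that would extend to a larger class of integrands. The one minor point you should make fully explicit is the monotonicity of $l\mapsto N_{\gN,\muN}(l)$, used when passing from $\tilde l_j\le s_N^{-1}\hat l_j$ to $M_j\le\lceil V_j\rceil$; this is indeed true (and is implicitly used in the paper's \eqref{hat lj subset lj}), but if one wishes to avoid it one can simply redefine $V_j$ to be the explicit upper bound $\frac{s_N}{\hat l_j\gN}[\muN s_N^{-2}\hat l_j^2-\pi^2]_+$, which is manifestly increasing in $\hat l_j$ and satisfies the same sandwich and the same expectation bounds.
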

	\begin{proof}
		Let $F^{\nu}_{k}(l) := (2 k)^{-1} \sum_{j \in J_k} \mathds 1_{\hat l_j \le l}$ be the empirical distribution function with respect to the random variables $\{ \hat l_j : j \in \mathds Z \backslash \{0\} \}$. Then, for any $l \in \mathds R$, there is a unique measure $F^{\nu}_k(\mathrm{d}\ell)$ defined by $\int_{-\infty}^l F^{\nu}_k(\mathrm{d}\ell) := F^{\nu}_k(l)$ and we also set $F^{\nu}(l) := \int_{0}^{l} \mathds 1_{(0,\infty)}(l) \nu \mathrm{e}^{- \nu x} \, \mathrm{d}x \,$.
		
		Then $\mathds{P} ( \sup_{l \in \mathds R} \left| F^{\nu}_{k}(l) - F^{\nu}(l) \right| > \epsilon ) \le 2 \mathrm{e}^{-2 k \epsilon^2}$ for any $\epsilon > 0$ and $k \in \mathds N$ due to Dvoretzky--Kiefer--Wolfowitz inequality \cite{dvoretzky1956asymptotic, massart1990tight}.
		Therefore, for any $\epsilon > 0$ and with $K \in \mathds N$ such that $\ln(K) \ge \epsilon^{-2}$ we obtain
		\begin{align}
		& \sum\limits_{k = K}^{\infty} \mathds{P} \left( \dfrac{k^{1/2}}{ \ln(k)} \sup\limits_{l \in \mathds R} \left| F^{\nu}_{k}(l) - F^{\nu}(l) \right| > \epsilon \right) \le 2 \sum\limits_{k=K}^{\infty} \mathrm{e}^{-2 \ln(k)} <  \infty \ .
		\end{align}
		Hence,
		\begin{align} \label{convergence empirical distribution}
		\lim\limits_{k \to \infty} \dfrac{k^{1/2}}{\ln(k)} \sup\limits_{l \in \mathds R} \left| F^{\nu}_{k}(l) - F^{\nu}(l) \right| = 0
		\end{align}
		almost surely by the Borel--Cantelli lemma. 
		
		Note that one has $\muN \ge \pi^2 / [5 \nu_N^{-1} \ln (N)]^{2}$ for all but finitely many $N \in \mathds{N}$ since otherwise $\lambda_N \le N^{-5}$ which then contradicted $\lambda_N \gg \ln(N) \ln(\nu_N N)(\nu_N N)^{-1/2}$. Moreover, by Lemma~\ref{remark largest interval}, $\llargest \le \kappa \nu^{-1} \ln(N)$ for any $\kappa > 4$ and therefore $F^{\nu}_{\lceil \nu_N L_N/4 \rceil}(l) = F^{\nu}_{\lfloor \nu_N L_N \rfloor}(l) = 1$ for $l \ge 5 \nu^{-1} \ln (N)$ for all but finitely many $N \in \mathds N$ almost surely.
		
		Therefore, almost surely with $F^{\nu}(\mathrm{d}\ell)=\nu \mathrm{e}^{-\nu \ell}\mathrm{d}\ell$,
		  \begin{equation}\begin{split}
		 C_N:&=\left| \int\limits_{\pi s_N / \sqrt{\muN}}^{\infty} \dfrac{s_N}{\ell \gN} \left( \muN \dfrac{\ell^2}{s_N^2} - \pi^2 \right) \, \left(F^{\nu}_{\lceil \nu_N L_N  / 4 \rceil}(\mathrm{d}\ell) - F^{\nu}(\mathrm{d}\ell)\right) \right| \\
		\quad &\leq \dfrac{\muN}{g_N s_N} \left| \int\limits_{\pi s_N / \sqrt{\muN}}^{5 \nu^{-1} \ln (N)} \ell \, \left(F^{\nu}_{\lceil \nu_N L_N  / 4 \rceil}(\mathrm{d}\ell) - F^{\nu}(\mathrm{d}\ell)\right) \right| + \dfrac{\muN}{g_N s_N} \int\limits_{5 \nu^{-1} \ln (N)}^{\infty}   \ell \, F^{\nu}(\mathrm{d}\ell)\, \\
		 & \qquad + \, \pi^2 \dfrac{s_N}{g_N} \left| \int\limits_{\pi s_N / \sqrt{\muN}}^{5 \nu^{-1} \ln (N)} \ell^{-1} \, \left(F^{\nu}_{\lceil \nu_N L_N  / 4 \rceil}(\mathrm{d}\ell) - F^{\nu}(\mathrm{d}\ell)\right)  \right|\,\\
		 & \qquad + \, \pi^2 \dfrac{s_N}{g_N} \int\limits_{5 \nu^{-1} \ln (N)}^{\infty} \ell^{-1} \, F^{\nu}(\mathrm{d}\ell)\ .
		 \end{split}
		 \end{equation}
		 Since $F_{\lceil \nu_N L_N / 4 \rceil}^{\nu}(l) = 1$ for any $l \ge 5 \nu^{-1} \ln(N)$ we obtain, with an integration by parts in the first and third term,
		 \begin{equation}\begin{split}
		 C_N &\leq \dfrac{\muN}{g_N s_N} \left( 5 \nu^{-1} \ln (N) \right) \left| F^{\nu}_{\lceil \nu_N L_N  / 4 \rceil}(5 \nu^{-1} \ln (N)) - F^{\nu}(5 \nu^{-1} \ln (N)) \right|\, \\
		 & \qquad + \, \dfrac{\muN}{g_N s_N} \dfrac{\pi s_N}{\sqrt{\muN}} \left| F^{\nu}_{\lceil \nu_N L_N  / 4 \rceil}(\pi s_N / \sqrt{\muN}) - F^{\nu}(\pi s_N / \sqrt{\muN}) \right| \,\\
		 & \qquad + \, \dfrac{\muN}{g_N s_N} \left| \int\limits_{\pi s_N / \sqrt{\muN} }^{5 \nu^{-1} \ln (N)} \left(F^{\nu}_{\lceil \nu_N L_N  / 4 \rceil}(\ell)- F^{\nu}(\ell)\right) \, \mathrm{d}\ell \right| + \, \dfrac{\muN}{g_N s_N} \int\limits_{5 \nu^{-1} \ln (N)}^{\infty}   \ell \, \nu \mathrm{e}^{-\nu \ell} \mathrm{d}\ell \, \\
		 & \qquad \, \\
		 & \qquad + \,  \pi^2 \dfrac{s_N}{g_N} \left( 5 \nu^{-1} \ln (N) \right)^{-1} \left| F^{\nu}_{\lceil \nu_N L_N  / 4 \rceil}(5 \nu^{-1} \ln (N)) - F^{\nu}(5 \nu^{-1} \ln (N)) \right| \, \\
		 & \qquad + \, \pi^2 \dfrac{s_N}{g_N} \dfrac{\sqrt{\muN}}{\pi s_N} \left| F^{\nu}_{\lceil \nu_N L_N  / 4 \rceil}(\pi s_N / \sqrt{\muN}) -  F^{\nu}(\pi s_N / \sqrt{\muN}) \right| \, \\
		 & \qquad + \, \pi^2 \dfrac{s_N}{g_N} \left| \int\limits_{\pi s_N / \sqrt{\muN}}^{5 \nu^{-1} \ln (N)} \ell^{-2}\left(F^{\nu}_{\lceil \nu_N L_N  / 4 \rceil}(\ell) \, \mathrm{d}\ell - F^{\nu}(\ell) \right) \, \mathrm{d}\ell  \right| \, \\
		 & \qquad + \, \pi^2 \dfrac{s_N}{g_N} \int\limits_{5 \nu^{-1} \ln (N)}^{\infty} \ell^{-1} \nu \mathrm{e}^{-\nu \ell} \, \mathrm{d}\ell \ .
		  \end{split}
		  \end{equation}
		 Calculating further we obtain 
		 \begin{equation}\begin{split}
		 &\text{}\\
		  C_N &\leq 3 \dfrac{5 \ln (N)}{\nu_N} \dfrac{\muN}{g_N} \sup\limits_{l \in \mathds{R}} |F^{\nu}_{\lceil \nu_N L_N  / 4 \rceil}(l) - F^{\nu}(l)  | + \dfrac{\muN}{g_N} \dfrac{1}{\nu_N} \dfrac{5 \ln (N) + 1}{N^{5}} \, \\
		 & \quad + \, 3 \dfrac{5 \ln (N)}{\nu_N}  \dfrac{\muN}{g_N} \sup\limits_{l \in \mathds{R}} |F^{\nu}_{\lceil \nu_N L_N  / 4 \rceil}(l) - F^{\nu}(l)  | + \dfrac{\muN}{g_N} \dfrac{1}{\nu_N} \dfrac{5 \ln (N)}{N^{5}} \\
		 \end{split}
		 \end{equation}
		 and since $\mu_N \ge \pi^2 / [5 \nu_N^{-1} \ln(N)]^2$ we get
		 \begin{equation}\begin{split}
		  C_N &\leq 6 \cdot 5 \cdot \rho \dfrac{1}{\nu_N} 
\dfrac{\ln (N)}{\lambda_N} \sup\limits_{l \in \mathds{R}} |F^{\nu}_{\lceil \nu_N 
L_N  / 4 \rceil}(l) - F^{\nu}(l)  | + \rho \dfrac{10 \ln (N) + 
1}{\nu_N \lambda_N} \dfrac{1}{N^5}
		\end{split}
		  \end{equation}
		  with~\eqref{Zusammenhang g lambda} for all sufficiently large $N \in \mathds N$. Hence, with \eqref{convergence empirical distribution} and due to $\lambda_N \gg \ln (N) \ln (\nu_N N) (\nu_N N)^{-1/2}$ we conclude that $\lim_{N \rightarrow \infty}C_N=0$.
		  
		  Since
		  \begin{align}
		  \nu_N \int\limits_{\pi s_N / \sqrt{\muN}}^{\infty} \dfrac{s_N}{\ell \gN} \left( \muN \dfrac{\ell^2}{s_N^2} - \pi^2 \right) \, F^{\nu}(\mathrm{d}\ell) & = \nu_N \E^{\nu} \left[ \dfrac{s_N}{(\cdot) g_N} \left[ \mu_N \dfrac{(\cdot)^2}{s_N^2} - \pi^2 \right]_+ \right] \\
		  & \ge \nu_N \E^{\nu}\left[ N_{\gN, \muN}(s_N^{-1} (\cdot)) \right] \geq 
\dfrac{2}{3} \rho
		  \end{align}
		  for any $N \in \mathds N$, see \eqref{inequality for Ngmu(l)} and \eqref{inequality for E NgNmuN}. Hence we conclude that almost surely
		  \begin{align} \label{int fdl ge rho 1}
		  \begin{split}
		  & \liminf \limits_{N \to \infty} \left(\nu_N \int\limits_{\pi s_N / \sqrt{\muN}}^{\infty} \dfrac{s_N}{\ell \gN} \left( \muN \dfrac{\ell^2}{s_N^2} - \pi^2 \right) \, F^{\nu}_{\lceil \nu_N L_N  / 4 \rceil}(\mathrm{d}\ell)\right) \\
		  &= \,  \liminf\limits_{N \to \infty} \left(\nu_N \int\limits_{\pi 
s_N / \sqrt{\muN}}^{\infty} \dfrac{s_N}{\ell \gN} \left( \muN \dfrac{\ell^2}{s_N^2} - 
\pi^2 \right) \, F^{\nu}(\mathrm{d}\ell)\right) \ge \dfrac{2}{3} \rho \ .
		  \end{split}
		  \end{align}
		  Also, repeating the arguments from above we one can show that 
almost surely
		  \begin{align}
		  \lim\limits_{N \to \infty} & \left| \nu_N \int\limits_{\pi s_N / \sqrt{\muN}}^{\infty} \dfrac{s_N}{\ell \gN} \left( \muN \dfrac{\ell^2}{s_N^2} - \pi^2 \right) \, \left(F^{\nu}_{\lfloor \nu_N L_N \rfloor}(\mathrm{d}\ell) \, - F^{\nu}(\mathrm{d}\ell)\right) \right| = 0 \ ,
		  \end{align}
		  \begin{align}
		  \nu_N \int\limits_{\pi s_N / \sqrt{\muN}}^{\infty} 
\dfrac{s_N}{\ell \gN} \left( \muN \dfrac{\ell^2}{s_N^2} - \pi^2 \right) \, F^{\nu}(\mathrm{d}\ell) 
& \le \dfrac{3}{2} \nu_N \E^{\nu}\left[ N_{\gN, \muN}(s_N^{-1} (\cdot)) \right] \leq
3 \rho
		  \end{align}
		  for any $N \in \mathds N$, see \eqref{inequality for Ngmu(l)} and \eqref{inequality for E NgNmuN}. Consequently 
		  \begin{align} \label{int fdl ge rho 2}
		  \begin{split}
		  & \limsup\limits_{N \to \infty} \left(\nu_N \int\limits_{\pi s_N / \sqrt{\muN}}^{\infty} \dfrac{s_N}{\ell \gN} \left( \muN \dfrac{\ell^2}{s_N^2} - \pi^2 \right) \, F^{\nu}_{\lfloor \nu_N L_N \rfloor}(\mathrm{d}\ell)\right) \\
		  &= \,  \limsup\limits_{N \to \infty} \left(\nu_N \int\limits_{\pi 
s_N / \sqrt{\muN}}^{\infty} \dfrac{s_N}{\ell \gN} \left( \muN \dfrac{\ell^2}{s_N^2} - 
\pi^2 \right) \, F^{\nu}(\mathrm{d}\ell)\right) \le 3 \rho \ .
		  \end{split}
		  \end{align}

		  	Due to Lemma~\ref{Lemma hat lj equal lj} (with $\epsilon = 1/2$), there exists a set $\widetilde \Omega \subset \Omega$ with $\mathds{P}(\widetilde \Omega) = 1$ and the following property: For any $\omega \in \widetilde \Omega$ there exists an $\widehat N(\omega) \in \mathds N$ such that for any $N \ge \widehat N(\omega)$ we have $\tilde l_j = s_N^{-1} \hat l_j$ for any $j \in J_{\lceil \nu_N L_N/4 \rceil}$, $\tilde l_j \le s_N^{-1} \hat l_j$ for any $j \in J_{\lfloor \nu_N L_N \rfloor}$, and $\tilde l_j = 0$ for any $j \in \mathds Z \backslash ( J_{\lfloor \nu_N L_N \rfloor} \cup \{ 0 \})$. 
		  	Consequently we obtain ($M_0 = 0$), for any $N \ge \widehat N(\omega)$,
		  \begin{align}\label{hat lj subset lj}
		  \begin{split}
		  \sum\limits_{j \in J_{ \lceil \nu_N L_N / 4 \rceil}} \dfrac{s_N}{\hat l_j g_N} \left[ \mu_N \dfrac{\hat l_j^2}{s_N^2} - \pi^2 \right]_+ & \le \sum\limits_{j \in \mathds Z: M_j \ge 1} \dfrac{1}{\tilde l_j g_N} \left[ \mu_N \tilde l_j^2 - \pi^2 \right]_+ \\
		  & \le \sum\limits_{j \in J_{\lfloor \nu_N L_N \rfloor}} \dfrac{s_N}{\hat l_j g_N} \left[ \mu_N \dfrac{\hat l_j^2}{s_N^2} - \pi^2 \right]_+\ .
		  \end{split}
		  \end{align}

		  Lastly, due to \eqref{Besetzungszahlen}, \eqref{inequality for Ngmu(l)}, \eqref{hat lj subset lj}, \eqref{int fdl ge rho 1}, \eqref{int fdl ge rho 2}, and $M_0 = 0$, we have almost surely 
		  \begin{align*}
		  \liminf\limits_{N \to \infty} \dfrac{\nu_N}{ \nu_N L_N / 2 } \sum\limits_{j \in \mathds Z} M_j  & \ge \liminf\limits_{N \to \infty} \dfrac{\nu_N}{ \nu_N L_N / 2 } \sum\limits_{j \in \mathds Z: M_j \ge 1} N_{g_N, \mu_N}(\tilde l_j) \\
		  & \ge \dfrac{2}{3} \liminf\limits_{N \to \infty} \dfrac{\nu_N}{ \nu_N L_N / 2 } \sum\limits_{j \in \mathds Z: M_j \ge 1} \dfrac{1}{\tilde l_j g_N} \left[ \mu_N \tilde l_j^2 - \pi^2 \right]_+\\
		  & \ge \dfrac{2}{3} \liminf\limits_{N \to \infty} \dfrac{\nu_N}{ \nu_N L_N / 2 } \sum\limits_{j \in J_{\lceil \nu_N L_N / 4 \rceil}} \dfrac{s_N}{\hat l_j g_N} \left[ \mu_N \dfrac{\hat l_j^2}{s_N^2} - \pi^2 \right]_+ \\
		  & \geq \dfrac{2}{3} \liminf\limits_{N \to \infty} \nu_N \int\limits_{\pi s_N / \sqrt{\muN}}^{\infty} \dfrac{s_N}{l \gN} \left( \muN \dfrac{l^2}{s_N^2} - \pi^2 \right) \, F^{\nu}_{\lceil \nu_N L_N  / 4 \rceil}(\mathrm{d}\ell) \\
		  & \ge \left( \dfrac{2}{3} \right)^2 \rho\ .
		  \end{align*}
		  Similarly,
		  \begin{align*}
		  \limsup\limits_{N \to \infty} \dfrac{\nu_N}{2 \nu_N L_N} \sum\limits_{j \in \mathds Z} M_j  \le &\limsup\limits_{N \to \infty} \dfrac{\nu_N}{2 \nu_N L_N} \sum\limits_{j \in \mathds Z: M_j \ge 1} \dfrac{1}{\tilde l_j g_N} \left[ \mu_N \tilde l_j^2 - \pi^2 \right]_+ \\
		  & \quad + \, \lim\limits_{N \to \infty} \dfrac{\nu_N}{2 \nu_N L_N} \sum\limits_{j \in \mathds Z : M_j \ge 1} 1  \\
		   \le &\limsup\limits_{N \to \infty} \dfrac{\nu_N}{2 \nu_N L_N} \sum\limits_{j \in J_{\lfloor \nu_N L_N \rfloor}} \dfrac{s_N}{\hat l_j g_N} \left[ \mu_N \dfrac{\hat l_j^2}{s_N^2} - \pi^2 \right]_+ \\
		  & \quad + \, \lim\limits_{N \to \infty} \dfrac{\nu_N}{2 \nu_N L_N} \sum\limits_{j \in \mathds Z : M_j \ge 1} 1  \\
		   \le &\limsup\limits_{N \to \infty} \nu_N \int\limits_{\pi s_N / \sqrt{\muN}}^{\infty} \dfrac{s_N}{l \gN} \left( \muN \dfrac{l^2}{s_N^2} - \pi^2 \right) \, F^{\nu}_{\lfloor \nu_N L_N \rfloor}(\mathrm{d}\ell) \\
		  & \quad + \, \lim\limits_{N \to \infty} \nu_N \int\limits_{\pi 
s_N / \sqrt{\muN}}^{\infty} F^{\nu}_{\lfloor \nu_N L_N \rfloor}(\mathrm{d}\ell) \le 
3 \rho \ ,
		  \end{align*}
		  since, due to \eqref{convergence empirical distribution},~\eqref{definition of lambda}, and
our assumptions, 
		  \begin{align} \begin{split}
		  \lim\limits_{N \to \infty} \nu_N \int\limits_{\pi s_N / \sqrt{\muN}}^{\infty} F^{\nu}_{\lfloor \nu_N L_N \rfloor}(\mathrm{d}\ell) & = \lim\limits_{N \to \infty} \nu_N \big[ 1 - F^{\nu}_{\lfloor \nu_N L_N \rfloor} ( \pi s_N / \sqrt{\muN} ) \big] \\
		  & = \lim\limits_{N \to \infty} \nu_N \big[ 1 - F^{\nu} ( \pi s_N / \sqrt{\muN} ) \big] = \lim\limits_{N \to \infty}  \nu_N \lambdaN = 0 \ .
		  \end{split}
		  \end{align}
		  We therefore obtain almost surely
		  \begin{align}
		  & \liminf\limits_{N \to \infty} \dfrac{1}{N} \sum\limits_{j 
\in \mathds Z} M_j = \dfrac{1}{2 \rho} \left( \liminf\limits_{N \to \infty} 
\dfrac{\nu_N}{ \nu_N L_N / 2 } \sum\limits_{j \in \mathds Z} M_j \right) \ge 
\dfrac{2}{9} \ ,
		  \end{align}
		  and
		  \begin{align}
		  & \limsup\limits_{N \to \infty} \dfrac{1}{N} \sum\limits_{j 
\in \mathds Z} M_j \le 6 \ .
		  \end{align}
		\end{proof}

		 	\begin{theorem} \label{theorem lower bound largest interval}
		 		Let $1/N \ll \nu_N \lesssim 1$ be given. Then for any $0 < \epsilon <1 $ and for almost any $\omega \in \Omega$ there exists an $\widetilde N=\widetilde N(\epsilon,\omega) \in \mathds N$ such that for any $N \ge \widetilde N$ we have
		 		\begin{align}
		 		\llargesttilde > \nu_N^{-1} \Big\{ \ln( L_{\lfloor s_N N \rfloor}) - (1 + \epsilon) \ln[\ln (L_{\lfloor s_N N \rfloor})] \Big\} \ .
		 		\end{align}
		 	\end{theorem}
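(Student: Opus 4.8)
The plan is to reduce, by scaling, to the case of a constant impurity intensity, and then to settle that case by a Borel--Cantelli argument for the maximum of i.i.d.\ exponential interval lengths.

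First I would reduce to constant intensity. Since $\tilde l_j=|(s_N^{-1}x_j,s_N^{-1}x_{j+1})\cap\Lambda_N|$ we have $\llargesttilde=s_N^{-1}\max_{j}|(x_j,x_{j+1})\cap s_N\Lambda_N|$, and the window $s_N\Lambda_N$ (of length $s_N L_N=s_N N/\rho$) contains the centred interval $\Lambda_{M_N}$ with $M_N:=\lfloor s_N N\rfloor$ and $|\Lambda_{M_N}|=L_{M_N}=\lfloor s_N N\rfloor/\rho$. Hence, deterministically, $\llargesttilde\ge s_N^{-1}\ell^{1,>}_{M_N}$, where $\ell^{1,>}_M$ denotes the largest length of a subinterval of $\Lambda_M$ produced by the \emph{unscaled} intensity-$\nu$ Poisson point process. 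Because $1/N\ll\nu_N$ forces $s_NN=\nu_NN/\nu\to\infty$, we have $M_N\to\infty$, and $\nu_N^{-1}=s_N^{-1}\nu^{-1}$. So it suffices to prove: almost surely, for all but finitely many $M\in\mathds N$ one has $\ell^{1,>}_M>\nu^{-1}\{\ln L_M-(1+\epsilon)\ln\ln L_M\}$ with $L_M=M/\rho$; evaluating this along $M=M_N$ then yields the theorem, since $M_N\to\infty$ eventually exceeds the random index from which the bound holds.

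For the constant-intensity statement, fix $0<\epsilon'<\epsilon$ and recall that the lengths $\hat l_j=|(x_j,x_{j+1})|$, $j\ge 1$, are i.i.d.\ with density $\nu\mathrm{e}^{-\nu l}$. By the strong law of large numbers $x_m/m\to 1/\nu$ almost surely (alternatively apply Theorem~\ref{Theorem k LN to nu} to $(0,L_M/2]$), so almost surely, for all but finitely many $M$, at least $n_M:=\lceil\nu L_M/4\rceil$ of the intervals $(x_1,x_2),(x_2,x_3),\dots$ lie inside $\Lambda_M$, whence $\ell^{1,>}_M\ge\max_{1\le j\le n_M}\hat l_j$. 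With $t_M:=\nu^{-1}\{\ln n_M-(1+\epsilon')\ln\ln n_M\}$ one has $\mathrm{e}^{-\nu t_M}=(\ln n_M)^{1+\epsilon'}/n_M$, so
\begin{equation}
\mathds P\Big(\max_{1\le j\le n_M}\hat l_j\le t_M\Big)=\Big(1-\tfrac{(\ln n_M)^{1+\epsilon'}}{n_M}\Big)^{n_M}\le\exp\!\big(-(\ln n_M)^{1+\epsilon'}\big).
\end{equation}
Since $n_M$ is of order $M$, one has $(\ln n_M)^{1+\epsilon'}\ge 2\ln M$ for all large $M$, so the right-hand side is summable over $M$; Borel--Cantelli then gives $\ell^{1,>}_M>t_M$ for all but finitely many $M$ almost surely. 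Finally, $\ln n_M=\ln L_M+O(1)$ and $\ln\ln n_M=\ln\ln L_M+o(1)$, so $t_M\ge\nu^{-1}\{\ln L_M-(1+\epsilon)\ln\ln L_M\}$ for all large $M$ because $(\epsilon-\epsilon')\ln\ln L_M\to\infty$ swallows the error; this is exactly the constant-intensity claim.

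The only delicate point---and the reason for the detour through the constant-intensity model---is summability in the Borel--Cantelli step. If one worked directly with the $N$-dependent window, one would be summing probabilities of order $\exp(-(\ln(\nu_NN))^{1+\epsilon'})$ over $N$, which diverges as soon as $\nu_N$ decays (e.g.\ $\nu_N\sim\ln N/N$, where $\ln(\nu_NN)\sim\ln\ln N$). In the constant-intensity picture the number of subintervals grows \emph{linearly} in the particle number $M$, which restores summability; the deterministic inequality $\llargesttilde\ge s_N^{-1}\ell^{1,>}_{\lfloor s_N N\rfloor}$ together with $\lfloor s_N N\rfloor\to\infty$ then carries the result back.
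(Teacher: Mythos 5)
Your proposal is correct and follows essentially the same route as the paper's proof: reduce to constant intensity via the deterministic inequality $\llargesttilde\ge s_N^{-1}\ell^{>}_{\lfloor s_N N\rfloor}$, lower-bound the largest subinterval by the maximum of $O(L_M)$ i.i.d.\ exponentials (the paper obtains these via Lemma~\ref{Lemma hat lj equal lj}, you via the SLLN/Theorem~\ref{Theorem k LN to nu}, which is equivalent), and then apply Borel--Cantelli, using the extra $\epsilon$ in the exponent to make the probabilities summable. The only cosmetic differences are that the paper works directly with $\ln L_N$ (so no auxiliary $\epsilon'$ is needed) and presents the scaling step at the end rather than the beginning; your closing remark about why one must pass through the constant-intensity model to retain summability is exactly the point that makes the argument work.
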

		 	\begin{proof}
		 		Let $0 < \epsilon <1 $ be given. Then, for almost any $\omega \in \Omega$ there exists an $\widehat N_1=\widehat N_1(\epsilon,\omega) \in \mathds N$ such that
		 		\begin{align}
		 		\big\{ \hat l_j : j \in J_{\lceil (1-\epsilon) \nu L_N / 2 \rceil} \big\} \subsetneq \big\{ l_j :  j \in \mathds Z \backslash \{0\} \big\}\backslash \{0\}
		 		\end{align}
		 		for any $N \ge \widehat N_1$ by Lemma~\ref{Lemma hat lj equal lj}. Moreover, since $\{ \hat l_j : j \in \mathds Z\}$ are mutually independent exponentially distributed random variables one obtains
		 		\begin{align}
		 		& \mathds{P} \left( \max \left\{ \hat l_j  : j \in J_{\lceil (1-\epsilon) \nu L_N / 2 \rceil} \right\} \le \nu^{-1} \Big\{ \ln(L_N) - ( 1 + \epsilon) \ln [ \ln (L_N) ] \Big\} \right)\\
		 		\le \, & \left( 1 - \dfrac{[ \ln(L_N)]^{1 + \epsilon}}{L_N} \right)^{2 \lceil (1 - \epsilon) \nu L_N/ 2 \rceil} \ .
		 		\end{align}
		 		Moreover, since $\ln(1 - x) \le -x$ for any $0 < x < 1$ we have
		 		\begin{align}
		 		2 \left\lceil \dfrac{(1 - \epsilon) \nu L_N}{2} \right\rceil \cdot \ln \left[ \left( 1 - \dfrac{[\ln(L_N)]^{1 + \epsilon}}{L_N} \right) \right] & \le - (1 - \epsilon) \nu [ \ln(L_N)]^{1 + \epsilon} \le - 2 \ln(N) 
		 		\end{align}
		 		for all but finitely many $N \in \mathds N$ and therefore
		 		\begin{align}
		 		\sum\limits_{N = 1}^{\infty} \left( 1 - \dfrac{[ \ln(L_N)]^{1 + \epsilon}}{L_N} \right)^{2 \lceil (1 - \epsilon) \nu L_N \rceil / 2} < \infty \ .
		 		\end{align}
		 		Hence, with Borel--Cantelli's lemma there exists a set $\widetilde \Omega \subset \Omega$ with $\mathds{P}(\widetilde \Omega) = 1$ such that for any $\omega \in \widetilde \Omega$ there is an $\widehat N_2=\widehat N_2(\epsilon,\omega) \in \mathds{N}$ with
		 		\begin{align}
		 		\llargest \ge  \max \left\{ \hat l_j  : j \in J_{\lceil (1-\epsilon) \nu L_N / 2 \rceil} \right\}  > \nu^{-1} \Big\{ \ln( L_N) - (1 + \epsilon) \ln [ \ln(L_N)] \Big\}
		 		\end{align}
		 		for any $N \ge \widehat N_2$.
		 		
		 		Finally, for any $\omega \in \widetilde \Omega$ we define $\widetilde N(\epsilon,\omega)$ such that $\lfloor s_N N \rfloor \ge \widehat N_2(\epsilon,\omega)$ for any $N \ge \widetilde N(\epsilon,\omega)$. Hence, $\ell^{>}_{\lfloor s_N N \rfloor}$ denoting the length of the largest interval within the window $\Lambda_{\lfloor s_N N \rfloor}$,
		 		\begin{align}
		 		s_{N}^{-1}\ell^{>}_{\lfloor s_N N \rfloor}  > \nu_N^{-1} \Big\{ \ln( L_{\lfloor s_N N \rfloor}) - (1 + \epsilon) \ln \big[ \ln (L_{\lfloor s_N N \rfloor}) \big] \Big\} 
		 		\end{align}
		 		for any $N \ge \widetilde N(\epsilon,\omega)$. The statement then follows since $\llargesttilde \geq s_{N}^{-1}\ell^{>}_{\lfloor s_N N \rfloor}$.
		 	\end{proof}

		 	\begin{theorem} \label{theorem C3}
		 		 For any $0<\eta'<2$ 
there exists an $\widetilde N(\eta') \in \mathds N$ such that for any $N \ge 
\widetilde N(\eta')$ one has
		 		\begin{align}
		 		\mathds{P} \Big( \llargest > \nu^{-1} \left[ \ln(L_N) + \ln \left( C_1 \right) \right] \Big) > 1 - \dfrac{1}{2}\eta'
		 		\end{align}
		 		with $C_1 := - \nu / [4 \ln(\eta' / 2)] > 0$.
		 	\end{theorem}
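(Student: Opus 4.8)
The plan is to mimic the proof of Theorem~\ref{theorem lower bound largest interval}, but to replace the Borel--Cantelli step by a single one-shot estimate. The underlying fact is that $\llargest$ dominates the maximum of roughly $\nu L_N/2$ of the spacings $\hat l_j$ with $j\neq 0$, and these are i.i.d.\ $\mathrm{Exp}(\nu)$ random variables, whose maximum has an elementary upper tail. Fix $m:=\lceil\nu L_N/4\rceil$ (so that $|J_m|=2m\ge\nu L_N/2$; note $0\notin J_m$, which is essential since the origin-straddling spacing $\hat l_0$ is \emph{not} exponentially distributed) and set $t_N:=\nu^{-1}\big[\ln(L_N)+\ln(C_1)\big]$, which is positive once $C_1 L_N>1$.

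First I would isolate the event on which sufficiently many spacings are realized as lengths of intervals $W_j$ contained in $\Lambda_N$. Lemma~\ref{Lemma hat lj equal lj} applies with $s_N\equiv 1$, $\nu_N\equiv\nu$ (since then $\ln(N)/N\ll\nu\lesssim 1$ and $\tilde l_j=l_j$), and its almost sure statement --- for a.e.\ $\omega$ there is $\widetilde N(\omega)$ with $l_j=\hat l_j$ for all $j\in J_m$ whenever $N\ge\widetilde N(\omega)$ --- converts, by a standard argument, into $\mathds{P}(\Gamma_N)\to 1$ for $\Gamma_N:=\{\,l_j=\hat l_j\ \text{for all }j\in J_m\,\}$, and on $\Gamma_N$ one has $\llargest\ge\max_{j\in J_m}\hat l_j$. (A self-contained variant: take $\Gamma_N:=\{x_{m+1}\le L_N/2\}\cap\{x_{-m}\ge -L_N/2\}$; on $\Gamma_N$ every interval $W_j$ with $j\in J_m$ lies in $\Lambda_N$, so $|W_j|=\hat l_j$ and $\llargest\ge\max_{j\in J_m}\hat l_j$, while $x_{m+1}$ and $-x_{-m}$ are sums of $m+1$ i.i.d.\ $\mathrm{Exp}(\nu)$ variables with mean $\le L_N/4+O(1)$ and variance $O(L_N)$, so Chebyshev's inequality and $L_N=N/\rho\to\infty$ give $\mathds{P}(\Gamma_N^c)\to 0$.)

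Finally I would estimate the maximum of the exponentials. Since $\mathds{P}(\hat l_j>t_N)=\mathrm{e}^{-\nu t_N}=(C_1 L_N)^{-1}$, independence and $1-x\le\mathrm{e}^{-x}$ give
\[
\mathds{P}\Big(\max_{j\in J_m}\hat l_j\le t_N\Big)=\Big(1-\tfrac{1}{C_1 L_N}\Big)^{2m}\le\mathrm{e}^{-2m/(C_1 L_N)}\le\mathrm{e}^{-\nu/(2C_1)}=\Big(\tfrac{\eta'}{2}\Big)^{2},
\]
the last equality being precisely the definition $C_1=-\nu/[4\ln(\eta'/2)]$. Hence $\mathds{P}(\llargest\le t_N)\le\mathds{P}(\Gamma_N^c)+(\eta'/2)^2$, and since $0<\eta'<2$ we have $(\eta'/2)^2<\eta'/2$ with slack $(\eta'/2)(1-\eta'/2)>0$; as $\mathds{P}(\Gamma_N^c)\to 0$ there is $\widetilde N(\eta')$ with $\mathds{P}(\llargest>t_N)>1-\eta'/2$ for all $N\ge\widetilde N(\eta')$. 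There is no real obstacle beyond this bookkeeping: the only delicate point --- and the reason for the factor $4$ in $C_1$ --- is that the bare extreme-value bound must land strictly below $\eta'/2$ with enough room to absorb the vanishing term $\mathds{P}(\Gamma_N^c)$, and with $m\asymp\nu L_N/4$ it equals $(\eta'/2)^2$, which suffices for every admissible $\eta'$.
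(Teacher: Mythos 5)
Your proof is correct and follows essentially the same route as the paper's: both rely on Lemma~\ref{Lemma hat lj equal lj} (with $\epsilon=1/2$) to reduce $\llargest$ to $\max_{j\in J_{\lceil\nu L_N/4\rceil}}\hat l_j$, and both finish with the elementary extreme-value bound for i.i.d.\ exponentials. The only difference is in the bookkeeping: the paper does not use the event $\Gamma_N$ directly, but instead observes that $[\max_{j\in J_m}\hat l_j-\llargest]_+\to 0$ a.s.\ (hence in probability), introduces a slack $\eta=\nu^{-1}\ln(4/3)$, bounds $\mathds{P}\big(\max\le\nu^{-1}[\ln L_N+\ln(4C_1/3)]\big)\le(\eta'/2)^{3/2}$ at the shifted threshold, and combines the two events via $\mathds{P}(A\cap B)\ge\mathds{P}(A)+\mathds{P}(B)-1$. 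Your variant --- converting the a.s.\ statement into $\mathds{P}(\Gamma_N)\to 1$ and then applying a union bound at the target threshold $t_N$, yielding the slightly tighter $(\eta'/2)^2$ --- is a cleaner piece of bookkeeping but not a different argument. Your self-contained Chebyshev verification that $\mathds{P}(\Gamma_N)\to 1$ is a nice, elementary alternative to invoking Lemma~\ref{Lemma hat lj equal lj}, and it is correct (both $x_{m+1}$ and $-x_{-m}$ are sums of $m+1$ i.i.d.\ $\mathrm{Exp}(\nu)$ variables with mean $\approx L_N/4$, variance $O(L_N/\nu)$), though it reproves a fact the paper already has.
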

		 	\begin{proof}
		 	According to 
Lemma~\ref{Lemma hat lj equal lj} (with $\epsilon = 1/2$), for almost any 
$\omega \in \Omega$ there exists an $\widehat N=\widehat N(\omega) \in \mathds N$ such that for any 
$N \ge \widehat N$ one has
		 		\begin{align*}
		 		\big\{ \hat l_j : j \in J_{\lceil \nu L_N/4 \rceil} \big\} \subsetneq \big\{ l_j : j \in \mathds Z \backslash \{0\} \big\} \backslash \{0\}\ .
		 		\end{align*}
		 		Hence, $[\max \{ \hat l_j : j \in J_{\lceil \nu L_N/4 \rceil} \} - \llargest]_+$ converges to zero almost surely and consequently, for any $\eta > 0$,
		 		\begin{align*}
		 		 \lim\limits_{N \to \infty} &\mathds{P}\left( \llargest \le \max \big\{ \hat l_j : j \in J_{\lceil \nu L_N/4 \rceil} \big\} - \eta  \right) \\
		 		\le \, & \lim\limits_{N \to \infty} \mathds{P}\left( \left[ \max \big\{ \hat l_j : j \in J_{\lceil \nu L_N/4 \rceil} \big\} - \llargest \right]_+ \ge \eta  \right) =0\ .
		 		\end{align*}
		 		Furthermore, since $\ln(1-x) = -x$ for $0<x<1$,
		 		\begin{align*}
		 		\mathds{P} &\Big( \max \big\{ \hat l_j : j \in 
J_{\lceil \nu L_N/4 \rceil} \big\} \le \nu^{-1} \big[ \ln(L_N) + \ln( 4 C_1 / 
3) \big] \Big)  \\
		 		&\le \,  \left( 1 - \dfrac{1}{(4/3) L_N C_1} 
\right)^{2 \lceil \nu L_N / 4\rceil} \le \mathrm{e}^{- 3 \nu/(8 C_1)}
		 		\end{align*}
		 		for all but finitely many $N \in \mathds N$. Hence, altogether one obtains, with $\eta = - \nu^{-1} \ln(3/4)$,
		 		\begin{align*}
		 		 \mathds{P} &\Big( \llargest > \nu^{-1} \left[ \ln(L_N) + \ln \left( C_1 \right) \right] \Big)\\
		 		&\ge \,  \mathds{P} \Big( \llargest > \max \big\{ \hat l_j : j \in J_{\lceil \nu L_N/4 \rceil} \big\} - \eta \Big) \, \\
		 		& \quad + \, \mathds{P} \Big( \max \big\{ \hat 
l_j : j \in J_{\lceil \nu L_N/4 \rceil} \big\} > \nu^{-1} \left[ \ln(L_N) + \ln 
\left( 4C_1/3 \right) \right] \Big)  - 1 \\
		 		&\ge \,  1 - \dfrac{1}{2} \eta'
		 		\end{align*}
		 		for all but finitely many $N \in \mathds N$.
		 	\end{proof}

		 	Recall that $\llargest = l_N^{>,1}$ is the largest and $l^{>,k}_N$, $k \in \mathds N$, is the $k$th largest length of the set $\{l_j= |(x_j(\omega), x_{j+1}(\omega)) \cap \Lambda_N| : j \in \mathds Z\}$. In the same way we define $\tilde{l}^{>,k}_N$ for the scaled lengths.
		\begin{theorem} \label{Theorem E-Gap}
		 For any $0<\eta'<2$ and any $C_3
> 2$ there exists an $\widetilde N=\widetilde N(\eta',C_3) \in \mathds N$ such that for any $N \ge 
\widetilde N$
			\begin{align}
			\mathds{P} \left( \ell^{>}_{N} > \nu^{-1} \ln ( L_N C_1) \, , \, l^{>,\left\lceil 2\nu C_3 / (C_1 \eta') \right\rceil + 1}_N \le \nu^{-1} \left[ \ln(L_N C_1) - \ln(C_3/2) \right] \right) > 1 - \eta'
			\end{align}
			with $C_1= - \nu / [4 \ln(\eta' / 2)]$.
		\end{theorem}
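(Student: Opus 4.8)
The plan is to establish the two events $E_1 := \{\ell^{>}_{N} > \nu^{-1} \ln(L_N C_1)\}$ and $E_2 := \{ l^{>,K+1}_N \le \nu^{-1}[\ln(L_N C_1) - \ln(C_3/2)]\}$, with $K := \lceil 2\nu C_3/(C_1 \eta')\rceil$, each with probability close to $1$, and then to combine them via the elementary inequality $\mathds{P}(E_1 \cap E_2) \ge 1 - \mathds{P}(E_1^c) - \mathds{P}(E_2^c)$. For $E_1$ nothing new is needed: since $\ln(L_N C_1) = \ln(L_N) + \ln(C_1)$ and the constant occurring in Theorem~\ref{theorem C3} is exactly $C_1 = -\nu/[4\ln(\eta'/2)]$, applying that theorem with the same $\eta'$ gives $\mathds{P}(E_1) > 1 - \tfrac12\eta'$ for all $N$ beyond some $\widetilde N_1(\eta')$.

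The substance lies in bounding $\mathds{P}(E_2^c)$, i.e.\ in showing that not too many subintervals can be long. I would set $t_N := \nu^{-1}[\ln(L_N C_1) - \ln(C_3/2)] = \nu^{-1}\ln\bigl(2 L_N C_1/C_3\bigr)$, so that $\mathrm{e}^{-\nu t_N} = C_3/(2 L_N C_1)$, noting $t_N > 0$ for all large $N$ because $L_N \to \infty$. The relevant object is
\[
R_N := \#\bigl\{ j \in \mathds{Z} : x_j \in \Lambda_N,\ x_{j+1} - x_j > t_N \bigr\},
\]
the number of Poisson points in $\Lambda_N$ that begin a gap longer than $t_N$. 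Since $x_{j+1} - x_j > t_N$ is precisely the absence of a Poisson point in $(x_j, x_j + t_N]$, the Mecke/Campbell formula for the homogeneous Poisson process of intensity $\nu$ yields the exact identity $\mathds{E}[R_N] = \nu |\Lambda_N|\, \mathrm{e}^{-\nu t_N} = \nu L_N \cdot C_3/(2 L_N C_1) = \nu C_3/(2 C_1)$. I would then use the deterministic bound $\#\{ j : l_j > t_N \} \le R_N + 1$: if $l_j > t_N$ then the gap $(x_j, x_{j+1})$ has length $\ge l_j > t_N$ and meets $\Lambda_N$, so either $x_j \in \Lambda_N$ (so that $j$ is counted by $R_N$) or $x_j \le -L_N/2$, and exactly one gap straddles the point $-L_N/2$. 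Since $\{ l^{>,K+1}_N > t_N \}$ forces at least $K+1$ subintervals of length $> t_N$, this gives $\mathds{P}(E_2^c) \le \mathds{P}(R_N \ge K) \le \mathds{E}[R_N]/K = \tfrac{\nu C_3/(2C_1)}{K} \le \tfrac14 \eta'$, the last step using $K \ge 2\nu C_3/(C_1\eta')$ and $C_1 > 0$.

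Putting the two estimates together gives $\mathds{P}(E_1 \cap E_2) \ge 1 - \tfrac12\eta' - \tfrac14\eta' > 1 - \eta'$ for every $N \ge \widetilde N(\eta', C_3)$, with $\widetilde N$ chosen large enough that Theorem~\ref{theorem C3} applies, that $t_N > 0$, and that $2 L_N C_1 > C_3$; this is exactly the claim. I expect the only delicate point to be the bookkeeping near the endpoints of $\Lambda_N$: the one extra unit coming from the boundary gap must be subtracted off \emph{before} invoking Markov's inequality rather than absorbed into $R_N$, and the degenerate cases (fewer than $K+1$ nonempty subintervals, or $t_N \le 0$) should be swallowed by the choice of $\widetilde N$. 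Should one wish to avoid the Mecke formula and argue in the style used elsewhere in this appendix, one can instead apply Lemma~\ref{Lemma hat lj equal lj} to replace the $l_j$ by the i.i.d.\ exponential gap lengths $\hat l_j$ over an index block of size $(1+\epsilon)\nu L_N$ --- splitting $\Lambda_N$ at the origin so as not to incur a spurious factor $2$ --- at the cost of a harmless $(1+\epsilon)$ factor and an $o(1)$ contribution from the almost-sure event of that lemma, both still compatible with the requirement $\mathds{P}(E_2^c) < \tfrac12\eta'$ once $\epsilon < \tfrac12$.
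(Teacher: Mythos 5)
Your proof is correct, and it takes a genuinely different route to bound $\mathds{P}(E_2^c)$ than the paper does. Both proofs handle the first event $E_1$ identically (Theorem~\ref{theorem C3} gives $\mathds{P}(E_1) > 1 - \eta'/2$) and both ultimately rest on Markov's inequality applied to the expected number of ``long'' subintervals, but they reach that expectation differently. You compute it directly from the Poisson process: via the Mecke formula, $\mathds{E}[R_N] = \nu L_N\, \mathrm{e}^{-\nu t_N} = \nu C_3/(2C_1)$, a clean and elementary closed-form identity, plus a careful $+1$ boundary correction for the gap straddling $-L_N/2$. The paper instead reformulates the same count spectrally: it sets $\widetilde E := \pi^2\nu^2[\ln(C_1 L_N) - \ln(C_3/2)]^{-2}$ so that $l_j > t_N$ forces at least one eigenvalue of $H(1,l_j,0)$ below $\widetilde E$, invokes the Pastur--Figotin bound $\mathds{E}[\mathcal N_N^{\mathrm{I},\omega}(E)] \le \mathcal N_\infty^{\mathrm{I}}(E)$ on the finite-volume integrated density of states together with the explicit limiting IDS $\mathcal N_\infty^{\mathrm{I}}(E) = \nu\,\mathrm{e}^{-\nu\pi E^{-1/2}}/(1 - \mathrm{e}^{-\nu\pi E^{-1/2}})$ of the non-interacting Luttinger--Sy model, and then applies Markov. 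Your route is self-contained and avoids the imported spectral machinery (as well as the factor-of-two slack the paper needs to absorb the denominator $1 - \mathrm{e}^{-\nu\pi\widetilde E^{-1/2}}$); the paper's route, while technically heavier, has the expository advantage of tying the estimate to the IDS that governs the model throughout, which is why they cite it rather than compute it. Both versions of the Markov step give the same $\eta'/2$-type bound (you obtain $\eta'/4$ because your expectation is exactly $\nu C_3/(2C_1)$ rather than the overestimate $\nu C_3/C_1$), and your bookkeeping near $\pm L_N/2$ is correct: only the left endpoint can contribute an uncounted nonempty $W_j$, and the one extra unit is indeed best subtracted before Markov.
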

		\begin{proof}
			According to Theorem~\ref{theorem C3} there exists a number $\widetilde N(\eta') \in \mathds N$ such that 
for any $N \ge \widetilde N(\eta')$ one has, $\Omega_1:=\{\omega \in \Omega:\ \ell^{>}_{N} > \nu^{-1} \ln(C_1 L_N) \}$ ,
			\begin{align}
			\mathds{P} (\Omega_1) > 1 - \dfrac{1}{2} \eta'\ .
			\end{align}
			Moreover,
			\begin{align}
			\E \left[ \mathcal N_N^{\text{I}, \omega}(E) \right] \le \mathcal N_{\infty}^{\text{I}}(E)
			\end{align}
			for any $E \ge 0$ and $N \in \mathds N$ \cite[Theorem 5.25]{pastur1992spectra}.
			Here, $\mathcal N_N^{\text{I}, \omega}(E) = L_N^{-1} \left|\left\{ i :  E_N^{i,\omega} \le E \right\}\right|$ is the finite-volume integrated density of states, i.e., the number of eigenvalues of the non-interacting Luttinger--Sy model that are smaller than or equal to $E$ divided by the volume of the system, and
			\begin{align}
			\mathcal N_{\infty}^{\text{I}}(E) = \nu \dfrac{\mathrm{e}^{- \nu \pi E^{-1/2}}}{1 - \mathrm{e}^{- \nu \pi E^{-1/2}}}
			\end{align}
			is the limiting integrated density of states of the non-interacting Luttinger--Sy model, see e.g. \cite[Proposition III.2]{zagrebnov2007bose}. 
			
			Hence, $\widetilde E := \pi^2 \nu^{2} [ \ln(C_1 L_N) - \ln(C_3/2) ]^{-2}$, 
			\begin{align*}
			k \, \mathds{P}\left( \left|\left\{ i :  E_N^{i,\omega} \le \widetilde E \right\}\right| \ge k \right) & \le \sum\limits_{j \geq 1} j \, \mathds{P} \left( \left|\left\{ i :  E_N^{i,\omega} \le \widetilde E \right\}\right| = j \right) \\
			& = \E \left[ \left|\left\{ i :  E_N^{i,\omega} \le \widetilde E \right\}\right|\right] \le \nu\dfrac{C_3}{C_1}
			\end{align*}
			for any $k \in \mathds N$ and all but finitely many $N \in \mathds N$. Setting 

			\begin{align*}
			\Omega_2:=\left\{\omega \in \Omega:  \left|\left\{ i :  E_N^{i,\omega} \le \widetilde E \right\}\right| < \left\lceil \dfrac{2\nu C_3}{C_1 \eta'} \right\rceil\right\}
			\end{align*}
one obtains
			\begin{align}
			\mathds{P}(\Omega_2) \ge 1 - \dfrac{1}{2} \eta'\ .
			\end{align}
			for all but finitely many $N \in \mathds N$. 
			
			Finally,  $\mathds{P}(\Omega_1 \cap \Omega_2) \ge 1 - \eta'$ for all but finitely many $N \in \mathds N$, and for any $\omega \in \Omega_ 1 \cap \Omega_2$ one has
			\begin{align*}
			\ell^{>}_{N}> \nu^{-1} \ln(C_1L_N) 
			\end{align*}
			and
			\begin{align*}
			l_N^{>,\left\lceil 2 \nu C_3/ (C_1 \eta')\right\rceil + 1} \le \nu^{-1} \left[ \ln(C_1L_N) - \ln\left(\dfrac{C_3}{2} \right) \right] \ .
			\end{align*}
			Consequently,  $\llargest - l_N^{>, \left\lceil 2\nu C_3 / (C_1 \eta') \right\rceil + 1} \ge \nu^{-1} \ln(C_3/2)$ for any $\omega \in \Omega_ 1 \cap \Omega_2$. 
		\end{proof}
		\begin{cor}\label{Corollary E-Gap nuN}
			Let $1/N \ll \nu_N \lesssim 1$ be given. Then for any $0 < \eta ' < 2$ and any $C_3 > 2\mathrm{e}^{\nu/\rho}$ there is an $\widetilde N=\widetilde N(\eta^{\prime},C_3)
\in \mathds N$ such that for any $N \ge \widetilde N$ we have
			\begin{align}\label{C.39}
			\mathds{P} \left(\llargesttilde > \nu_N^{-1}\ln(C_1L_{\lfloor s_N N \rfloor})\, , \, \llargesttilde - \tilde l_N^{>,\lceil 2\nu C_3 / (C_1 \eta') \rceil + 1} > \nu_N^{-1} \ln(C_3/(2\mathrm{e}^{\nu/\rho})) \right) > 1 - \eta'
			\end{align}
			with $C_1 := - \nu / [4 \ln(\eta' / 2)]$.
		\end{cor}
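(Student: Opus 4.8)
The plan is to obtain the statement from Theorem~\ref{Theorem E-Gap} by exactly the rescaling used at the end of the proof of Theorem~\ref{theorem lower bound largest interval}. Throughout put $M := \lfloor s_N N\rfloor$ and $K := \lceil 2\nu C_3/(C_1\eta')\rceil$; since $\nu_N = \nu s_N \gg 1/N$ we have $s_N N \to \infty$, hence $M\to\infty$, and for every $N$ large enough that $M \ge \widetilde N(\eta',C_3)$ Theorem~\ref{Theorem E-Gap} applies to the fixed-intensity Poisson process on the window $\Lambda_M$ (note $C_3 > 2\mathrm e^{\nu/\rho} > 2$, so the theorem is applicable). It provides an event of probability $> 1 - \eta'$ on which
\[
\ell^{>}_M > \nu^{-1}\ln(C_1 L_M)
\qquad\text{and}\qquad
l_M^{>,\,K+1} \le \nu^{-1}\bigl[\ln(C_1 L_M) - \ln(C_3/2)\bigr].
\]

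Next I would rescale onto the window $s_N\Lambda_N$. As in the proof of Theorem~\ref{theorem lower bound largest interval}, $s_N^{-1}\nu^{-1} = \nu_N^{-1}$, $\llargesttilde = s_N^{-1}\ell^{1,>}_{s_N N}$, and more generally $\tilde l_N^{>,m} = s_N^{-1}\ell^{m,>}_{s_N N}$, where $\ell^{m,>}_{s_N N}$ is the $m$-th largest length of the partition of $s_N\Lambda_N$. Because $\Lambda_M \subseteq s_N\Lambda_N$, each interval of the partition of $\Lambda_M$ is contained in the corresponding interval of the partition of $s_N\Lambda_N$, so $\ell^{1,>}_{s_N N} \ge \ell^{>}_M$; hence on the event above $\llargesttilde \ge s_N^{-1}\ell^{>}_M > \nu_N^{-1}\ln(C_1 L_M) = \nu_N^{-1}\ln(C_1 L_{\lfloor s_N N\rfloor})$, which is the first event in~\eqref{C.39}.

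For the second event I would track how the ordered lengths change under this enlargement. Passing from $\Lambda_M$ to $s_N\Lambda_N$ adjoins two boundary pieces of total length $|s_N\Lambda_N| - |\Lambda_M| < 1/\rho$; this increases at most two of the interval lengths (each by less than $1/\rho$) and inserts finitely many new lengths, each less than $1/\rho$. Counting how many of the resulting lengths can exceed $l_M^{>,K+1} + 1/\rho$ gives $\ell^{K+1,>}_{s_N N} \le l_M^{>,K+1} + 1/\rho$. Since $s_N^{-1}/\rho = \nu_N^{-1}(\nu/\rho) = \nu_N^{-1}\ln(\mathrm e^{\nu/\rho})$, this yields on the same event
\[
\tilde l_N^{>,\,K+1} = s_N^{-1}\ell^{K+1,>}_{s_N N}
\le \nu_N^{-1}\bigl[\ln(C_1 L_M) - \ln(C_3/2) + \ln(\mathrm e^{\nu/\rho})\bigr]
= \nu_N^{-1}\bigl[\ln(C_1 L_M) - \ln\bigl(C_3/(2\mathrm e^{\nu/\rho})\bigr)\bigr],
\]
and, combined with the first event, $\llargesttilde - \tilde l_N^{>,K+1} > \nu_N^{-1}\ln\bigl(C_3/(2\mathrm e^{\nu/\rho})\bigr)$. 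Since both events occur on one set of probability $> 1-\eta'$ and $K = \lceil 2\nu C_3/(C_1\eta')\rceil$, this is exactly~\eqref{C.39}.

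I do not expect a deep obstacle. The only point requiring care is the elementary but slightly fiddly bookkeeping in the last paragraph, tracking how the ordered interval lengths of the partition of $\Lambda_M$ change when the window is enlarged to $s_N\Lambda_N$; the factor $\mathrm e^{\nu/\rho}$ in the statement is precisely the cushion absorbing this $O(1/\rho)$ boundary correction once one multiplies by $\nu_N^{-1}$, and everything else is a verbatim rescaling of Theorem~\ref{Theorem E-Gap}.
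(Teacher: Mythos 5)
Your proof is correct and follows essentially the same route as the paper's: apply Theorem~\ref{Theorem E-Gap} with index $\lfloor s_N N\rfloor$, rescale by $s_N^{-1}$, and absorb the $O(1/\rho)$ boundary discrepancy between $\Lambda_{\lfloor s_N N\rfloor}$ and $s_N\Lambda_N$ into the factor $\mathrm e^{\nu/\rho}$. The only cosmetic difference is that the paper bounds $\tilde l_N^{>,K+1}$ by passing through the intermediate window $\Lambda_{\lfloor s_N N\rfloor+1}$, whereas you compare $\Lambda_{\lfloor s_N N\rfloor}$ and $s_N\Lambda_N$ directly, which amounts to the same $1/\rho$ correction.
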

		\begin{proof}
			%

			%
			%

			With Theorem~\ref{Theorem E-Gap} there exists an $\widehat N=\widehat N(\eta^{\prime}) \in \mathds{N}$ such that for any $N \ge \widehat{N}$ one has
			%
	
			%
			\begin{equation}\label{GLEICHUNGXXX}\begin{split}
			 &\mathds{P} \left(s_N^{-1}\ell^{>}_{\lfloor s_N N \rfloor} > \nu_N^{-1} \ln(C_1L_{\lfloor s_N N \rfloor})\, , \, s_{N}^{-1}l^{>,\left\lceil 2\nu C_3 / (C_1 \eta') \right\rceil + 1}_{\lfloor s_N N \rfloor} < \nu^{-1}_{N} \left[ \ln(C_1L_{\lfloor s_N N \rfloor}) - \ln(C_3/2) \right] \right) \\ & \qquad > 1 - \eta' \ .
			\end{split}
			\end{equation}
			As in the proof of Theorem~\ref{theorem lower bound largest interval} we use $\llargesttilde \ge s_N^{-1} \ell_{\lfloor s_N N \rfloor}^>$ as well as
			%
			%
		$\tilde l_{N}^{>, \lceil 2\nu C_3 / (C_1 \eta')\rceil+1} \leq s_N^{-1} l_{\lfloor s_N N \rfloor +1}^{>, \lceil 2\nu C_3 / (C_1 \eta')\rceil+1} $ to obtain
		\begin{align*}
		\llargesttilde - \tilde l_N^{>,\lceil 2\nu C_3 / (C_1 \eta') \rceil + 1} &\geq s_{N}^{-1}\left(l^{>}_{\lfloor s_N N \rfloor}-l^{>,\left\lceil 2\nu C_3 / (C_1 \eta') \right\rceil + 1}_{\lfloor s_N N \rfloor+1}\right) \, \\
		&\geq s_{N}^{-1}\left(l^{>}_{\lfloor s_N N \rfloor}-\left(l^{>,\left\lceil 2\nu C_3 / (C_1 \eta') \right\rceil + 1}_{\lfloor s_N N \rfloor}+\rho^{-1}\right)\right)\ .
		\end{align*}
		Now, using the two inequalities appearing in~\eqref{GLEICHUNGXXX} we conclude
		\begin{align*}
		\llargesttilde - \tilde l_N^{>,\lceil 2\nu C_3 / (C_1 \eta') \rceil + 1} &\geq \nu_{N}^{-1}\ln\left(\frac{C_3}{2\mathrm{e}^{\nu/\rho}}\right)
		\end{align*}
		from which the statement readily follows.
		\end{proof}
By slightly changing the proof we could also allow for $C_3>6$ instead of $C_3 > 2\mathrm{e}^{\nu/\rho}$. This would replace $\ln(C_3/2\mathrm{e}^{\nu/\rho})$ in \eqref{C.39} by $\ln(C_3/6)$.

	{\small
		\bibliographystyle{amsalpha}
		\bibliography{Literature}}

\end{document}